\theoremstyle{plain}
\newtheorem{thm}{Theorem}[section]
\newtheorem{lem}[thm]{Lemma}
\theoremstyle{definition}
\newtheorem{rem}{Remark}
\newcommand{\ud}{\textnormal{d}}
\newcommand{\tr}{^{\top}}
\newcommand{\pr}{^{\prime}}
\newcommand{\zero}{\mathbf{0}}
\newcommand{\R}{\mathbb{R}}
\newcommand{\calN}{\mathcal{N}}
\newcommand{\pconv}{\xrightarrow{p}}
\newcommand{\dconvs}{\xrightarrow{d^\ast}}
\DeclareMathOperator*{\argmin}{argmin}
\newcommand{\prob}[1]{\textnormal{P} \left\{ #1 \right\}}
\newcommand{\probs}[1]{\textnormal{P}^\ast \left\{ #1 \right\}}
\newcommand{\ex}[1]{\textnormal{E} \left[ #1 \right]}
\newcommand{\exs}[1]{\textnormal{E}^\ast \left[ #1 \right]}
\newcommand{\var}[1]{\textnormal{Var} \left( #1 \right)}
\newcommand{\Op}[1]{O_p \left( #1 \right)}
\newcommand{\op}[1]{o_p \left( #1 \right)}
\newcommand{\Ops}[1]{O_{p^\ast} \left( #1 \right)}
\newcommand{\ops}[1]{o_{p^\ast} \left( #1 \right)}
\newcommand{\Var}{\operatorname{Var}}
\newcommand{\RR}{\mathbb{R}}
\newtheorem{theorem}{Theorem}
\newtheorem{lemma}{Lemma}
\newcommand{\inpr}{\overset{p}{\longrightarrow}}
\newcommand{\indist}{\overset{d}{\longrightarrow}}
\newcommand{\MM}{\mathbb{M}}
\newcommand{\VV}{\mathbb{V}}
\newcommand{\blind}{0}
\begin{document}

\if0\blind
{
  \title{\bf Partitioned Wild Bootstrap for Panel Data Quantile Regression\footnote{We are grateful for comments and suggestions from Ivan Canay, Federico Bugni, Patrik Guggenberger, James MacKinnon, and seminar participants at Northwestern University, University of Kentucky, the 2024 Midwest Econometrics Group conference,  SEA 94th Annual Meeting, New York Camp Econometrics XIX, and 30th International Panel Data Conference.}}

  \author{Antonio F. Galvao\footnote{Department of Economics, Michigan State University. agalvao@msu.edu}
    \and Carlos Lamarche\footnote{Department of Economics, University of Kentucky. clamarche@uky.edu}
    \and
    Thomas Parker\footnote{Department of Economics, University of Waterloo. tmparker@uwaterloo.ca}
  }
  \maketitle
} \fi

\if1\blind
{
  \bigskip
  \bigskip
  \bigskip
\bigskip
  \begin{center}
    {\LARGE\bf {\bf Partitioned Wild Bootstrap for Panel Data Quantile Regression}}
\end{center}
  \medskip
} \fi

\begin{abstract}
\begin{singlespace}
\noindent Practical inference procedures for quantile regression models of panel data have been a pervasive concern in empirical work, and can be especially challenging when the panel is observed over many time periods and temporal dependence needs to be taken into account. In this paper, we propose a new bootstrap method that applies random weighting to a partition of the data --- partition-invariant weights are used in the bootstrap data generating process --- to conduct statistical inference for conditional quantiles in panel data that have significant time-series dependence. We demonstrate that the procedure is asymptotically valid for approximating the distribution of the fixed effects quantile regression estimator. The bootstrap procedure offers a viable alternative to existing resampling methods. Simulation studies show numerical evidence that the novel approach has accurate small sample behavior, and an empirical application illustrates its use. 
\end{singlespace}

\vspace{4mm}

\noindent {\it Keywords:}  Bootstrap, panel data, quantile regression, serial dependence.
\vfill

\end{abstract}

\doublespacing

\newpage

\section{Introduction}

Since the seminal work of \citet{KoenkerBassett78}, quantile regression (QR) models have provided a valuable tool as a way of capturing heterogeneous effects that covariates may exert on the outcome of interest, exposing a wide variety of forms of conditional heterogeneity under weak distributional assumptions.
QR methods have been widely employed to estimate causal effects as well as structural economic models. 

Recently, there has been growing empirical and theoretical literatures on QR models for panel data. \citet{Koenker04} introduced a general approach for estimation of fixed effects quantile regression (FE-QR) models which treats the individual effects as parameters to be estimated. The FE-QR estimator is designed to control for individual-specific heterogeneity while exploring heterogeneous covariate effects, and therefore, it provides a flexible method for practical analysis of panel data. Consistency and weak convergence of the FE-QR estimator have been established in \citet{Kato2012} and \citet{GalvaoGuVolgushev20} under conditions that are similar to the ones considered for nonlinear panel data estimators. Many papers have suggested alternative methods for panel data QR and derived the corresponding statistical properties, including, among others, \citet{Lamarche10}, \citet{Canay11}, \cite{KimYang11}, \citet{GalvaoLamarcheLima13}, \citet{ArellanoBonhomme15}, \citet{GrahamHahnPoirierPowell2015}, \citet{GalvaoKato16}, \citet{ChetverikovLarsenPalmer16}, \citet{GuVolgushev19}, \citet{MachadoSantosSilva19}, and \cite{CHERNOZHUKOV2024}.
A number of these studies rely on large-$T$ panel data, where the number of time periods ($T$) grows faster than the number of cross-sectional units ($N$).

Practical inference procedures for FE-QR have been a pervasive concern in empirical work, especially when the time-series is large and temporal dependence has to be accommodated. Inference methods, which have been developed mostly for independent data, have mainly used asymptotic approximations for the construction of test statistics and the variances of the estimators. Nevertheless, these variances depend on the conditional density of the innovations, which can be difficult to work with in practice and have bandwidth nuisance parameters. Hence, the use of the bootstrap as an alternative to such asymptotic approximations has been considered, but its properties have not received the same amount of attention as in the QR for cross-section or time-series literatures. The sole exceptions are \cite{LamarcheParker23} and \cite{GalvaoParkerXiao24}. \cite{LamarcheParker23} proposed a copula-based approach for a penalized estimator under stationary and $\beta$-mixing conditions, but the implementation of the approach relies on parametric copulas and knowledge of dependence parameters. \cite{GalvaoParkerXiao24} propose a bootstrap based on multiplying one non-negative weight with all of a unit's observations. While this approach has the advantage that serial dependence within each cross-sectional unit is preserved, it can lead to size distortions when practitioners use panels with a small number of cross-sectional units.  The bootstrap literature on the time-series dimension for conditional average linear panel data models has a longer history \citep[e.g.][]{Kapetanios08,Goncalves11,GoncalvesKaffo15}. The analysis of bootstrap algorithms for the linear mean regression model is aided by its linearity and the within transform.  Meanwhile, the model for conditional quantiles with individual effects must be treated as a nonlinear model, and this makes bootstrap inference a challenging problem.

This paper develops a new bootstrap procedure for FE-QR models that is easy to implement. We propose a novel wild residual bootstrap for dependent panel data where one bootstrap weight is applied to each cell in a partition of a unit's observations. Intuitively, reweighting residuals is particularly useful to accommodate conditioning on covariates and fixed effects, while the partition-based approach is aimed at estimating the stationary distribution over a large number of time periods. Hence, we incorporate dependence into bootstrap procedures for panel data FE-QR, improving upon existing results in the literature. This is an important addition, since the statistical properties of panel data QR models are established under large panel approximations.  Our results contribute to the broader QR literature on cross-section and time-series models as in \citet{Hahn95,Hahn97}, \citet{Fitzenberger98}, \citet{Shao10ss}, \citet{FengHeHu11}, \citet{SongRitovHardle12}, \citet{Hagemann17}, and \citet{GregoryLahiriNordman18}. 

We establish the asymptotic validity of the procedure using developments that are different than those used in \cite{LamarcheParker23}, since new results are needed in the dependent case. While the conditions for the marginal weight distribution is standard in the literature \citep{FengHeHu11}, we require a partition cell length condition which is different from the block length conditions considered by \cite{Fitzenberger98}, \cite{Shao10ss} and \cite{GregoryLahiriNordman18}. We show how to empirically choose cell length, which aims to minimize the difference between the bootstrap variance and the asymptotic variance of the FE-QR estimator. Using an extensive simulation study, we find that empirical coverage rates for the proposed bootstrap are close to the nominal counterparts and coverage improves as the sample size increases. 

The remainder of the paper is organized as follows. Section \ref{sec:FEQR} reviews standard inference procedures for FE-QR. In Section \ref{sec:wbb}, we present the new bootstrap procedure and discuss its uses for inference. Section \ref{sec:MC} evaluates the finite sample performance of the bootstrap procedure. In Section \ref{sec:application}, we estimate a panel data quantile regression model and apply the method to evaluate how consumers respond to time-of-use electricity pricing. Finally, Section \ref{sec:conclusion} concludes.

\section{Inference for fixed effects quantile regression}\label{sec:FEQR}

Our model of interest is the fixed effects quantile regression (FE-QR) model. We observe $T$ time periods (indexed by $t$) of jointly stationary data $\{ ( y_{it},\bm{x}_{it}') \}_{t=1}^{T}$ for each of $N$ units (indexed by $i$), where $y_{it} \in \RR$ denotes the response and $\bm{x}_{it}$ denotes a $p$-dimensional vector of covariates for unit $i$ at time $t$. For quantile $\tau \in (0,1)$, the FE-QR model is 
\begin{equation} \label{eq:feqr_model}
  y_{it}  = \bm{x}_{it}' \bm{\beta}_0(\tau) + \alpha_{i0}(\tau) + u_{it},
\end{equation}
where $u_{it}$ is a disturbance whose $\tau$-th quantile conditional on $\bm{x}_{it}$ is equal to zero, implying that the conditional quantile of the response variable is $Q_{y_{it}}(\tau | \bm{x}_{it})  = \bm{x}_{it}' \bm{\beta}_0(\tau) + \alpha_{i0}(\tau)$. The parameter of interest is $\bm{\beta}_0(\tau) \in \mathcal{B} \subseteq \RR^p$, and the scalar individual specific effect, $\alpha_{i0}(\tau)$, is treated as a nuisance parameter. Because we consider just one quantile value, we suppress the dependence of the parameters on $\tau$ in the sequel.

Let $\bm{\theta} = (\bm{\beta}',\bm{\alpha}')' \in \bm{\Theta} \subseteq
\RR^{p+N}$, where $\bm{\alpha} = (\alpha_{1},...,\alpha_{N})'$, and let
$\bm{\theta}_0 = (\bm{\beta}_0',\bm{\alpha}_0')'$. To estimate $\bm{\theta}_0$,
we consider the following FE-QR estimator:
\begin{equation} \label{eq:feqr}
  \hat{\bm{\theta}} = (\hat{\bm{\beta}}', \hat{\bm{\alpha}}')'
  = \argmin_{\bm{\theta} \in \bm{\Theta}} \sum_{i=1}^N \sum_{t=1}^T \rho_{\tau}
  (y_{it} - \bm{x}_{it}' \bm{\beta} - \alpha_i),
\end{equation}
where $\rho _{\tau }(u)=$ $u(\tau -I(u<0))$ is the quantile regression loss
function. The estimator defined in \eqref{eq:feqr} was introduced by \citet{Koenker04}.  Although the parameter of interest is $\bm{\beta}_0$, it is important to note that this estimation strategy estimates all of $\bm{\theta}_0$ and then focuses on $\bm{\beta}_0$ for inference purposes.  This is due to the fact that no known data transformation exists that allows one to avoid estimation of $\bm{\alpha}$ for this model as in, for example, linear conditional average panel models.  It prevents us from naively applying bootstrap algorithms from the time-series literature unit-wise to these observations, as is shown below in Section~\ref{sec:MC}.

The asymptotic distribution of $\hat{\bm{\beta}}$ is described below in Lemma~\ref{lem:feqr}.  It depends on the following regularity conditions.

\begin{enumerate}[label=(A\arabic*), ref=(A\arabic*)]

\item \label{assume:stationary}
  The processes $\{(y_{it},\bm{x}_{it}), t \in 1, 2, \ldots\}$ are strictly stationary and $\beta$-mixing for each $i$ and independent across $i$. Letting $\{\beta_i(k)\}_j$ denote the $\beta$-mixing coefficients, assume that there are constants $0 < a < 1$ and $C > 0$ such that $\sup_i \beta_i(k) \leq C a^k$ for all $k \geq 1$. 

    \item \label{assume:jointdens}
    Let $u_{it} := y_{it} - \bm{x}_{it}' \bm{\beta}_0 - \alpha_{i0}$. The random vector $(u_{it}, u_{it+k})$ has a density conditional on $(\bm{x}_{it}, \bm{x}_{it+k})$ that is bounded uniformly over $i$ and $k = 1, 2, \ldots$
  
    \item \label{a:minf} Let $F_i$ denote the distribution function of $u_{it}$ given $\bm{x}_{it}$, that is, $F_i(u) = \prob{u_{i1} \leq u | \bm{x}_{i1}}$.  The conditional density function $f_i$ corresponding to $F_i$ is uniformly bounded and has a bounded first derivative, that is, $\overline{f} = \sup_i \sup_{u \in \RR, \bm{x} \in \RR^p} | f_i(u | \bm{x}) | < \infty$ and $\overline{f'} = \sup_i \sup_{u \in \RR, \bm{x} \in \RR^p} | f'_i(u | \bm{x}) | < \infty$. Assume that in an open neighborhood $\mathcal{U}$ of $0$, $f_i$ is bounded away from zero for all realizations of $\bm{x}_{it}$: $\underline{f} = \inf_i \inf_{u \in \mathcal{U}, \bm{x} \in \RR^p} | f_i(u | \bm{x}) | < \infty$.

    \item \label{a:boundedX} Assume $\|\bm{x}_{i1}\| \leq M < \infty$ a.s.

    \item \label{assume:Avar_dependent} Assume that $(\alpha_i, \bm{\beta})$ lies in a compact set for all $i$.  Define $\varphi_i = \ex{f_i(0 | \bm{x}_{i1})}$, $\bm{g}_i = \ex{f_i(0 | \bm{x}_{i1}) \bm{x}_{i1}}$ and $\bm{J}_i = \ex{f_i(0 | \bm{x}_{i1}) \bm{x}_{i1} \bm{x}_{i1}'}$. Further define
  \begin{equation*}
    \bm{D}_N =  \frac{1}{N} \sum_{i=1}^N  \left( \bm{J}_i - \varphi_i^{-1} \bm{g}_i \bm{g}_i' \right),\; \;  \bm{V}_{NT} = \frac{1}{N} \sum_{i=1}^N \Var \left( \frac{1}{\sqrt{T}} \sum_{t=1}^T \tilde{\bm{x}}_{it} \psi_{\tau} \left( u_{it} \right) \right),
  \end{equation*}
  where $\tilde{\bm{x}}_{it} = \bm{x}_{it} - \varphi_i^{-1} \bm{g}_i$ and $\psi_\tau(u) = \tau - I( u < 0)$ is the quantile score function. Assume $\bm{D}_N$ is nonsingular for all $N$ and $\bm{D} = \lim_{N \rightarrow \infty} \bm{D}_N$ exists and is nonsingular. Moreover, $\bm{V} = \lim_{N,T \rightarrow \infty} \bm{V}_{NT}$ exists and is nonsingular.
\end{enumerate}

Similar conditions are used in the literature. For instance, a version of Assumption \ref{assume:stationary} has been used in \citet{Kato2012} and \citet{GalvaoGuVolgushev20}. Assumptions \ref{assume:jointdens}--\ref{a:minf} guarantee the convexity of the limiting distribution of the objective function, and therefore, the uniqueness of the fixed effects estimator (see \citet{Kato2012} for slightly more minimal assumptions for consistency). Assumption \ref{a:boundedX} implies appropriate moment conditions on the covariates and it is similar to (B1) in \citet{Kato2012} and (B3) in \cite{LamarcheParker23}. Lastly, \ref{assume:Avar_dependent} assumes the extra regularity conditions needed, beyond those for consistency, to establish an asymptotic distribution for $\hat{\bm{\beta}}$. The first condition is technical and used to derive bounds on special functional classes used in the proof of asymptotic normality.  The other conditions are for convenience and ensure existence of limiting covariance matrices in the case of dependent data, and are similar to (C3) in \cite{LamarcheParker23}. 

The large sample theory of the estimator \eqref{eq:feqr} is described in the following result.

\begin{lemma}[\citet{GalvaoGuVolgushev20}] \label{lem:feqr}
 Under Assumptions \ref{assume:stationary}--\ref{a:boundedX}, and if $N/T^s \to 0$ for some $s \geq 1$, the estimator $\hat{\bm{\theta}}$ defined in \eqref{eq:feqr} is consistent for $\bm{\theta}_0$ as $N, T \to \infty$. Moreover, under Assumptions \ref{assume:stationary}--\ref{assume:Avar_dependent}, if $N (\log T)^4 / T \to 0$, then
  \begin{equation*}
    \sqrt{NT} (\hat{\bm{\beta}} - \bm{\beta}_0) \indist \mathcal{N}(\bm{0}, \bm{\Sigma}),
  \end{equation*}
  where
  \begin{equation} \label{eq:Sigma}
     \bm{\Sigma} = \bm{D}^{-1} \bm{V} \bm{D}^{-1}.
  \end{equation}
\end{lemma}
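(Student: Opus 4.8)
The plan is to follow the standard two-stage route for incidental-parameter M-estimation: first establish consistency of the full parameter vector $\hat{\bm{\theta}}$ by exploiting convexity, and then derive a Bahadur-type linear representation for $\hat{\bm{\beta}}$ after profiling out the fixed effects $\bm{\alpha}$, from which the stated normal limit follows via a central limit theorem for mixing arrays.

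For consistency, the key structural fact is that the objective in \eqref{eq:feqr} is convex in $\bm{\theta}$. I would normalize the criterion by $(NT)^{-1}$ and recenter it at $\bm{\theta}_0$, so that the resulting convex function is minimized at $\hat{\bm{\theta}} - \bm{\theta}_0$. Under Assumptions \ref{assume:jointdens}--\ref{a:minf} the population analogue of this criterion has positive-definite curvature at the origin and is uniquely minimized there, so it suffices to control the gap between the sample and population criteria. Because the dimension of $\bm{\theta}$ grows with $N$, pointwise convergence must be strengthened to uniform control over the individual effects: using the $\beta$-mixing condition \ref{assume:stationary}, the bound in \ref{a:boundedX}, and a maximal inequality, I would show $\max_i |\hat{\alpha}_i - \alpha_{i0}| = \op{1}$ and $\|\hat{\bm{\beta}} - \bm{\beta}_0\| = \op{1}$. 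The rate $N/T^s \to 0$ enters here to guarantee that the $N$ simultaneous tail bounds for the individual effects aggregate without exploding.

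For the limiting distribution I would work from the subgradient conditions. Fixing $\bm{\beta}$ and solving the within-individual first-order condition defines $\hat{\alpha}_i(\bm{\beta})$ implicitly through $\sum_t \psi_\tau(y_{it} - \bm{x}_{it}'\bm{\beta} - \hat{\alpha}_i(\bm{\beta})) \approx 0$; substituting back into the $\bm{\beta}$-score and Taylor-expanding the conditional mean of the score around $\bm{\theta}_0$ produces both the curvature matrix $\bm{D}$ and the partialled-out regressor $\tilde{\bm{x}}_{it} = \bm{x}_{it} - \varphi_i^{-1}\bm{g}_i$, where the correction $\varphi_i^{-1}\bm{g}_i$ is precisely the density-weighted projection that accounts for estimating $\alpha_i$ (indeed $\bm{J}_i - \varphi_i^{-1}\bm{g}_i\bm{g}_i' = \ex{f_i(0|\bm{x}_{i1})\tilde{\bm{x}}_{i1}\tilde{\bm{x}}_{i1}'}$). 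The target is the representation
\begin{equation*}
  \sqrt{NT}(\hat{\bm{\beta}} - \bm{\beta}_0) = \bm{D}^{-1} \frac{1}{\sqrt{NT}} \sum_{i=1}^N \sum_{t=1}^T \tilde{\bm{x}}_{it} \psi_\tau(u_{it}) + \op{1}.
\end{equation*}
The leading term is independent across $i$ and $\beta$-mixing within $i$ with mean zero (since $F_i(0) = \tau$), so a central limit theorem for independent triangular arrays of mixing sequences, together with the convergence $\bm{V}_{NT} \to \bm{V}$ guaranteed by \ref{assume:Avar_dependent}, yields $\mathcal{N}(\bm{0}, \bm{V})$ for the leading term and hence $\mathcal{N}(\bm{0}, \bm{D}^{-1}\bm{V}\bm{D}^{-1})$ for the estimator.

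The main obstacle is controlling the remainder from the incidental parameters. Each $\hat{\alpha}_i$ is estimated from only $T$ observations, so its error is of order $T^{-1/2}$, and these errors enter the $\bm{\beta}$-score nonlinearly through the non-smooth $\psi_\tau$. Showing that their aggregate contribution is $\op{1}$ after scaling by $\sqrt{NT}$ requires (i) a \emph{uniform} rate for $\max_i|\hat{\alpha}_i - \alpha_{i0}|$, where the logarithmic penalty in the maximal inequality produces the factor $(\log T)^4$, and (ii) a stochastic-equicontinuity argument to linearize the discontinuous score, handled by empirical-process bounds over the mixing data using the conditional-density control in \ref{assume:jointdens}. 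It is the interplay of these two effects that forces the sharp restriction $N(\log T)^4/T \to 0$, ensuring the incidental-parameter bias vanishes faster than the $(NT)^{-1/2}$ sampling error of $\hat{\bm{\beta}}$.
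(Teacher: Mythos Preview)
The paper does not actually prove this lemma: it is stated as a cited result from \citet{GalvaoGuVolgushev20} (abbreviated `GGV' in the appendix), and no proof appears anywhere in the text or appendix. The paper simply invokes it as a known theorem, along with the Bahadur representation~\eqref{eq:bahadur}, which it attributes to \citet{Kato2012} (abbreviated `KGMR'). Your sketch is therefore not comparable to any argument in the paper itself.

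That said, your outline is faithful to the structure of the proofs in KGMR and GGV that the paper relies on throughout the appendix: consistency via convexity of the objective and uniform control of $\max_i|\hat{\alpha}_i-\alpha_{i0}|$ (KGMR Theorems~3.1 and~5.1), the profiled Bahadur representation with the density-adjusted regressor $\tilde{\bm{x}}_{it}$, stochastic-equicontinuity bounds for the nonsmooth score over $\beta$-mixing data (KGMR Propositions~C.1--C.2, GGV Lemma~5), and a CLT for the linear term. The appendix in fact re-runs exactly this machinery under the bootstrap measure in the proof of Theorem~\ref{thm:boot}, and the auxiliary Lemmas~\ref{lem:alphasmall}, \ref{lem:4K}, \ref{lem:alphas}, \ref{lem:4Kexp}, and~\ref{lem:LOO} mirror the steps you describe for the sample estimator. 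So your proposal is correct in spirit, but there is nothing in this paper to compare it against beyond those references.
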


In Lemma~\ref{lem:feqr}, $T$ diverges to infinity to show the asymptotic results, and it must grow faster than $N$ for asymptotic normality.  The condition on the relative sizes of $T$ and $N$ is close to the standard rates for smooth non-linear panel data models, despite the lack of differentiability of the quantile regression score function $\psi_\tau(u) = \tau - I(u < 0)$.

We now offer some heuristics that help with intuition on the limitations of existing inference approaches applied to panel data quantile regression, as well as to establish the validity of the proposed bootstrap approach in Section \ref{sec:wbb} below. Our results rely on the Bahadur representation of the FE-QR estimator.  
\citet{Kato2012} established conditions under which the Bahadur representation of the estimator in equation \eqref{eq:feqr} holds, specifically
\begin{equation}\label{eq:bahadur}
\sqrt{NT}\left( \hat{\bm{\beta}} - \bm{\beta}_{0} \right) = \bm{D} _{N}^{-1}  \frac{1}{\sqrt{NT}} \sum_{i=1}^N \sum_{t=1}^T \tilde{\bm{x}}_{it} \psi _{\tau} \left( u_{it} \right) + \op{1},     
\end{equation} 
where $\tilde{\bm{x}}_{it}$, and $\bm{D}_N$ are defined in Assumption~\ref{assume:Avar_dependent}.  The other part of the variance that was defined in that assumption was labeled $\bm{V}_{NT}$ and is the variance of the remaining part of the linear term in~\eqref{eq:bahadur}.  With temporal dependence of unit $i$'s observations the variance of this linear term includes a weighted sum of covariances
\begin{equation}\label{eq:autoco}
    \gamma_{i}(k) = \ex{ \tilde{\bm{x}}_{it}  \psi _{\tau} \left( u_{it} \right) \tilde{\bm{x}}_{it+k}' \psi _{\tau} \left( u_{it+k} \right) }.
\end{equation}
Note that $\gamma_{i}(0) = \tau (1 - \tau) \ex{ \tilde{\bm{x}}_{it} \tilde{\bm{x}}_{it}'}$. Then we can write $\bm{V}_{NT}$ as
\begin{equation}
  \bm{V}_{NT} = \frac{1}{NT} \sum_{i=1}^N \left( T \gamma_i(0) + 2 \sum_{k=1}^{T-1} (T-k) \gamma_i(k) \right). \label{eq:def-Vn}
\end{equation}

We are interested in conducting inference for the parameter $\bm{\beta}_{0}$. Most straightforwardly, one could estimate the asymptotic variance-covariance matrix $\bm{\Sigma}$ in~\eqref{eq:Sigma} and construct confidence intervals directly from it. However, in the presence of serial dependence, a heteroskedasticity autocorrelation consistent (HAC) estimator of $\bm{V}_{NT}$ is essential, and at the moment, theoretical guidance about such an estimator is lacking in the literature. In addition, $\bm{D}_N$ depends on the conditional density of the error term, which is generally difficult to estimate well. Hence, the next section proposes a bootstrap procedure for inference on $\bm{\beta}_0$ in the FE-QR model under dependence. 

Before we provide details on the bootstrap procedure,  we introduce a required condition on the data generating process to establish the validity of the bootstrap described in Section~\ref{sec:wbb} below.
Let $\bm{V}_{NT}^0$ describe the variance of the sum in the Bahadur representation~\eqref{eq:bahadur}, that is, the first term in equation \eqref{eq:def-Vn}. In the case where errors at different time periods for each unit are independent, we can write \eqref{eq:def-Vn} as:
\begin{equation*}
    \bm{V}_{NT}^0 = \frac{1}{N} \sum_{i=1}^N \Var \left( \tilde{\bm{x}}_{i1} \psi_{\tau} \left( u_{i1} \right) \right) = \frac{1}{N} \sum_{i=1}^N \gamma_{i}(0).
\end{equation*}
Longitudinal data in the social sciences including economics commonly exhibit positive time-series dependence that makes the variance in~\eqref{eq:def-Vn} larger than $\bm{V}_{NT}^0$. In this paper, we will propose a bootstrap that is tailored to cases where
\begin{equation*}
    \bm{V}_{NT} - \bm{V}_{NT}^0 = \frac{2}{NT} \sum_{i=1}^N \sum_{k=1}^{T-1} (T-k) \gamma_i(k)
\end{equation*}
converges to a positive definite limit. This regularity condition is stated formally as Assumption~\ref{assump:A6} below.  To make the results more intuitive, we use the convention that for two conformable matrices $\bm{A}$ and $\bm{B}$, $\bm{A} \geq \bm{B}$ means that $\bm{A} - \bm{B}$ is positive semidefinite. Consider the following condition:

\begin{enumerate}[label=(A\arabic*), ref=(A\arabic*)]
\setcounter{enumi}{5}
  \item \label{assump:A6} Assume that $\lim_{N,T \rightarrow \infty} (\bm{V}_{NT} - \bm{V}_{NT}^0) \geq \zero$, and for each $k = 1, 2, \ldots$, 
  \begin{equation*}
    \ex{I(u_{it} < 0, u_{it+k} < 0) \tilde{\bm{x}}_{it} \tilde{\bm{x}}_{it+k}'} \leq \tau \ex{\tilde{\bm{x}}_{it} \tilde{\bm{x}}_{it+k}'}.
  \end{equation*}
\end{enumerate}

Assumption~\ref{assump:A6} imposes two conditions on the data for which the proposed bootstrap algorithm works well.  The first is on the sum of the autocovariances in \eqref{eq:def-Vn}, which allows for positive or negative dependence between different time periods. This condition together with the stationarity and mixing condition on the data in Assumption~\ref{assume:stationary} imply that the covariance between terms $\tilde{\bm{x}}_{it} \psi_\tau(u_{it})$ and $\tilde{\bm{x}}_{it+k} \psi_\tau(u_{it+k})$ can be bounded, due to the boundedness of the function $\psi_\tau(\cdot)$ that appears in the Bahadur representation~\eqref{eq:bahadur}.  As noted above, $\gamma_i(0)$ has a convenient familiar form.  The covariance between periods $t$ and $t+k$ can be similarly bounded.  Writing
\begin{equation*}
    \gamma_i(k) = \ex{\psi_\tau(u_{it}) \psi_\tau(u_{it+k}) \tilde{\bm{x}}_{it} \tilde{\bm{x}}_{it+k}'} = \ex{(I(u_{it} < 0, u_{it+k} < 0) - \tau^2) \tilde{\bm{x}}_{it} \tilde{\bm{x}}_{it+k}'},
\end{equation*}
the mixing condition in Assumption~\ref{assume:stationary} implies the bound $|\prob{u_{it} < 0, u_{it+k} < 0 \, | \, \bm{x}_i} - \tau^2| \rightarrow 0$, which means that it is less than $\tau(1 - \tau)$ for all $k$ large enough. This fact will be exploited by the partitioned wild bootstrap, which is described in detail in the next section.  The second condition in Assumption~\ref{assump:A6} limits the dependence between the regressors $\tilde{\bm{x}}_{it}$ and the quantile error terms.  Without such a condition, oscillating patterns that offset each other can cause problems with partition selection for the bootstrap.

\section{Partitioned wild bootstrap}\label{sec:wbb}

This section proposes a new wild residual bootstrap approach. For each unit $i$, we partition the time-series observations into $b$ non-overlapping parts of length $l$. We denote blocks of time series observations as partitions (or cells) instead of blocks to distinguish our approach from related methods which resample blocks \citep[see, e.g.,][among others]{Fitzenberger98,Goncalves11,GregoryLahiriNordman18,Hounyo23}. In contrast, we resample partition invariant weights to construct a bootstrap distribution.  The length $l$ of the partition is selected by a data-driven approach introduced below.

\subsection{Definition}

For simplicity, we assume that $T = b \times l$.  Letting $j = 1, 2, \ldots, b$, and denoting within-partition observations by $s = 1, 2, \ldots, l$, we redefine the FE-QR estimator in~\eqref{eq:feqr}:
\begin{equation}\label{eq:feqrboot0}
    \hat{\bm{\theta}} = (\hat{\bm{\beta}}', \hat{\bm{\alpha}}')' = \argmin_{ \bm{\theta} \in  \bm{\Theta} } \sum_{i=1}^N \sum_{j=1}^{b} \sum_{s=1}^l  \rho_\tau(y_{ijs} - \bm{x}_{ijs}' \bm{\beta} - \alpha_i).
\end{equation}
The corresponding residuals are $\hat{u}_{ijs} = y_{ijs} - \bm{x}_{ijs}' \hat{\bm{\beta}} - \hat{\alpha}_i$. New bootstrap observations are obtained from predicted values and reweighted residuals. Specifically, the bootstrap data generating process relies on the following bootstrap residuals
\begin{equation}\label{eq:wresid}
u_{ijs}^\ast = w_{ij} | \hat{u}_{ijs} |,
\end{equation}
where the weight $w_{ij}$ is drawn from a pre-specified distribution satisfying conditions \ref{b:wiid}--\ref{b:Gint} below. These within-partition invariant weights are independent and identically distributed (i.i.d.). Using the bootstrap residuals defined in~\eqref{eq:wresid}, the bootstrapped dependent variable is
\begin{equation*}
y_{ijs}^\ast = \bm{x}_{ijs}^{\prime} \hat{\bm{\beta}} + \hat{\alpha}_i + u_{ijs}^\ast.
\end{equation*}
Finally, a bootstrap FE-QR estimate is computed by finding
\begin{equation} \label{eq:feqrboot}
  \bm{\theta}^\ast = (\bm{\beta}^{\ast\prime}, \bm{\alpha}^{\ast\prime})^{\prime} = \argmin_{\bm{\theta} \in \bm{\Theta}} \sum_{i=1}^N  \sum_{j=1}^{b} \sum_{s=1}^{l}  \rho_\tau(y_{ijs}^\ast - \bm{x}_{ijs}^{\prime} \bm{\beta} - \alpha_i).
\end{equation}
This procedure is labeled partitioned wild bootstrap (PWB). Theorem \ref{thm:boot} below establishes the asymptotic validity of the method and Section \ref{sec:practical} describes how one can obtain valid confidence intervals using the bootstrap estimator proposed in \eqref{eq:feqrboot}. 

The distribution of the random weights is chosen by the researcher. As mentioned above, they are i.i.d., which makes them easy to generate, but must come from a distribution with CDF $G_W$ that satisfies the following conditions.

\begin{enumerate}[label=(B\arabic*), ref=(B\arabic*)]
    \item \label{b:wiid} The weights $\{ w_{ij}, 1 \leq i \leq N, 1 \leq j \leq b \}$ are independent and identically distributed.
    \item \label{b:Gquan} The $\tau$-th quantile of $G_W$ is $0$, that is, $G_W(0) = \tau$.
    \item \label{b:Gsupp} The support of $G_W$ is bounded and contained in $(-\infty, -c_1] \cup [c_2, \infty)$ for some $c_1, c_2 > 0$.
    \item \label{b:Gint} $G_W$ satisfies $-\int_{-\infty}^0 w^{-1} \ud G_W(w) = \int_0^\infty w^{-1} \ud G_W(w) = \frac{1}{2}$.
\end{enumerate}

Assumption \ref{b:wiid} is used in similar versions of the wild bootstrap with the exception that we require here that the weights are independent across partitions \citep[see, e.g.,][for a recent example]{LamarcheParker23}. Assumptions \ref{b:Gquan}--\ref{b:Gint} on $G_W$ were first proposed in \citet{FengHeHu11} and they are satisfied by several weight distributions \citep{FengHeHu11, lanWang2018}. 
We follow the literature and adopt a two-point mass distribution in the empirical examples. The distribution generates $w_{ij} = 2 (1-\tau)$ with probability $\tau$ and $w_{ij} = - 2 \tau$ with probability $(1-\tau)$.

\subsection{Practical implementation of the bootstrap}\label{sec:practical}
The practical implementation of the PWB method is simple. The main algorithm for implementing the methods is as follows. 

\begin{enumerate}[leftmargin=.25cm]
\item[] Step 1. For a given quantile of interest, fit the FE-QR panel model in equation \eqref{eq:feqrboot0} using the entire sample and compute the estimator $\hat{\bm{\beta}}$ and residuals $\hat{u}_{it}$;

\item[] Step 2. Select the partition size $l$ --- this is discussed in Section \ref{sec:celllength} below --- and make the number of cells $b=T/l$, with one shorter cell if necessary.  Let $j$ index partitions and relabel the observations $(\bm{x}_{ijs}, \hat{u}_{ijs})$ where $i$ indexes units, and $s$ indexes the time period within the $j$-th partition;

\item[] Step 3. Draw weights $\{w_{ij}\}$ for $1 \leq i \leq N$ and $1 \leq j \leq b$ randomly from distribution $G_W$ satisfying conditions \ref{b:wiid}--\ref{b:Gint}. Using the residuals from Step 1, compute bootstrap residuals  $u_{ijs}^\ast = w_{ij} | \hat{u}_{ijs} |$, and  $y_{ijs}^\ast = \bm{x}_{ijs}' \hat{\bm{\beta}} + \hat{\alpha}_i + u_{ijs}^\ast$;

\item[] Step 4. Using the sample, $(y_{ijs}^\ast , \bm{x}_{ijs})$ obtain the bootstrap estimator in equation \eqref{eq:feqrboot}. Denote the bootstrap estimator $\bm{\theta}^{\ast} = (\bm{\alpha}^{\ast}, \bm{\beta}^{\ast})$;

\item[] Step 5. Repeat Steps 3-4 $B$ times;

\item[] Step 6. Approximate the distribution of $\sqrt{NT} (\hat{\bm{\beta}} - \bm{\beta}_{0})$ by the empirical distribution of the $B$ observations of $\sqrt{NT}(\bm{\beta}^{\ast} - \hat{\bm{\beta}})$.

\end{enumerate}

By choosing the number of bootstrap simulations $B$ in the algorithm above large enough, the distribution of $\sqrt{NT}(\bm{\beta}^{\ast} - \hat{\bm{\beta}})$ can be computed with any desired precision.  There are several way of using this distribution for inference on the parameters.

\subsubsection*{Percentile confidence intervals}
The distribution function of $\bm{\beta}^\ast - \hat{\bm{\beta}}$ can be used to estimate the distribution function of $\hat{\bm{\beta}} - \bm{\beta}_0$.  Specifically, suppose that $\beta_0$ is one coordinate of $\bm{\beta}_0$ and we would like to compute a confidence interval for $\beta$ with confidence level $1-\lambda$.  Given bootstrap realizations $\{\beta_b^\ast\}_{b=1}^B$ of this coordinate, we can find $\beta_{\lambda/2}^\ast$ and $\beta_{1-\lambda/2}^\ast$.  Then one confidence interval for $\beta$ is
\begin{equation} \label{eq:boot_percentile}
\textnormal{CI}_P = [\beta_{\lambda / 2}^\ast, \beta_{1-\lambda / 2}^\ast].
\end{equation}
These percentiles may be used as to estimate the endpoints of a confidence interval for $\beta_0$.  

\subsubsection*{Variance-covariance matrix estimation and resulting confidence intervals}

For a fixed quantile level $\tau$, we define the bootstrap estimate of the asymptotic covariance matrix $\bm{\Sigma}$ given bootstrap realizations $\{\bm{\beta}^{\ast }_b\}_{b=1}^B$ as
\begin{equation}\label{eq:boot_var}
\bm{\Sigma}^{\ast} = \frac{1}{B} \sum_{b=1}^{B} (\bm{\beta}^{\ast}_b - \hat{\bm{\beta}})(\bm{\beta}^{\ast}_b - \hat{\bm{\beta}})^{\prime}.
\end{equation}
Under the regularity conditions in Theorem \ref{thm:boot} below, $\bm{\Sigma}^\ast$ is a consistent estimator of the asymptotic covariance matrix $\bm{\Sigma}$ defined in equation \eqref{eq:Sigma}.  The estimated standard errors of $\hat{\bm{\beta}}$ are the square roots of the diagonal elements of $\bm{\Sigma}^{\ast}$. Given $\bm{\Sigma}^\ast$, testing general hypotheses $R\bm{\beta}_0=r$ for the vector $\bm{\beta}_0$ can be accommodated by Wald-type tests.

In one dimension we can compare this method with the percentile method.  Once again, assume that $\beta_0$ is one coordinate of $\bm{\beta}_0$.  Let $\hat{\beta}$ be its estimate and let $se^\ast$ be the square root of the corresponding diagonal element of $\bm{\Sigma}^\ast$.  Then a standard-error based confidence interval is
\begin{equation}\label{eq:boot_se}
  \textnormal{CI}_{SE} = [\hat{\beta} - z_{1 - \lambda/2} se^\ast, \hat{\beta} - z_{1 - \lambda/2} se^\ast],
\end{equation}
where $z_{\lambda}$ denotes the $\lambda$-th quantile of the standard normal distribution. 

\subsection{Partition length selection} \label{sec:celllength}


The selection of the size of the partition is different for the PWB algorithm than for traditional block bootstrap approaches \citep[see, e.g.,][for examples of block bootstrap for inference on quantiles]{Fitzenberger98, Shao10ss, GregoryLahiriNordman18}. The length here is chosen to match the central term $\bm{V}$ in the asymptotic covariance matrix in equation~\eqref{eq:Sigma}, in particular the average of the covariances between non-contemporaneous score terms in the Bahadur representation~\eqref{eq:bahadur}, and the bootstrap variance.  The next result indicates how to choose the size of these cells, assuming that Assumptions~\ref{assume:stationary}--\ref{assump:A6} hold.

\begin{lemma}[Existence] \label{lem:l_exists}
Let Assumptions \ref{assume:stationary}--\ref{assump:A6} hold.  Define $\mathcal{V}_N: \mathbb{N} \rightarrow \R^{p \times p}$ by
\begin{equation*}
    \mathcal{V}_N(l) = \begin{cases} \mathbf{0} & l = 1 \\ \frac{2\tau(1-\tau)}{N} \sum_{i=1}^N \sum_{k=1}^{l-1} \frac{l - k}{l} \ex{\tilde{\bm{x}}_{it} \tilde{\bm{x}}_{it+k}'} & l = 2, 3, \ldots \end{cases}
\end{equation*}
Then, at least one $l^{\text{o}} \in \{1, 2, \ldots, T\}$ exists such that
\begin{equation} \label{lstar_choose}
    \lim_{N \rightarrow \infty} \mathcal{V}_N(l^{\text{o}})  \leq \lim_{N,T \rightarrow \infty} (\bm{V}_{NT} - \bm{V}_{NT}^0) \leq 
    \lim_{N \rightarrow \infty} \mathcal{V}_N(l^{\text{o}}+1).
\end{equation}
Furthermore, all such $l^{\text{o}}$ are uniformly bounded as $N, T \rightarrow \infty$.
\end{lemma}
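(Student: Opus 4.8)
The plan is to pass to the $N \to \infty$ (and then $T \to \infty$) limits, reduce both sides of \eqref{lstar_choose} to convergent series of autocovariance matrices, and then locate $l^{\text{o}}$ as the last index at which the lower Loewner bound holds. Write $\bm{\mathcal{V}}(l) = \lim_{N \to \infty} \mathcal{V}_N(l)$ and $\bm{W} = \lim_{N,T \to \infty}(\bm{V}_{NT} - \bm{V}_{NT}^0)$; existence of $\bm{\mathcal{V}}(l)$ for each fixed $l$ is a Ces\`aro-average statement in the spirit of the limit conditions in \ref{assume:Avar_dependent}--\ref{assump:A6}, using stationarity \ref{assume:stationary} and boundedness of $\tilde{\bm{x}}_{it}$ from \ref{a:boundedX}. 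First I would record the two structural anchors: $\bm{\mathcal{V}}(1) = \zero$ by definition, and $\bm{W} \geq \zero$ by the first part of \ref{assump:A6}. Together these give the left inequality in \eqref{lstar_choose} at $l = 1$, so the candidate set $\mathcal{L} = \{ l : \bm{\mathcal{V}}(l) \leq \bm{W} \}$ is nonempty.

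Next I would control the large-$l$ behaviour. The $\beta$-mixing condition \ref{assume:stationary}, together with boundedness of $\tilde{\bm{x}}_{it}$ \ref{a:boundedX} and of $\psi_\tau(\cdot)$, yields a covariance inequality giving uniform geometric decay $\| \gamma_i(k) \| \leq C a^k$ and $\| \ex{\tilde{\bm{x}}_{it} \tilde{\bm{x}}_{it+k}'} \| \leq C a^k$. Hence the triangular weights $\tfrac{l-k}{l}$ defining $\mathcal{V}_N(l)$ and the weights $\tfrac{T-k}{T}$ implicit in $\bm{V}_{NT} - \bm{V}_{NT}^0$ both tend to one with summable tails, so that $\bm{\mathcal{V}}(l) \to \bm{\mathcal{V}}_\infty := 2\tau(1-\tau) \lim_N \tfrac{1}{N} \sum_i \sum_{k \geq 1} \ex{\tilde{\bm{x}}_{it} \tilde{\bm{x}}_{it+k}'}$ and $\bm{W} = 2 \lim_N \tfrac{1}{N} \sum_i \sum_{k \geq 1} \gamma_i(k)$ (read, as elsewhere, through symmetric parts). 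The second part of \ref{assump:A6} applied termwise gives $\gamma_i(k) \leq \tau(1-\tau)\, \ex{\tilde{\bm{x}}_{it} \tilde{\bm{x}}_{it+k}'}$, so summing yields $\bm{W} \leq \bm{\mathcal{V}}_\infty$. Thus the sequence starts (at $l=1$) weakly below $\bm{W}$ and converges to a limit weakly above it.

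I would then set $l^{\text{o}} = \max \mathcal{L}$ and argue $\bm{W} \leq \bm{\mathcal{V}}(l^{\text{o}}+1)$. Differencing the weights gives the closed form
\[
  \bm{\mathcal{V}}(l+1) - \bm{\mathcal{V}}(l) = \frac{2\tau(1-\tau)}{l(l+1)} \lim_N \frac{1}{N} \sum_{i=1}^N \sum_{k=1}^{l} k\, \ex{\tilde{\bm{x}}_{it} \tilde{\bm{x}}_{it+k}'},
\]
and the geometric decay bounds $\| \bm{\mathcal{V}}(l+1) - \bm{\mathcal{V}}(l) \| = O(l^{-2})$ and $\| \bm{\mathcal{V}}_\infty - \bm{\mathcal{V}}(l) \| = O(l^{-1})$, with constants depending only on $(C,a)$ and hence uniform in $N, T$. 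This is exactly what delivers the uniform boundedness: once the gap $\bm{\mathcal{V}}_\infty - \bm{W}$ is bounded below in the Loewner order --- which I would derive from the nonsingularity built into \ref{assume:Avar_dependent} and \ref{assump:A6} --- there is a threshold $L = L(C,a)$, independent of $N$ and $T$, beyond which $\bm{\mathcal{V}}(l) \not\leq \bm{W}$. Thus $\max \mathcal{L} \leq L$, every admissible $l^{\text{o}}$ is bounded by $L$, and for $T$ large $l^{\text{o}} \in \{1, \ldots, T\}$ as required.

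The step I expect to be the genuine obstacle is the passage through the positive-semidefinite partial order at the crossing. Knowing only $\bm{\mathcal{V}}(1) \leq \bm{W}$ and $\bm{\mathcal{V}}_\infty \geq \bm{W}$ does not by itself produce a single index with $\bm{\mathcal{V}}(l^{\text{o}}) \leq \bm{W} \leq \bm{\mathcal{V}}(l^{\text{o}}+1)$, because $\leq$ is only a partial order and intermediate terms may be incomparable to $\bm{W}$: taking $l^{\text{o}} = \max \mathcal{L}$ guarantees $\bm{\mathcal{V}}(l^{\text{o}}+1) \not\leq \bm{W}$, which is weaker than the desired $\bm{\mathcal{V}}(l^{\text{o}}+1) \geq \bm{W}$. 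The argument therefore needs a monotonicity-type control of the matrix increments --- ensuring that once $\bm{\mathcal{V}}(l)$ rises above $\bm{W}$ it cannot fall back below it --- which by the increment formula reduces to showing the partial sums $\sum_{k=1}^{l} k\, \ex{\tilde{\bm{x}}_{it} \tilde{\bm{x}}_{it+k}'}$ are eventually positive semidefinite. I would try to extract precisely this from the positive-dependence content that \ref{assump:A6} is designed to capture, reducing the matrix sandwich to the scalar crossings of $\bm{v}' \bm{\mathcal{V}}(l) \bm{v}$ and $\bm{v}' \bm{W} \bm{v}$ and arranging that they can be made to cross simultaneously across directions $\bm{v}$; failing full monotonicity, the fallback is to define $l^{\text{o}}$ directly as the last crossing index and absorb any incomparable terms into the $O(l^{-1})$ tail, which still confines $l^{\text{o}}$ to a bounded range.
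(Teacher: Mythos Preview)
Your approach is essentially the same as the paper's: anchor the lower inequality at $l=1$ using the first part of \ref{assump:A6}, use the second part of \ref{assump:A6} termwise together with mixing \ref{assume:stationary} to show that $\mathcal{V}(l)$ eventually dominates $\bm{V}_{NT}-\bm{V}_{NT}^0$, and conclude that a crossing index exists and is bounded. The paper organizes the bookkeeping slightly differently, packaging the autocovariances $\gamma_i(k)$ and $\gamma_i^X(k) = \ex{\tilde{\bm{x}}_{it}\tilde{\bm{x}}_{it+k}'}$ into block--Toeplitz matrices $\Gamma_N$ and $\Gamma_N^X$ and writing the difference $\mathcal{V}_N(l) - (\bm{V}_{NT}-\bm{V}_{NT}^0)$ as a difference of quadratic forms in these matrices, but the substantive inputs are exactly the ones you identify.

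Your caution about the Loewner partial order is well placed. The paper's proof does not handle that crossing any more carefully than you do: it shows the difference is $\leq \zero$ at $l=1$, shows that the term-by-term comparison $\tau(1-\tau)\gamma_i^X(k) - \gamma_i(k) \geq \zero$ (your termwise use of \ref{assump:A6}) forces the limit comparison to be strict by the nondegeneracy in \ref{assume:Avar_dependent}, and then invokes mixing to truncate to a finite $L$. The intermediate step --- that some index between $1$ and $L$ satisfies the two-sided sandwich simultaneously rather than merely $\mathcal{V}(l^{\text{o}}+1) \not\leq \bm{W}$ --- is left implicit in the paper as well. So you are not missing a trick the paper has; if anything, your proposal is more candid about where the argument is informal.
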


The partition length $l^{\text{o}}$ in~\eqref{lstar_choose} can be estimated by numerically solving a finite-sample analog using plug-in estimates. Let $\check{\bm{x}}_{it} = \bm{x}_{it} - \bar{\varphi}_i^{-1} \bar{\bm{g}}_i$, where $\bar{\varphi}_i$ and $\bar{\bm{g}}_i$ are respectively the sample analogs of $\varphi_i$ and $\bm{g}_i$. Then we find $\hat{l}$ such that
\begin{multline} \label{eq:lhat_choose}
        \frac{1}{N} \sum_{i=1}^N \Bigg[  \tau(1-\tau) \sum_{k=1}^{\hat{l}-1} \frac{\hat{l} - k}{\hat{l}} \left( \frac{1}{b} \sum_{j=1}^b \frac{1}{\hat{l}-k} \sum_{s=1}^{\hat{l}-k} \check{\bm{x}}_{ijs} \check{\bm{x}}_{ijs+k}' \right) \\
    {} - \sum_{k=1}^{T-1} \frac{T-k}{T} K(k/h) \left( \frac{1}{T-k} \sum_{t=1}^{T-k} \check{\bm{x}}_{it} \psi_\tau(\hat{u}_{it}) \check{\bm{x}}_{it+k}' \psi_\tau(\hat{u}_{it+k}) \right) \Bigg] \approx 0,
\end{multline}
where $K(\cdot)$ is a kernel function for variance estimation with bandwidth $h$ to estimate each unit's contribution to the variance \citep[Section 3]{GalvaoYoon24}. Without kernel weighting, the right-hand side of \eqref{eq:lhat_choose} would be almost numerically zero -- these sums are sample analogs of the sum of $\gamma_i(k)$ terms for $k > 0$ on the right-hand side of~\eqref{eq:def-Vn}, but they also represent half of the off-diagonal terms of the first derivative of the empirical loss function evaluated at the optimizer, squared. To implement the selection rule, it is convenient to solve for the partition size unit by unit over the panel, leading to $\{\hat{l}_i\}_{i=1}^N$, where for each $i$,
\begin{multline}\label{lhats}
        \tau(1-\tau) \sum_{k=1}^{\hat{l}_i-1} \frac{\hat{l}_i - k}{\hat{l}_i} \left( \frac{1}{b} \sum_{j=1}^b \frac{1}{\hat{l}_i-k} \sum_{s=1}^{\hat{l}_i-k} \check{\bm{x}}_{ijs} \check{\bm{x}}_{ijs+k}' \right) \\
    {} - \sum_{k=1}^{T-1} \frac{T-k}{T} K(k/h) \left( \frac{1}{T-k} \sum_{t=1}^{T-k} \check{\bm{x}}_{it} \psi_\tau(\hat{u}_{it}) \check{\bm{x}}_{it+k}' \psi_\tau(\hat{u}_{it+k}) \right) \approx 0.
\end{multline}

Under the conditions of Lemma~\ref{lem:l_exists}, the feasible length selection $\hat{l}$ in \eqref{eq:lhat_choose} converges in probability to $l^\text{o}$, as stated in the following result.

\begin{lemma}[Consistency] \label{lem:lhat}
Under Assumptions \ref{assume:stationary}--\ref{assump:A6}, when $\hat{l}$ is chosen according to~\eqref{eq:lhat_choose}, $\hat{l} \pconv l^{\text{o}}$, for some $l^{\text{o}} \in \{1, \ldots, L\}$.
\end{lemma}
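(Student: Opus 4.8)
The plan is to exploit the finiteness of the candidate set together with pointwise (hence uniform) consistency of the two sample averages appearing in \eqref{eq:lhat_choose}. By Lemma~\ref{lem:l_exists}, every population-optimal length $l^{\text{o}}$ satisfying \eqref{lstar_choose} is bounded by some finite $L$, so it suffices to restrict attention to the finite grid $\mathcal{L} = \{1, \ldots, L\}$ and to show that, with probability tending to one, $\hat{l}$ falls in the set of crossing points characterized by \eqref{lstar_choose}. I would write the first (partition) term of \eqref{eq:lhat_choose} as $\tfrac{1}{2}\hat{\mathcal{V}}_N(l)$, a sample analog of $\tfrac12\mathcal{V}_N(l)$, and the second (kernel-weighted) term as $\tfrac12\hat{\bm{\Delta}}_{NT}$, a HAC estimator of $\tfrac12(\bm{V}_{NT} - \bm{V}_{NT}^0)$ read off from \eqref{eq:def-Vn}. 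Setting $\bm{M}(l) = \lim_N \mathcal{V}_N(l) - \lim_{N,T}(\bm{V}_{NT} - \bm{V}_{NT}^0)$ and $\hat{\bm{M}}_N(l) = \hat{\mathcal{V}}_N(l) - \hat{\bm{\Delta}}_{NT}$, the length $l^{\text{o}}$ is exactly a sign change of $\bm{M}$ in the positive-semidefinite order (its existence guaranteed by Lemma~\ref{lem:l_exists} under Assumption~\ref{assump:A6}), and $\hat{l}$ is its sample counterpart. Because $\mathcal{L}$ is finite, the whole argument reduces to establishing $\hat{\bm{M}}_N(l) \pconv \bm{M}(l)$ for each fixed $l \in \mathcal{L}$.

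For the partition term, fix $l \in \mathcal{L}$ and a lag $k < l$. The inner average estimating $\ex{\tilde{\bm{x}}_{it} \tilde{\bm{x}}_{it+k}'}$ carries two errors: the replacement of $\tilde{\bm{x}}_{it} = \bm{x}_{it} - \varphi_i^{-1} \bm{g}_i$ by its plug-in $\check{\bm{x}}_{it} = \bm{x}_{it} - \bar{\varphi}_i^{-1} \bar{\bm{g}}_i$, and the ergodic approximation of the within-unit second moment by its expectation. Consistency of the conditional-density-at-zero quantities $\bar{\varphi}_i$ and $\bar{\bm{g}}_i$ follows from standard kernel estimation under the smoothness and boundedness in \ref{a:minf}--\ref{a:boundedX} and the geometric $\beta$-mixing in \ref{assume:stationary}; combined with $\|\bm{x}_{i1}\| \le M$ this yields $\check{\bm{x}}_{it} - \tilde{\bm{x}}_{it} = \op{1}$. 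For the second error, geometric mixing delivers a within-unit law of large numbers for the bounded stationary products, and independence across $i$ in \ref{assume:stationary} together with the limits assumed in \ref{assume:Avar_dependent} gives convergence of the cross-unit average. Since for fixed $l$ only finitely many lags enter, summing the finitely many deterministic weights $\tau(1-\tau)(l-k)/l$ gives $\hat{\mathcal{V}}_N(l) \pconv \lim_N \mathcal{V}_N(l)$.

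For the HAC term, I would appeal to the long-run covariance estimation theory of \citet{GalvaoYoon24}: the geometric mixing in \ref{assume:stationary} makes the autocovariances $\gamma_i(k)$ in \eqref{eq:autoco} summable, so with a kernel $K$ and bandwidth $h \to \infty$, $h/T \to 0$, the truncated kernel-weighted sum is consistent for $\tfrac12(\bm{V}_{NT} - \bm{V}_{NT}^0)$ after cross-unit averaging. The delicate, quantile-specific step is replacing the unknown errors $u_{it}$ by the residuals $\hat{u}_{it}$ inside the non-smooth score $\psi_\tau$: since $I(\hat{u}_{it} < 0)$ and $I(u_{it} < 0)$ differ only when $u_{it}$ lies within $O(\|\hat{\bm{\theta}} - \bm{\theta}_0\|)$ of zero, the consistency of $\hat{\bm{\theta}}$ from Lemma~\ref{lem:feqr} and the uniformly bounded conditional density in \ref{a:minf} bound the expected measure of this set by $\op{1}$, so the residual-based score products converge to the true-error ones. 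Combining this with the same $\check{\bm{x}}$-for-$\tilde{\bm{x}}$ replacement yields $\hat{\bm{\Delta}}_{NT} \pconv \lim_{N,T}(\bm{V}_{NT} - \bm{V}_{NT}^0)$.

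Finally, with $\hat{\bm{M}}_N(l) \pconv \bm{M}(l)$ on the finite grid $\mathcal{L}$, the event that $\hat{l}$ solves the sample crossing (the sample analog of \eqref{lstar_choose}) converges to the event that it solves the population crossing, so $\prob{\hat{l} \in \{l : \bm{M}(l) \le \zero \le \bm{M}(l+1)\}} \to 1$, which is exactly $\hat{l} \pconv l^{\text{o}}$ for one of the bounded population-optimal lengths. \emph{The main obstacle} is the HAC step: propagating the non-differentiability of $\psi_\tau$ through the residual plug-in while simultaneously controlling the kernel truncation and bandwidth, under only $\beta$-mixing. A secondary difficulty is the boundary case in which $\lim_{N,T}(\bm{V}_{NT} - \bm{V}_{NT}^0)$ coincides with $\lim_N \mathcal{V}_N(l^{\text{o}})$ or $\lim_N \mathcal{V}_N(l^{\text{o}}+1)$, where the sample crossing may select either adjacent length; this is precisely why the statement asserts convergence only to \emph{some} $l^{\text{o}}$, and pinning down a unique limit would require a strict-crossing refinement of Assumption~\ref{assump:A6}.
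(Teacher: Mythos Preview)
Your proposal is correct and follows essentially the same route as the paper's proof: both arguments (i) restrict to the finite grid furnished by Lemma~\ref{lem:l_exists}, (ii) show that the partition term converges via consistency of $\bar{\varphi}_i$, $\bar{\bm{g}}_i$ and a within-unit LLN, (iii) delegate the HAC term to \citet{GalvaoYoon24}, and (iv) conclude by uniform convergence of the implicit estimating function over a finite set. Your version is more explicit than the paper's---you spell out the residual plug-in step for $\psi_\tau(\hat{u}_{it})$ and flag the boundary (non-unique $l^{\text{o}}$) case---but these are elaborations of the same skeleton rather than a different strategy.
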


Length selection based on \eqref{eq:lhat_choose} is investigated in the simulation study reported in Section~\ref{sec:MC}. However, to help with the intuition on how the procedure works, we now offer an illustrative example. If we ignore the contributions of $\check{\bm{x}}_{it}$, $\hat{l}$ can be found to satisfy
\begin{equation*}
    \frac{1}{N} \sum_{i=1}^N \tau(1-\tau) \sum_{k=1}^{\hat{l}-1} \frac{\hat{l} - k}{\hat{l}} \approx \frac{1}{N} \sum_{i=1}^N \sum_{k=1}^{T-1} \frac{T-k}{T} K(k/h) \left( \frac{1}{T-k} \sum_{t=1}^{T-k} \psi_\tau(\hat{u}_{it}) \psi_\tau(\hat{u}_{it+k}) \right).
\end{equation*}
This approximate equality can be solved for $\hat{l}$, and, recalling that it should be a positive integer, we suggest
\begin{equation*}
    \hat{l} = 1 + \left\lceil \frac{2}{\tau (1-\tau)} \frac{1}{NT} \sum_{i=1}^N \sum_{k=1}^{T-1} K(k/h) \sum_{t=1}^{T-k} \psi_\tau(\hat{u}_{it}) \psi_\tau(\hat{u}_{it+k}) \right\rceil_+
\end{equation*}
where for $a \in \R$, $\lceil a \rceil$ refers to the smallest integer greater than $a$ (so we overestimate cell size slightly in small samples) and $a_+ = \max\{a, 0\}$, in case the term on the right is negative (under Assumption~\ref{assump:A6}, this occurs with probability decreasing to zero). Although the i.i.d. case is not considered in our investigation, under no temporal dependence when $T$ is sufficiently large, $\hat{l}$ in the last expression should be approximately equal to one, as expected.

\subsection{Bootstrap consistency}


As a direct implication of Lemma \ref{lem:l_exists}, the number of partitions is required to grow quickly as $T$ diverges to infinity. This is a minimal condition compared with block sizes conditions in the literature (see, e.g., C7 in \cite{GregoryLahiriNordman18} and Theorem 3.3 in \cite{Fitzenberger98}). The case of fixed block size is discussed in \citet{Fitzenberger98} but no formal results are given for quantile regression. Our approach relies on $b$ growing quickly, which is consistent with the large sample theory of FE-QR which is established under large $T$ panel approximations (Lemma \ref{lem:feqr}).


The next result describes the validity of our bootstrap approach.

\begin{theorem} \label{thm:boot}
Under the conditions of Lemma \ref{lem:l_exists} and Assumptions \ref{b:wiid}--\ref{b:Gint}, and assuming $N (\log T)^4 / T = o(1)$ as $N, T \to \infty$,
  \begin{equation*}
    \sup_{\bm{\upsilon} \in \RR^p} \left| \prob{\sqrt{NT}( \bm{\beta}^* - \hat{\bm{\beta}}) \leq \bm{\upsilon} | \bm{S}} - \prob{\sqrt{NT}( \hat{\bm{\beta}} - \bm{\beta}_0) \leq \bm{\upsilon}} \right| \inpr 0
  \end{equation*}
  where $\bm{S}$ denotes the observed sample and $\bm{\beta}^\ast$ denotes the slope estimator defined by \eqref{eq:feqrboot}.
\end{theorem}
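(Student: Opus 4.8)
The plan is to reduce the theorem to a conditional central limit theorem for the bootstrap estimator. Lemma~\ref{lem:feqr} already gives $\sqrt{NT}(\hat{\bm\beta}-\bm\beta_0)\indist\calN(\zero,\bm\Sigma)$ with $\bm\Sigma=\bm D^{-1}\bm V\bm D^{-1}$, a continuous limit. Hence, by P\'olya's theorem, it suffices to show that, conditionally on the sample $\bm S$ and in $P$-probability, $\sqrt{NT}(\bm\beta^\ast-\hat{\bm\beta})\dconvs\calN(\zero,\bm\Sigma)$: both CDFs then converge to the same continuous distribution function, the convergence is automatically uniform in $\bm\upsilon$, and that is exactly the stated sup-norm statement. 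Everything therefore rests on identifying the conditional limiting covariance of the bootstrap estimator as $\bm\Sigma$.

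First I would establish a bootstrap Bahadur representation mirroring \eqref{eq:bahadur},
\[ \sqrt{NT}(\bm\beta^\ast-\hat{\bm\beta}) = \bm D_N^{-1}\,\frac{1}{\sqrt{NT}}\sum_{i=1}^N\sum_{j=1}^b\sum_{s=1}^l \tilde{\bm x}_{ijs}\,\psi_\tau(u^\ast_{ijs}) + \ops{1}, \]
in which $\hat{\bm\theta}$ plays the role of the truth and $u^\ast_{ijs}=w_{ij}|\hat u_{ijs}|$ is the bootstrap error. I would reproduce the argument of \citet{Kato2012} and \citet{GalvaoGuVolgushev20} conditionally on $\bm S$: convexity of the bootstrap objective in \eqref{eq:feqrboot}, a uniform-over-$i$ consistency step for the bootstrap fixed effects $\alpha_i^\ast$ that consumes the rate $N(\log T)^4/T\to0$, and a stochastic-equicontinuity bound on the bootstrap empirical process to linearize the subgradient. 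The one structural novelty is that within a cell the weight is common, so $I(u^\ast_{ijs}<0)=I(w_{ij}<0)$ does not depend on $s$; the cells are thus the units of independence, and the $\beta$-mixing of Assumption~\ref{assume:stationary} controls the dependence entering the maximal inequalities. The matrix $\bm D_N$ appears through the \citet{FengHeHu11} mechanism: since $|\hat u_{ijs}|$ is continuous with density near zero equal to $2f_i(0)$, expanding the expected bootstrap score in $\bm\delta=\bm\beta-\hat{\bm\beta}$ yields, for the positive-weight branch, $\int_0^\infty 2 f_i(0)\,w^{-1}\,\ud G_W(w)\,\tilde{\bm x}'\bm\delta=f_i(0)\,\tilde{\bm x}'\bm\delta$ by Assumption~\ref{b:Gint}, with the negative branch supplying the matching slope on the other side; since $\ex{f_i(0)\tilde{\bm x}_{i1}\tilde{\bm x}_{i1}'}=\bm J_i-\varphi_i^{-1}\bm g_i\bm g_i'$, averaging over $i$ recovers exactly $\bm D_N$ and shows the bootstrap projection coincides with $\tilde{\bm x}_{ijs}$ asymptotically.

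Next I would prove the conditional CLT for the linear term and match its variance to $\bm V$. Because $\probs{w_{ij}<0}=\tau$ by Assumption~\ref{b:Gquan}, $\psi_\tau(u^\ast_{ijs})=\tau-I(w_{ij}<0)=:\xi_{ij}$ is constant across $s$, with $\exs{\xi_{ij}}=0$ and $\Var(\xi_{ij}\mid\bm S)=\tau(1-\tau)$. Writing $\bm X_{ij}=\sum_{s=1}^l\tilde{\bm x}_{ijs}$, the linear term is $(NT)^{-1/2}\sum_{i,j}\xi_{ij}\bm X_{ij}$, a sum of $Nb$ conditionally independent, mean-zero vectors (independent across $(i,j)$ by Assumption~\ref{b:wiid} and across $i$ by Assumption~\ref{assume:stationary}), with conditional covariance
\[ \frac{\tau(1-\tau)}{NT}\sum_{i=1}^N\sum_{j=1}^b \bm X_{ij}\bm X_{ij}' = \frac{\tau(1-\tau)}{NT}\sum_{i=1}^N\sum_{j=1}^b\sum_{s=1}^l\sum_{s'=1}^l \tilde{\bm x}_{ijs}\tilde{\bm x}_{ijs'}'. \]
Separating the diagonal pairs ($s=s'$) from the off-diagonal pairs and applying the law of large numbers unit by unit under Assumption~\ref{assume:stationary}, the diagonal part converges to $\bm V^0=\lim\bm V_{NT}^0$ while the off-diagonal part converges to $\mathcal V_N(l)$ of Lemma~\ref{lem:l_exists}. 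Choosing $l=l^{\text o}$ forces the sum to $\bm V^0+(\bm V-\bm V^0)=\bm V$ by \eqref{lstar_choose}. Because $\xi_{ij}$ is bounded, $\|\tilde{\bm x}_{ijs}\|\le M$ by Assumption~\ref{a:boundedX}, and $l^{\text o}$ is bounded by Lemma~\ref{lem:l_exists}, each summand is $\Op{l^{\text o}/\sqrt{NT}}$, so Lindeberg's condition holds trivially; the conditional CLT then gives the linear term $\dconvs\calN(\zero,\bm V)$ in $P$-probability, and premultiplying by $\bm D_N^{-1}\to\bm D^{-1}$ delivers $\sqrt{NT}(\bm\beta^\ast-\hat{\bm\beta})\dconvs\calN(\zero,\bm\Sigma)$.

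The main obstacle is the bootstrap Bahadur representation itself — in particular, controlling the linearization remainder uniformly over the $N$ bootstrap nuisance parameters $\alpha_i^\ast$ under $P^\ast$, which is where the rate $N(\log T)^4/T\to0$ is consumed and where the partition-invariant weighting complicates the empirical-process bounds relative to the i.i.d.\ wild bootstrap of \citet{FengHeHu11}. Once $l^{\text o}$ is pinned down by Lemma~\ref{lem:l_exists}, the variance matching and the conditional CLT are comparatively routine, and the reduction to a Kolmogorov-distance statement via P\'olya's theorem is standard.
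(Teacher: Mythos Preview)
Your proposal is correct and follows essentially the same route as the paper: establish bootstrap consistency, derive a conditional Bahadur representation for $\bm\beta^\ast-\hat{\bm\beta}$, compute the conditional variance of the linear term via the identity $\psi_\tau(u^\ast_{ijs})=\tau-I(w_{ij}<0)$ and match it to $\bm V$ through Lemma~\ref{lem:l_exists}, then apply a Lindeberg CLT and P\'olya's theorem. The paper carries out the hard step you flag---controlling the linearization remainder uniformly over the $N$ bootstrap nuisance parameters---through a sequence of auxiliary lemmas (preliminary size estimates for $\sup_i|\alpha_i^\ast-\hat\alpha_i|$, leave-one-cell-out arguments, and empirical-process bounds adapted to the partition structure), and uses the sample-based $\bar{\bm D}_N$ and $\check{\bm x}_{ijs}$ in place of your $\bm D_N$ and $\tilde{\bm x}_{ijs}$, but these are asymptotically equivalent and the overall architecture is the same.
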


The result in Theorem \ref{thm:boot} implies the consistency of the PWB for the FE-QR estimator in the case of dependent data. In the next sections, we estimate the partition length $l^{\text{o}}$ and document the performance of the feasible version of the proposed PWB estimator.

\begin{rem}
The bootstrap variance consistently estimate the asymptotic variance when the sum of the autocovariances in \eqref{eq:def-Vn} is positive semidefinite, which allows for positive or negative dependence between different time periods. However, in empirical applications where negative autocovariances dominate the variance expression, in conflict with Assumption \ref{assump:A6}, the bootstrap variance will converge to the variance obtained under i.i.d. conditions, overestimating the asymptotic variance. Thus, our approach could offer practitioners conservative statistical inference.
\end{rem}


Under moment inequalities for mixing processes and the finiteness of $l^\text{o}$, the argument for consistent i.i.d. variance estimation in \citet{LamarcheParker23} implies the finiteness of this variance estimator, with the resulting consistent/conservative description depending on whether the data obey Assumption~\ref{assump:A6}.  It is interesting to note that conservative bootstrap inference due to nonexistent bootstrap moments was recently considered in~\citet{HahnLiao21}. Also, \citet{MachadoParente5} propose an $L$-estimator of the variance that might apply under more general data conditions than those that we consider, but we leave investigation of such an estimator to future research.

\section{Simulation results}\label{sec:MC}

In this section, we report results of several simulation experiments designed to evaluate the finite sample performance of the proposed method. We consider a data generating process similar to the one considered in \cite{GalvaoGuVolgushev20}. The dependent variable is $y_{it} = \alpha_i + x_{it} + (1 + \zeta x_{it}) u_{it}$, where $x_{it} = 0.5 \alpha_i + z_i + \epsilon_{it}$, and $z_i$ is an independent and identically distributed (i.i.d.) random variable distributed as $\chi^2$ with 3 degrees of freedom ($\chi_3^2$). The corresponding quantile regression function is
$Q_y (\tau | x_{it}) = \alpha_{i0} + \beta_0 x_{it}$, where $\alpha_{i0} = \alpha_i + F^{-1}(\tau)$, $\beta_0 = 1 + \zeta F^{-1}(\tau)$, and $F(\cdot)$ denotes the CDF of $u_{it}$. By varying $\zeta \in \{0,0.25\}$, we are able to generate data from two variations of the basic model. The location-shift model assumes  $\zeta=0$ and then $\beta_0 = 1$. In the location-scale shift version of the model, $\zeta = 0.25$, and $\beta_0 = 1 + 0.25 F^{-1}(\tau)$ varies by quantile $\tau$.

We generate the observations by combining different distributions for $\alpha_i$ and $x_{it}$. As in \cite{LamarcheParker23}, we assume that the individual intercept $\alpha_i=i/N$ for $1 \leq i \leq N$, or alternatively, $\alpha_i$ is assumed to be an i.i.d. Gaussian random variable.

We depart from the simulation study in \cite{GalvaoGuVolgushev20} by allowing the errors to be serially dependent:
\begin{equation*}
    u_{it} = \rho_{1,u} u_{it-1}  + \rho_{2,u} u_{it-2} + v_{it,u}, \; \; \mbox{and} \; \; \epsilon_{it} = \rho_{1,\epsilon} \epsilon_{it-1}  + \rho_{2,\epsilon} \epsilon_{it-2} + v_{it,\epsilon}.
\end{equation*}
We consider innovation terms $v_{it,u}, v_{it,\epsilon}$ following $\mathcal{N}(0,1)$ and $t_3$ distributions and set the auto-regressive parameters $\rho_{1,u} = \rho_{1,\epsilon}=0.7$ and $\rho_{2,u} = \rho_{2,\epsilon}=0.1$ as in \cite{GregoryLahiriNordman18}.  The variance 
$\sigma_u^2 = (1 - \rho_{2,u}) \sigma_v^2 / ((1 + \rho_{2,u}) (1 - \rho_{1,u} - \rho_{2,u}) ( 1 + \rho_{1,u} - \rho_{2,u} ))$, which is approximately $2.56$ for normal innovations and three times that for $t_3$ innovations.

\begin{figure}
\begin{center}
\centerline{\includegraphics[width=1\textwidth]{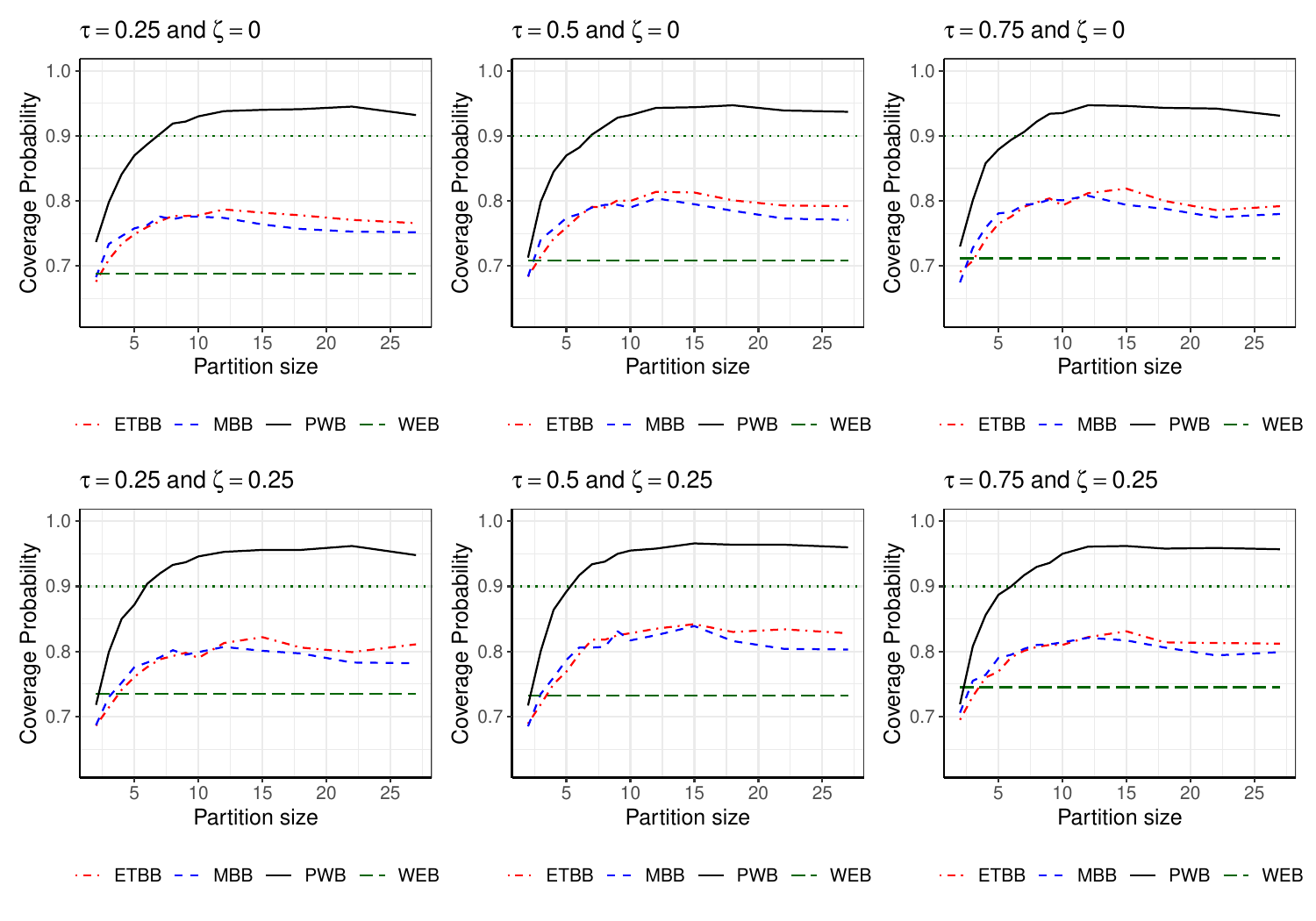}}
\caption{\emph{Coverage probability for a nominal level of 90\% when $N=5$ and $T=200$. The percentile confidence interval $\textnormal{CI}_P$ is constructed as in \eqref{eq:boot_percentile}. MBB denotes moving block bootstrap, ETBB extended tapered block bootstrap, WEB multiplier exponential block bootstrap, and PWB partitioned wild bootstrap.}\label{mc:fig1}}
\end{center}
\end{figure}

\begin{figure}
\begin{center}
\centerline{\includegraphics[width=1\textwidth]{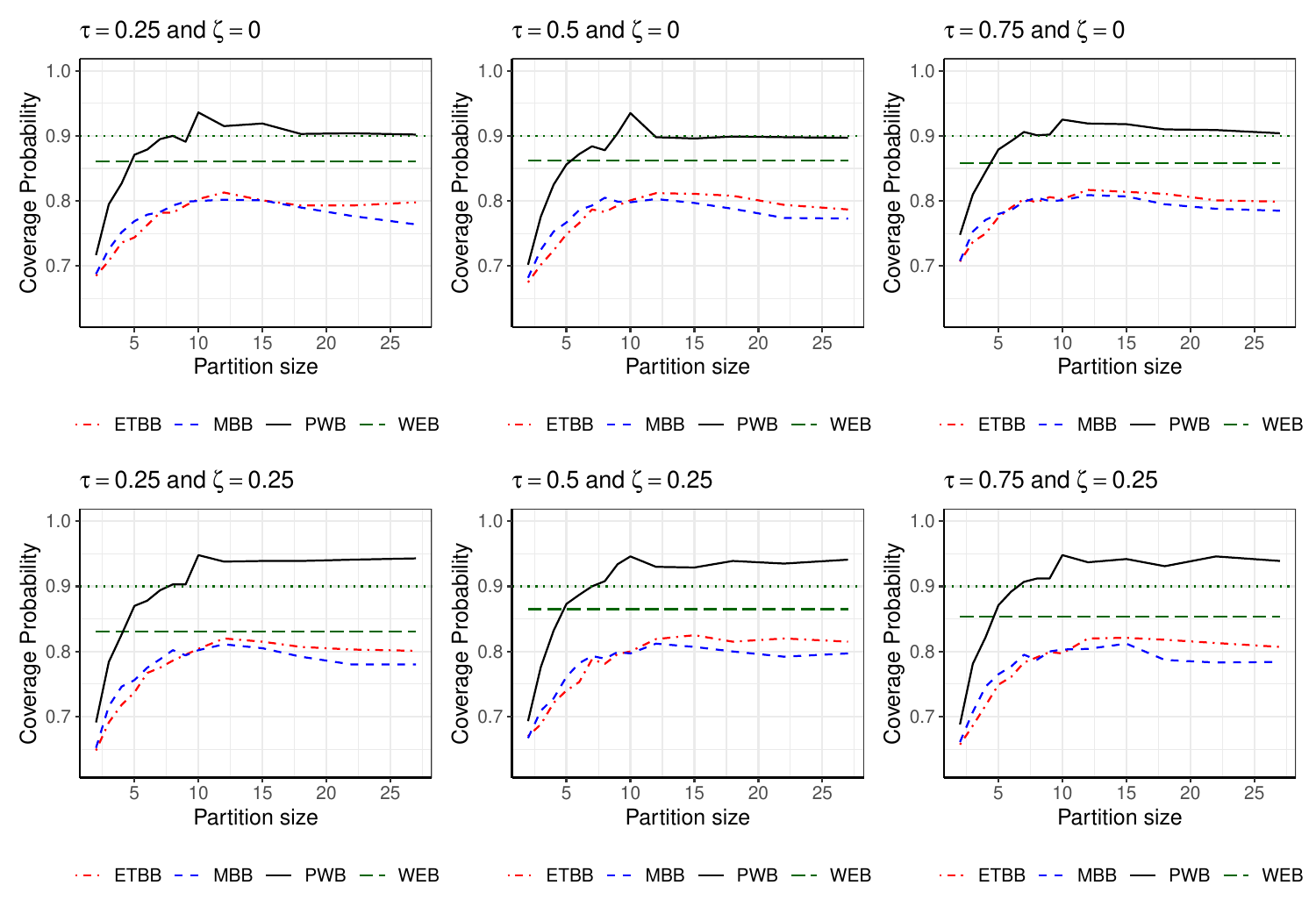}}
\caption{\emph{Coverage probability for a nominal level of 90\% when $N=50$ and $T=200$. The percentile confidence interval $\textnormal{CI}_P$ is constructed as in \eqref{eq:boot_percentile}. MBB denotes moving block bootstrap, ETBB extended tapered block bootstrap, WEB multiplier exponential block bootstrap, and PWB partitioned wild bootstrap.}\label{mc:fig2}}
\end{center}
\end{figure}

 Lastly, we use different combinations of $N \in \{5,50\}$ and $T\in \{200,400\}$ and quantile levels $\tau \in \{0.25,0.50,0.75\}$. The results are obtained by using 1000 random samples and 400 bootstrap repetitions. 
 
Figures \ref{mc:fig1} and \ref{mc:fig2} present results assuming that $\alpha_i = i/N$ and $u_{it} \sim \mathcal{N}(0,\sigma_u^2)$. The figures show coverage probabilities for a nominal level of 90\% for the slope parameter over a range of values for the size of the partition $l$. The intervals are constructed using the empirical bootstrap distribution following \eqref{eq:boot_percentile}, and consequently, we present coverages obtained using bootstrap methods. The figures show the performance of the moving block bootstrap (MBB) proposed by~\citet{Fitzenberger98}, extended tapered block bootstrap (ETBB) of~\citet{Shao10ss} generalized to quantile regression as in \citet{GregoryLahiriNordman18}, weighted block bootstrap (WEB) of~\citet{GalvaoParkerXiao24}, and the proposed PWB estimator. The first three procedures are briefly described in the online appendix. 
While the weights vary by block size in the MBB, ETBB and PWB approaches, the weight to each unit $i$ is constant in the case of WEB. The figures present results for the location-shift case in the upper panels, and the location-scale shift model in the lower panels. To understand the effect of temporal dependence in panel quantiles, we start with $N=5$ in Figure \ref{mc:fig1}, and then we increase $N$ to 50 to generate the results shown in Figure \ref{mc:fig2}. 

Figure \ref{mc:fig1} shows that the coverage of the MBB and ETBB improve as we increase the partition size from $l = 2$, although it remains roughly constant when $l \geq 10$. We observe the same pattern for PWB, but the method reaches levels closer to the target 90\% when $5 \leq l \leq 7$. The performance of WEB is not surprising, since the method relies on cross-sectional variation. We note that when $N$ increases, as in Figure \ref{mc:fig2}, the performance of WEB significantly improves and it is superior to MBB and ETBB. On the other hand, PWB offers the best coverage in all cases and the excellent performance of the approach is not restricted to a single cell size. 

\begin{table}
\begin{center}
\begin{tabular}{c c c c c c c c c c c c } \hline
        &   &  & \multicolumn{4}{c}{Bootstrap Methods} &  & \multicolumn{4}{c}{Bootstrap Methods} \\ 
$N,T$ 	&	$\tau$ 	&	\small{PO}	&	\small{MBB}	&	\small{ETBB}	&	\small{WEB}	&	\small{PWB}	&	\small{PO}	&	\small{MBB}	&	\small{ETBB}	&	\small{WEB}	&	\small{PWB}	\\   \hline
    & &  \multicolumn{10}{c}{Location shift model ($\zeta=0$)} \\
            &   &  \multicolumn{5}{c}{$\alpha_i = i/N$}  &  \multicolumn{5}{c}{$\alpha_i \sim \mathcal{N}(0,1)$} \\   \hline
5,200	&	0.25	&	0.672	&	0.745	&	0.744	&	0.756	&	0.869	&	0.673	&	0.762	&	0.757	&	0.761	&	0.895	\\
	&	0.50	&	0.698	&	0.783	&	0.744	&	0.746	&	0.865	&	0.673	&	0.746	&	0.724	&	0.743	&	0.870	\\
	&	0.75	&	0.652	&	0.752	&	0.728	&	0.738	&	0.869	&	0.657	&	0.760	&	0.742	&	0.752	&	0.882	\\
50,200	&	0.25	&	0.647	&	0.764	&	0.753	&	0.884	&	0.898	&	0.629	&	0.760	&	0.719	&	0.852	&	0.876	\\
	&	0.50	&	0.657	&	0.772	&	0.733	&	0.883	&	0.876	&	0.611	&	0.749	&	0.719	&	0.884	&	0.876	\\
	&	0.75	&	0.660	&	0.739	&	0.733	&	0.873	&	0.886	&	0.662	&	0.775	&	0.745	&	0.879	&	0.893	\\
5,400	&	0.25	&	0.649	&	0.783	&	0.751	&	0.750	&	0.905	&	0.637	&	0.753	&	0.715	&	0.750	&	0.891	\\
	&	0.50	&	0.622	&	0.752	&	0.697	&	0.727	&	0.881	&	0.644	&	0.767	&	0.709	&	0.746	&	0.892	\\
	&	0.75	&	0.649	&	0.767	&	0.720	&	0.740	&	0.882	&	0.655	&	0.759	&	0.727	&	0.735	&	0.906	\\
50,400	&	0.25	&	0.611	&	0.754	&	0.718	&	0.871	&	0.888	&	0.673	&	0.782	&	0.760	&	0.871	&	0.898	\\
	&	0.50	&	0.598	&	0.742	&	0.679	&	0.865	&	0.883	&	0.657	&	0.793	&	0.748	&	0.891	&	0.911	\\
	&	0.75	&	0.620	&	0.769	&	0.732	&	0.874	&	0.891	&	0.664	&	0.795	&	0.765	&	0.898	&	0.919	\\   \hline
         & &  \multicolumn{10}{c}{Location-scale shift model ($\zeta=0.25$)} \\ 
            &   &  \multicolumn{5}{c}{$\alpha_i = i/N$}  &  \multicolumn{5}{c}{$\alpha_i \sim \mathcal{N}(0,1)$} \\    \hline
5,200	&	0.25	&	0.704	&	0.788	&	0.768	&	0.786	&	0.920	&	0.694	&	0.755	&	0.744	&	0.761	&	0.898	\\
	&	0.50	&	0.717	&	0.795	&	0.750	&	0.805	&	0.902	&	0.714	&	0.762	&	0.728	&	0.777	&	0.893	\\
	&	0.75	&	0.711	&	0.781	&	0.755	&	0.772	&	0.914	&	0.703	&	0.778	&	0.754	&	0.770	&	0.889	\\
50,200	&	0.25	&	0.626	&	0.742	&	0.711	&	0.869	&	0.875	&	0.606	&	0.723	&	0.708	&	0.853	&	0.880	\\
	&	0.50	&	0.652	&	0.770	&	0.725	&	0.887	&	0.893	&	0.638	&	0.751	&	0.713	&	0.880	&	0.882	\\
	&	0.75	&	0.622	&	0.747	&	0.724	&	0.880	&	0.901	&	0.608	&	0.722	&	0.682	&	0.856	&	0.872	\\
5,400	&	0.25	&	0.652	&	0.750	&	0.710	&	0.766	&	0.901	&	0.663	&	0.768	&	0.719	&	0.749	&	0.903	\\
	&	0.50	&	0.675	&	0.758	&	0.697	&	0.753	&	0.895	&	0.690	&	0.764	&	0.717	&	0.759	&	0.900	\\
	&	0.75	&	0.668	&	0.779	&	0.724	&	0.776	&	0.908	&	0.707	&	0.792	&	0.744	&	0.785	&	0.905	\\
50,400	&	0.25	&	0.625	&	0.731	&	0.715	&	0.882	&	0.908	&	0.608	&	0.746	&	0.679	&	0.851	&	0.888	\\
	&	0.50	&	0.649	&	0.755	&	0.723	&	0.886	&	0.906	&	0.630	&	0.764	&	0.707	&	0.869	&	0.892	\\
	&	0.75	&	0.645	&	0.776	&	0.716	&	0.889	&	0.915	&	0.597	&	0.721	&	0.662	&	0.848	&	0.885	\\  \hline  
\end{tabular}
\end{center}
\caption{\emph{Rejection probability for a nominal level of 90 percent under correlated errors distributed as Normal. The standard-error based confidence interval $\textnormal{CI}_{SE}$ is constructed as in \eqref{eq:boot_var}. PO is the Powell Kernel estimator, MBB denotes moving block bootstrap, ETBB extended tapered block bootstrap, WEB weighted exponential block bootstrap, and PWB partitioned wild bootstrap.}}
\label{table1.mc.results}
\end{table}

\begin{table}
\begin{center}
\begin{tabular}{c c c c c c c c c c c c } \hline
        &   &  & \multicolumn{4}{c}{Bootstrap Methods} &  & \multicolumn{4}{c}{Bootstrap Methods} \\ 
$N,T$ 	&	$\tau$ 	&	\small{PO}	&	\small{MBB}	&	\small{ETBB}	&	\small{WEB}	&	\small{PWB}	&	\small{PO}	&	\small{MBB}	&	\small{ETBB}	&	\small{WEB}	&	\small{PWB}	\\   \hline
    & &  \multicolumn{10}{c}{Location shift model ($\zeta=0$)} \\
            &   &  \multicolumn{5}{c}{$\alpha_i = i/N$}  &  \multicolumn{5}{c}{$\alpha_i \sim \mathcal{N}(0,1)$} \\   \hline
5,200	&	0.25	&	0.651	&	0.759	&	0.746	&	0.777	&	0.872	&	0.626	&	0.750	&	0.710	&	0.729	&	0.865	\\
	&	0.50	&	0.679	&	0.780	&	0.737	&	0.758	&	0.877	&	0.636	&	0.742	&	0.705	&	0.702	&	0.844	\\
	&	0.75	&	0.641	&	0.772	&	0.735	&	0.783	&	0.891	&	0.624	&	0.758	&	0.715	&	0.744	&	0.856	\\
50,200	&	0.25	&	0.642	&	0.763	&	0.732	&	0.887	&	0.888	&	0.660	&	0.767	&	0.741	&	0.875	&	0.887	\\
	&	0.50	&	0.616	&	0.757	&	0.721	&	0.872	&	0.855	&	0.632	&	0.747	&	0.734	&	0.892	&	0.886	\\
	&	0.75	&	0.651	&	0.777	&	0.742	&	0.883	&	0.875	&	0.646	&	0.764	&	0.734	&	0.866	&	0.875	\\
5,400	&	0.25	&	0.633	&	0.773	&	0.721	&	0.754	&	0.890	&	0.643	&	0.769	&	0.733	&	0.751	&	0.882	\\
	&	0.50	&	0.646	&	0.752	&	0.708	&	0.745	&	0.869	&	0.644	&	0.750	&	0.708	&	0.725	&	0.868	\\
	&	0.75	&	0.622	&	0.757	&	0.719	&	0.748	&	0.885	&	0.640	&	0.754	&	0.720	&	0.740	&	0.878	\\
50,400	&	0.25	&	0.646	&	0.778	&	0.728	&	0.888	&	0.897	&	0.631	&	0.785	&	0.746	&	0.889	&	0.903	\\
	&	0.50	&	0.643	&	0.796	&	0.733	&	0.900	&	0.902	&	0.614	&	0.753	&	0.700	&	0.879	&	0.883	\\
	&	0.75	&	0.649	&	0.796	&	0.753	&	0.898	&	0.904	&	0.611	&	0.758	&	0.729	&	0.874	&	0.885	\\  \hline
         & &  \multicolumn{10}{c}{Location-scale shift model ($\zeta=0.25$)} \\ 
            &   &  \multicolumn{5}{c}{$\alpha_i = i/N$}  &  \multicolumn{5}{c}{$\alpha_i \sim \mathcal{N}(0,1)$} \\    \hline
5,200	&	0.25	&	0.694	&	0.780	&	0.754	&	0.790	&	0.889	&	0.657	&	0.743	&	0.725	&	0.736	&	0.877	\\
	&	0.50	&	0.715	&	0.764	&	0.739	&	0.792	&	0.879	&	0.708	&	0.782	&	0.742	&	0.801	&	0.896	\\
	&	0.75	&	0.686	&	0.771	&	0.740	&	0.792	&	0.891	&	0.686	&	0.787	&	0.735	&	0.781	&	0.896	\\
50,200	&	0.25	&	0.598	&	0.725	&	0.681	&	0.850	&	0.859	&	0.590	&	0.709	&	0.694	&	0.854	&	0.856	\\
	&	0.50	&	0.613	&	0.774	&	0.698	&	0.878	&	0.886	&	0.668	&	0.765	&	0.736	&	0.891	&	0.888	\\
	&	0.75	&	0.627	&	0.743	&	0.710	&	0.865	&	0.880	&	0.609	&	0.744	&	0.694	&	0.871	&	0.875	\\
5,400	&	0.25	&	0.649	&	0.782	&	0.748	&	0.786	&	0.902	&	0.681	&	0.773	&	0.737	&	0.777	&	0.892	\\
	&	0.50	&	0.694	&	0.776	&	0.747	&	0.780	&	0.889	&	0.708	&	0.783	&	0.731	&	0.781	&	0.886	\\
	&	0.75	&	0.663	&	0.789	&	0.727	&	0.763	&	0.897	&	0.631	&	0.756	&	0.694	&	0.767	&	0.884	\\
50,400	&	0.25	&	0.639	&	0.768	&	0.717	&	0.877	&	0.894	&	0.566	&	0.724	&	0.666	&	0.864	&	0.874	\\
	&	0.50	&	0.629	&	0.781	&	0.729	&	0.894	&	0.900	&	0.640	&	0.766	&	0.723	&	0.886	&	0.885	\\
	&	0.75	&	0.591	&	0.727	&	0.667	&	0.849	&	0.868	&	0.572	&	0.715	&	0.676	&	0.847	&	0.865	\\  \hline  
\end{tabular}
\end{center}
\caption{\emph{Rejection probability for a nominal level of 90 percent under correlated errors distributed as $t_3$. The standard-error based confidence interval $\textnormal{CI}_{SE}$ is constructed as in \eqref{eq:boot_var}. PO is the Powell Kernel estimator, MBB denotes moving block bootstrap, ETBB extended tapered block bootstrap, WEB weighted exponential block bootstrap, and PWB partitioned wild bootstrap.}}
\label{table2.mc.results}
\end{table}

The evidence reveals that by judiciously selecting the partition size, the empirical coverage of the PWB method can reach the target 90\% coverage level. Naturally, the best partition size is unknown, so we now extend our investigation to the feasible version of the estimator. Tables \ref{table1.mc.results} and \ref{table2.mc.results} present coverage probabilities for a nominal level of 90\% obtained by estimating the variance of the estimator as in \eqref{eq:boot_var}. Table \ref{table1.mc.results} presents results for errors distributed as Normal, and Table \ref{table2.mc.results} presents results for errors distributed as $t_3$. Relative to Figures \ref{mc:fig1} and \ref{mc:fig2}, we are able to expand the evidence to consider different sample sizes, alternative methods for inference on the parameters (i.e., $\textnormal{CI}_{P}$ vs $\textnormal{CI}_{SE}$), and different distributions. The first column shows coverage of the Powell's estimator (PO) of the asymptotic covariance matrix of the FE-QR estimator. PO has been shown to be consistent under the assumption that the data is i.i.d. (Proposition 3.1 in \cite{Kato2012}). In the next columns, we present the bootstrap approaches. For the MBB and ETBB, we use the {\tt R} package {\tt QregBB} for determining the block length as discussed in \cite{GregoryLahiriNordman18}. In the case of PWB, coverage is obtained for the selected partition length $\hat{l}$ (summaries are shown in Figure \ref{mc:fig3} below). 

In all the variations of the model considered in Table \ref{table1.mc.results}, the PWB estimator performs better than the other estimators. Considering for instance the location-shift model, PO does not perform well, although this is expected because the estimator does not account for the temporal dependence of the errors. The weights used by the MBB and ETBB methods are time-varying because they are primarily intended to be used with time-series data.  They are not very well suited for panel data because of the inclusion of time-invariant indicators of unit membership~--- there is no transformation of the data that can be made beforehand to remove individual effects.  As a result, the MBB and ETBB methods undercover. WEB performs well only when $N$ is large relative to $T$, as expected from the evidence presented in Figure \ref{mc:fig1}. 

The results for the location-scale shift model presented in the second panel in Table \ref{table1.mc.results} are similar. We continue to see that PWB performs better than alternative methods. This conclusion holds when we consider the evidence in Table \ref{table2.mc.results}. The evidence confirms two results. The PWB offers better coverage than WEB in models with $N$ small relative to $T$ and the procedure proposed in this paper is valid for approximating the distribution of the FE-QR estimator. 

\begin{figure}
\begin{center}
\centerline{\includegraphics[width=1\textwidth]{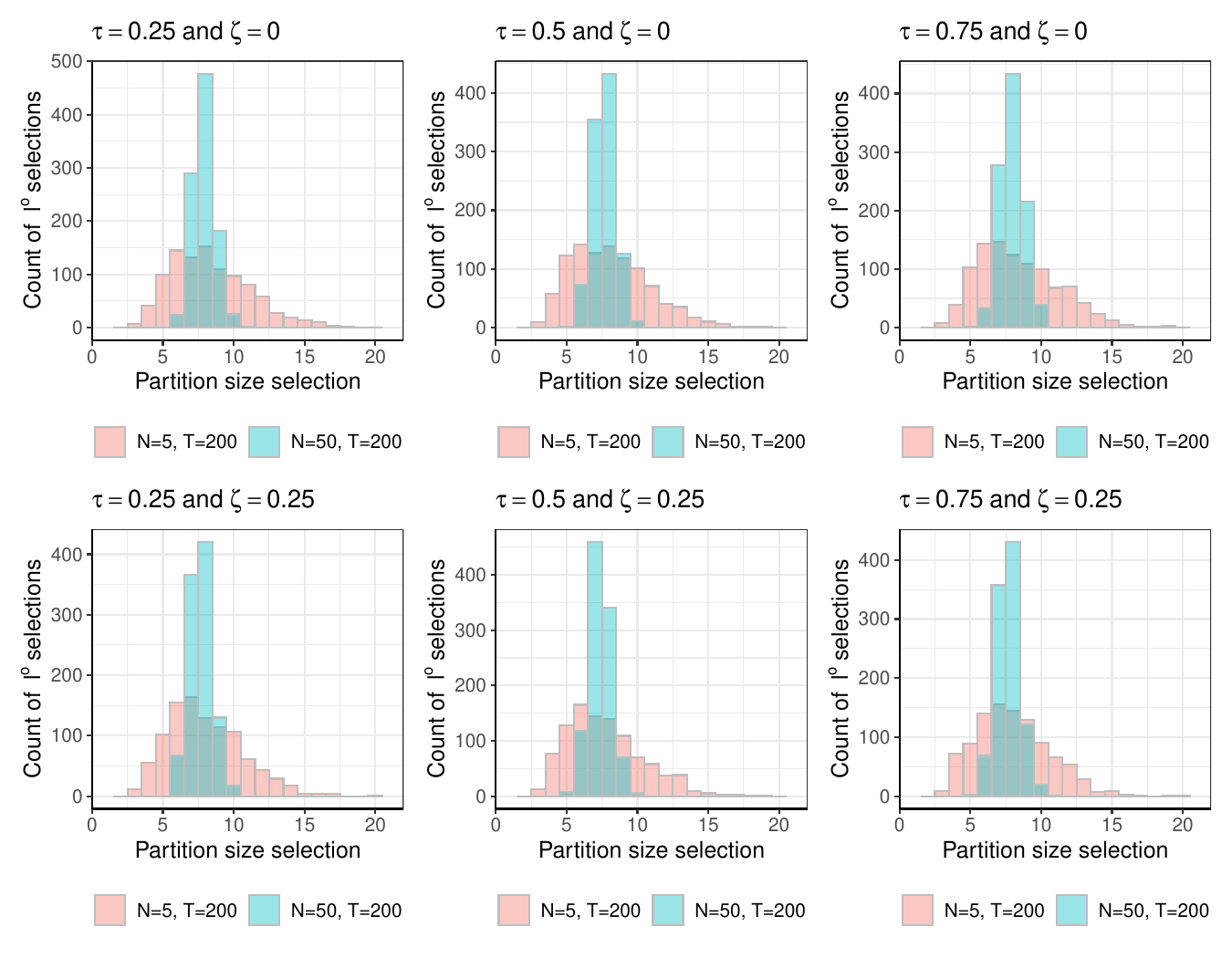}}
\caption{\emph{Count of partition size selections for the PWB. The figure reports selected length size over 1000 samples corresponding to the results in Table \ref{table1.mc.results} when $\alpha_i = i/N$.}\label{mc:fig3}}
\end{center}
\end{figure}


Finally, using Figure \ref{mc:fig3}, we present the frequency of partition size selections for PWB corresponding to $N \in \{5,50\}$, $T=200$ and $\alpha_i = i/N$ in Table \ref{table1.mc.results}. The frequencies of $\hat{l}$ corresponding to the other entries in Tables \ref{table1.mc.results} and \ref{table2.mc.results} are similar and we do not report them to save space. We first estimate the residuals $\hat{u}_{it}$ using the FE-QR estimator and construct $\check{x}_{it} \psi_\tau(\hat{u}_{it}) \check{x}_{it+k} \psi_\tau(\hat{u}_{it+k})$ for all $k$ considering $\check{x}_{it} = x_{it} - \bar{x}_i$, where $\bar{x}_i = \sum_{t=1}^T x_{it}/T$. This choice does not involve estimation of nuisance parameters and performed well in the simulations. We also employed a triangular kernel $(1-|k|/h)$ with $h=1$. For each $1 \leq i \leq N$, equation \eqref{lhats} was evaluated at each length $l$ in the contained set $\{1,2,\hdots,L\}$, where $L$ was set to 25 to minimize computational time. We then selected the partition size $\hat{l}_i$ that best fit $\eqref{lhats}$. Finally, we obtained $\hat{l}$ by averaging over individual partition sizes.

The excellent performance of PWB in Tables \ref{table1.mc.results} and \ref{table2.mc.results} is conditional on estimates of the partition size obtained from this selection rule, and thus, the evidence confirms that the variance can be approximated empirically by selecting the length of the partition. As expected based on the evidence reported in Figure \ref{mc:fig1}, Figure \ref{mc:fig3} shows that the highest relative frequencies are obtained for values of $\hat{l}$ between 6 and 9. Moreover, we observe that the precision of the selection rule quickly increases when $N$ increases. For instance, the figure shows that over 70\% of the selections are in the range $7 \leq \hat{l} \leq 8$ when $N=50$, consistent with the values of $l$ that produce coverage probabilities close to the nominal level of 90\% in Figure \ref{mc:fig2}.

\section{Empirical illustration}\label{sec:application}

In this section, we illustrate the use of the proposed approach by employing data from a randomized control trial on electricity consumption. We estimate a panel data quantile regression model and apply the method to evaluate how consumers respond to time-of-use  electricity pricing. We use a data set that includes $N = 268$ customers observed over $T = 2,160$ time intervals of 30 minutes. Despite the increasing number of studies using household-level panel data with large $T$ \citep[see][among others]{kJessoe2014,mHarding2016}, dependence within household has been ignored in the empirical literature on quantile regression. Our findings suggest that households do not respond to a modest change in the price of electricity that occurs during the day. However, when the price of electricity increases by 85 percent during the evening, high-usage households reduce their consumption by about 10 percent. If we compare existing approaches with the proposed approach, we find that practitioners would tend to conclude incorrectly that the effects of modest and large changes in the price of electricity are significant across the conditional distribution if they ignore the positive temporal dependence in each household's observations.

\subsection{Data and model}

We use the CER Smart Metering Project data from the Irish Social Science Data Archive (ISSDA). The experiment was conducted from 2008 to 2011 and we employ data from the period first two months of 2010. Electricity consumption was recorded over 30 minute intervals from Monday to Friday at the residential level. We consider only two treatment types in this paper. The customers selected for the control group had a time-invariant rate of \EUR{0.141} per kilowatt hour (kwh). Customers selected for the treatment group were charged \EUR{0.135} per kwh (Day), with the exception of \EUR{0.110} from 23:00 to 8:00 (Night) and \EUR{0.260} from 17:00 to 19:00 (Peak). Households in the treatment group ($N_1=68$) received an in-home display (IHD) device and an energy usage statement in their bills. Households in the control group ($N_0=200$) did not receive an IHD. 

We estimate the following panel data model:    
\begin{equation}
y_{it} = \alpha_{i} + \beta_1 d_{t,1} + \beta_2 d_{t,2} + \beta_3 (d_i \times d_{t,1}) + \beta_4 (d_i \times d_{t,2}) +  \bm{x}_{it}' \bm{\delta} + u_{it},  \label{mainE}
\end{equation} 
where $y_{igt}$ is the natural logarithm of electricity usage for household $i$ during the  interval $t$. The variable $d_i$ indicates treatment status and its effect is not identified because the model includes household fixed effects, $\alpha_i$. We can identify, however, the effect of being treated at different times during the day. The variable $d_{t,1}$ denotes whether $t \in [8-17] \cup [19-23]$ and $d_{t,2}$ denotes if $t \in [17-19]$. The coefficients of interest $(\beta_4(\tau), \beta_5(\tau))$ are identified by the time variation associated with time-of-use pricing across households.

The vector of independent variables $\bm{x}_{it}$ includes average temperature and average relative humidity in Ireland, which are household-invariant regressors. We also include the logarithm of electricity consumption 30 minutes earlier and the logarithm of electricity consumption a day earlier, which accounts for the correlation of electricity usage across weekdays due to daily routines. We have household size, size of the house, and other characteristics of the household and home in the CER data, but we do not include them in the model because these variables are time-invariant and the model includes household fixed effects. 

\begin{singlespace}
\begin{table}
\begin{small}
\begin{center}
\begin{tabular}{l c c c c c c  }
\hline
\multicolumn{1}{l}{} & \multicolumn{5}{c}{Quantile Regression} & \multicolumn{1}{c}{Mean} \\ 
\multicolumn{1}{l}{} & 0.10 & 0.25 & 0.50 & 0.75 & 0.90 & \\ \hline
Day (8am-5pm \& 7-11pm)  &0.072&0.086&0.089&0.192&0.392&0.198\\
&(0.006)&(0.005)&(0.004)&(0.005)&(0.009)&(0.002) \\
Peak (5-7pm) &0.326&0.247&0.211&0.371&0.568&0.393\\
&(0.008)&(0.007)&(0.006)&(0.008)&(0.011)&(0.004) \\
Day (8am-5pm \& 7-11pm) $\times$ Treatment &-0.028&-0.005&-0.019&-0.011&0.002&-0.021\\
&(0.011)&(0.009)&(0.006)&(0.009)&(0.015)&(0.004) \\
Peak (5-7pm) $\times$ Treatment &-0.035&-0.010&-0.024&-0.062&-0.100&-0.055\\
&(0.016)&(0.012)&(0.010)&(0.015)&(0.021)&(0.008) \\  \hline
Control variables &Yes&Yes&Yes&Yes&Yes&Yes\\
Household fixed effects &Yes&Yes&Yes&Yes&Yes&Yes\\
Selected partition size, $\hat{l}$ & 5 & 5 & 5 & 5 & 6 & - \\ 
$T$                          & 2160   & 2160  & 2160 & 2160 & 2160 & 2160  \\
$N \times T$                 & 578880 & 578880 & 578880 & 578880  & 578880 & 578880 \\  \hline
\end{tabular}
\end{center}
\caption{\emph{Fixed effects results for a model of electricity consumption. The first five columns present FE-QR results and the last column presents results from estimating a conditional mean model with fixed effects. Standard errors (in parentheses) are obtained by the proposed PWB method.}}
\label{table1.results}
\end{small}
\end{table}
\end{singlespace}

\subsection{Empirical results}

Table \ref{table1.results} presents fixed effects results for the coefficients $\beta_j$ for $j \in \{1,2,3,4\}$ and standard errors corresponding to the point-estimates in parentheses. The standard errors are obtained using the proposed partitioned wild bootstrap (PWB) procedure. The selected partition size, $\hat{l}$, is constant and equal to 5 across quantiles, with the exception of $\tau=0.9$. The first five columns show quantile regression (FE-QR) results at different quantile levels and the last column presents mean fixed effects regression results. To save space, we do not present results on the control variables included in the regression but the results are available upon request. To examine in more detail the performance of the proposed approach in practice, Figure \ref{fig3} compares confidence intervals obtained by alternative approaches. 

Looking at the mean fixed effects results in the last column, we see, as expected, that electricity consumption increases during the day and peak hours relative to night hours. These increases are significant and large, ranging from $\exp(0.393)-1 = 22\%$ percent during the day to 48 percent during peak hours.  When we compare the mean effect with quantile results shown in the first columns, we observe that these estimated effects vary significantly across quantiles. For instance, if we consider consumption during the day, the effect varies from 7 percent at the 0.1 quantile to 48 percent at the 0.9 quantile. When we focus our attention on the parameters of interest, we find results that are consistent with expectations. Recall that the households in the treatment group were charged a slightly lower price during the day than households in the control group, so it is not entirely surprising to find insignificant results across the conditional distribution. On the other hand, when we consider the effect of the price increase during peak hours, we find significant decreases in electricity consumption in the upper half of the distribution, ranging from a modest 2 percent at the 0.5 quantile to 10 percent at the 0.9 quantile. Consistent with the literature, we find that households reduced their usage and are responsive to temporary price increases. 

\begin{figure}
\begin{center}
\centerline{\includegraphics[width=1\textwidth]{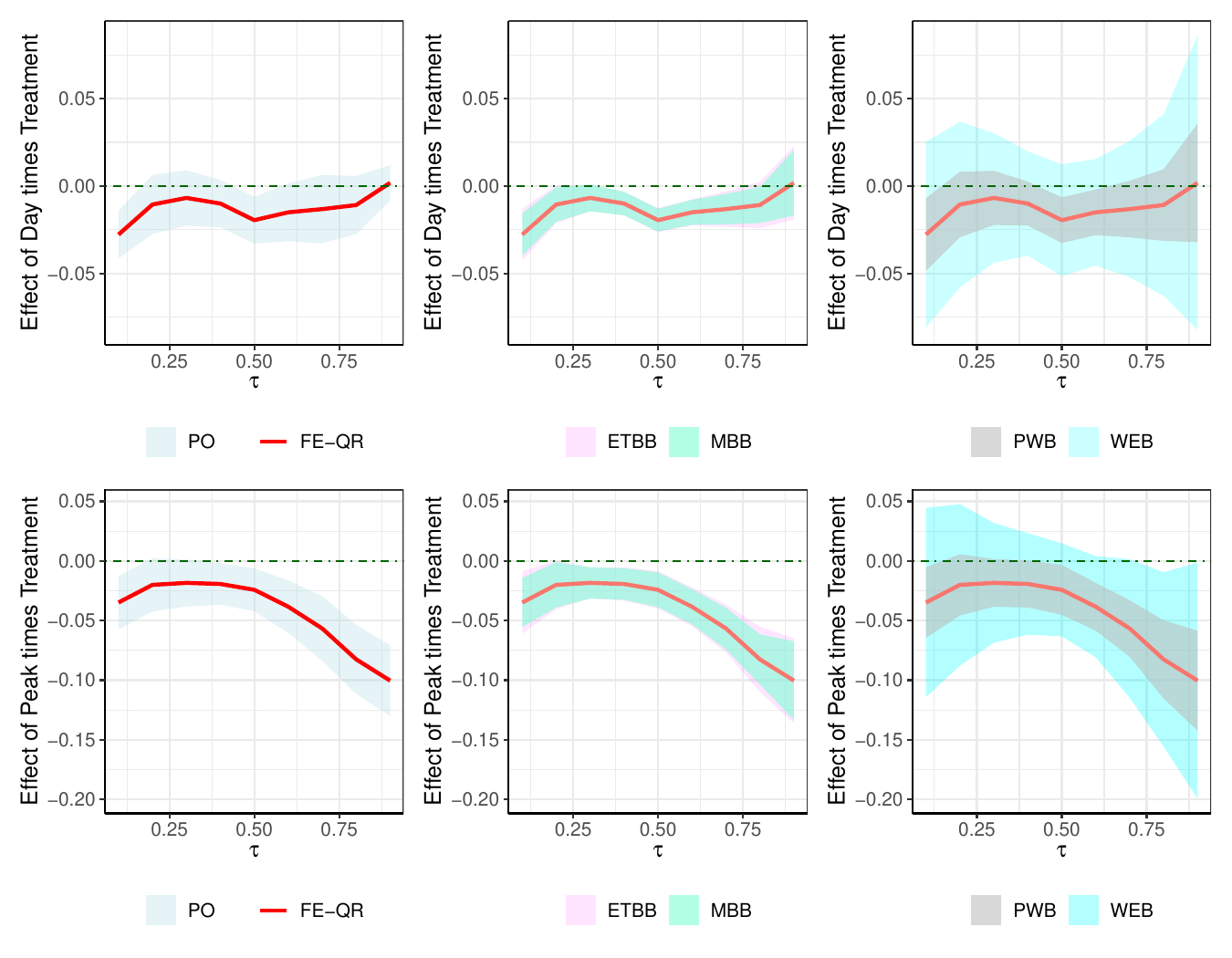}}
\caption{\emph{Time-of-Use effects across quantiles of electricity pricing. FE-QR denotes fixed effects quantile regression and the colored areas show 95\% point-wise confidence intervals. PO is the Powell's Kernel estimator for the asymptotic variance of the FE-QR estimator, MBB denotes moving block bootstrap, ETBB extended tapered block bootstrap, WEB weighted block bootstrap, and PWB partitioned wild bootstrap. }\label{fig3}}
\end{center}
\end{figure}

Figure \ref{fig3} presents fixed effects quantile regression results for $(\beta_3(\tau), \beta_4(\tau))$ for 10 equally spaced $\tau$ in the interval 0.10-0.90. The upper panels present results for the effect of consumption over the day and the lower panels present results for the effect of consumption over the peak hours. In each row, we compare 95 percent pointwise confidence intervals obtained by the proposed PWB procedure and the alternative methods for inference available to practitioners. The left panels show results obtained by estimating the asymptotic covariance matrix under two different assumptions. PO denotes the Kernel estimator considered in Proposition 3.1 in \cite{Kato2012} which assumes i.i.d. data. The next panels show the bootstrap approaches MBB and ETBB applied to panel data.  Finally, the panels in the right show confidence intervals obtained by the proposed PWB estimator and the weighted block bootstrap estimator (WEB) proposed in \cite{GalvaoParkerXiao24}. The WEB method uses one weight per unit, an i.i.d. non-negative random weight with mean and variance both equal to one.

The results in Figure \ref{fig3} highlight the importance of the proposed approach in practice. By ignoring temporal dependence, the kernel estimators produce overly optimistic results, suggesting that the increase in the price during peak hours on electricity usage is negative and significant across the conditional distribution. The application of the bootstrap methods MBB and ETBB lead to a similar observation.  As previously discussed, they most likely underrepresent the size of the confidence intervals as well.  On the other hand, the WEB and PWB methods offer more conservative inference, suggesting that the effects are only significant in the upper tail. Finally, we find significant efficiency improvements from adopting PWB relative to WEB. We believe that this is due to the fact that the treatment varies predictably over time and the absolute price changes are important. 

\section{Conclusion}\label{sec:conclusion}

Statistical inference for quantile regression models of panel data has been a challenge in practice. Existing resampling approaches do not incorporate temporal dependence, which is usually an important concern when practitioners analyze data with a large number of time-series observations. To address this issue, we propose a novel bootstrap method that is easy to implement. The approach re-weights cells of a partition of the estimated residuals to simultaneously satisfy quantile moment conditions conditional on the data and account for how temporal dependence affects standard errors.

We demonstrate that the partitioned wild bootstrap procedure is asymptotically valid for approximating the distribution of the fixed effects estimator. We also investigate the finite sample performance of the approach and find that the new procedure works well. Although we offer an important contribution for panel data observed over many time periods, there are some directions that remain to be investigated. In this work, we focus on the fixed effects estimator, leaving aside the penalized estimator for panel data. We conjecture that the approach can be easily extended to cover the case of penalized estimation of individual effects. We do expect important changes in the consistency result, but we do not offer a formal treatment.  We also note that this bootstrap relies mainly on the existence of separable residuals that can be weighted, and therefore this technique might be applied to other scenarios beyond estimation of a quantile regression model. We leave these investigations to future work.

\newpage

\appendix

\section{Proof of the main results}

The notation $\bm{X}^* \stackrel{p^*}{\longrightarrow} \bm{X}$ denotes convergence in probability of $\bm{X}^*$ to $\bm{X}$ under the resampling distribution, conditional on the observed sample $\bm{S}$.  Similarly, $\exs{\cdot} = \ex{ \cdot | \bm{S}}$ and $\mathrm{P}^*\{\cdot\} = P\{\cdot | \bm{S}\}$ denote the expected value operator and probability calculated conditional on the data, and stochastic order symbols $\Op{\cdot}$ and $\op{\cdot}$ are the usual ones while $\Ops{\cdot}$ and $\ops{\cdot}$ are under the bootstrap distribution conditional on the observed sample. 

Throughout the appendix, $\rho_{\tau}(u) = u(\tau - I(u < 0))$ is the quantile regression loss function and $\psi_\tau(u) = \tau - I(u < 0)$ is the associated score function. We use the term Knight's (1998)\nocite{kK98} identity to refer to the equality $\rho_{\tau}(u-v) - \rho_{\tau}(u) = - v \psi_{\tau}(u) + \int_{0}^{v} (I(u \leq s) - I(u \leq 0)) \ud s$. We define $\bm{\theta}_i = (\bm{\beta}', \alpha_i)'$ for each $i$, $\bm{\alpha} = (\alpha_1, \ldots, \alpha_N)$ and $\bm{\theta} = (\bm{\beta}', \bm{\alpha}')'$ and we suppress their dependency on $\tau$ for notational simplicity.  Let $\bm{z}_{it} = (\bm{x}_{it}', 1)' \in \R^{p+1}$ to conform with $\bm{\theta}_i$.  On the other hand, we maintain the notation $\tilde{\bm{x}}_{it} = \bm{x}_{it} - \varphi_i^{-1} \bm{g}_i$ and its sample counterpart $\check{\bm{x}}_{it} = \bm{x}_{it} - \bar{\varphi}_i^{-1} \bar{\bm{g}}_i$.

Finally, we will use empirical process results repeatedly from~\citet{Kato2012} and~\citet{GalvaoGuVolgushev20}.  References to these articles will be labeled `KGMR' and `GGV' respectively in the text below.  We use $C$ to refer to a constant that does not depend on the parameters or data, and may not be the same constant from line to line.

\begin{proof}[Proof of Lemma~\ref{lem:l_exists}]
Define $\gamma_i^X(k) = \ex{\tilde{\bm{x}}_{it} \tilde{\bm{x}}_{it+k}'}$, $\pi_i(k) = \prob{u_{it} < 0, u_{it+k} < 0 | \bm{x}_{it}, \bm{x}_{it+k}}$ and note that we can write $\gamma_i(k) = \ex{ (\pi_i(k) - \tau^2) \tilde{\bm{x}}_{it} \tilde{\bm{x}}_{it+k}'}$.  These are well-defined and finite for all $i$ and all lags $k = 1, \ldots, T-1$. Fix $N, T$ and suppose they are large enough that $\bm{V}_{NT} - \bm{V}_{NT}^0$ is positive definite.  Then define
\begin{equation*}
  \Gamma_N = \frac{1}{N} \sum_{i=1}^N \begin{bmatrix} \gamma_i(0) & \gamma_i(1) & \ldots & \gamma_i(T-1) \\ \gamma_i(1) & \gamma_i(0) & \ldots & \vdots \\ \vdots & {} & \ddots & {} \\ \gamma_i(T-1) & \ldots & {} & \gamma_i(0) \end{bmatrix}
\end{equation*}
and
\begin{equation*}
  \Gamma^X_N = \frac{1}{N} \sum_{i=1}^N \begin{bmatrix} \gamma_i^X(0) & \gamma_i^X(1) & \ldots & \gamma_i^X(T-1) \\ \gamma_i^X(1) & \gamma_i^X(0) & \ldots & \vdots \\ \vdots & {} & \ddots & {} \\ \gamma_i^X(T-1) & \ldots & {} & \gamma_i^X(0) \end{bmatrix}.
\end{equation*}

This makes it convenient to re-express the difference $\mathcal{V}_N(l) - (\bm{V}_{NT} - \bm{V}_{NT}^0)$.  Let $e_k$ be the $k$-th standard basis vector in $\R^T$ and define $E_k = e_k \otimes I_p$.  Note that $\bm{V}_{NT}^0 = \frac{1}{N} \sum_i \gamma_i(0) = \frac{\tau(1-\tau)}{N} \sum_i \gamma_i^X(0)$.  Write, for $l = 2, \ldots, T$,
\begin{align}
  \mathcal{V}_N(l) - (\bm{V}_{NT} - \bm{V}_{NT}^0) &= \tau (1 - \tau) \sum_{k=1}^l E_k' \Gamma_N^X \sum_{k=1}^l E_k - \sum_{k=1}^T E_k' \Gamma_N \sum_{k=1}^T E_k \notag \\
  {} &= \sum_{k=1}^l E_k' \left( \tau (1 - \tau) \Gamma_N^X - \Gamma_N \right) \sum_{k=1}^l E_k - \sum_{k=l+1}^T E_k' \Gamma_N \sum_{k=l+1}^T E_k, \label{exists_middle}
\end{align}
with the convention that the final sum is empty when $l = T$.

Note that $\mathcal{V}(1) - (\bm{V}_{NT} - \bm{V}_{NT}^0) \leq \zero$, which is ensured by Assumption~\ref{assump:A6}.  Considering $\mathcal{V}(T) - (\bm{V}_{NT} - \bm{V}_{NT}^0)$, note that for a lag length $k$, the corresponding $p \times p$ block of $\tau(1-\tau) \Gamma_N^X - \Gamma_N$ is
\begin{equation} \label{eq:psd_difference}
    \frac{1}{N} \sum_{i=1}^N \ex{ (\tau - \tau^2) \tilde{\bm{x}}_{it} \tilde{\bm{x}}_{it+k} - (\pi_i(k) - \tau^2) \tilde{\bm{x}}_{it} \tilde{\bm{x}}_{it+k} } = 
    \frac{1}{N} \sum_{i=1}^N \ex{ (\tau - \pi_i(k)) \tilde{\bm{x}}_{it} \tilde{\bm{x}}_{it+k} } \geq \zero,
\end{equation}
where the inequality comes from the second part of Assumption~\ref{assump:A6}. Because $\ex{\tilde{\bm{x}}_{it} \tilde{\bm{x}}_{it+k}'}$ is finite and mixing by Assumption~\ref{assume:stationary}, the individual summands decrease to zero as $T$ increases and $\lim_{N,T} \sum_{k=1}^T E_k' \tau(1-\tau) \Gamma_N^X \sum_{k=1}^T E_k$ is finite.  Similarly, $\lim_{N,T} \sum_{k=1}^T E_k' \Gamma_N \sum_{k=1}^T E_k$ is a finite matrix.  Furthermore, as long as there is no perfect dependence between the variables, which would violate the nondegeneracy of $\bm{V}$ assumed in Assumption~\ref{assume:Avar_dependent}, we have
\begin{equation*}
   \lim_{N, T} \sum_{k=1}^T E_k' \tau(1-\tau) \Gamma_N^X \sum_{k=1}^T E_k > \lim_{N,T} \sum_{k=1}^T E_k' \Gamma_N \sum_{k=1}^T E_k.
\end{equation*}
Finally, Assumption~\ref{assume:stationary} implies that the tail of the sequence of covariance matrices contributes a negligible amount to the sum, so there must exist some $L < \infty$ such that we can replace the sum of $T$ terms on the left-hand side of the above display with a finite sum of $L$ terms, that is, such that
\begin{equation*}
   \lim_N \sum_{k=1}^L E_k' \tau(1-\tau) \Gamma_N^X \sum_{k=1}^L E_k > \lim_{N,T} \sum_{k=1}^T E_k' \Gamma_N \sum_{k=1}^T E_k.
\end{equation*}
\end{proof}

\begin{proof}[Proof of Lemma~\ref{lem:lhat}]
As $N,T \rightarrow \infty$, the sample analogs $\bar{\varphi}_i$ and $\bar{\bm{g}}_i$ converge in probability to their population counterparts, implying that $\check{\bm{x}}_{it}$ converges to $\tilde{\bm{x}}_{it}$ in probability.  Also as $b \rightarrow \infty$, each matrix on the left-hand side of~\eqref{eq:lhat_choose} converges in probability to $\ex{\tilde{\bm{x}}_{it} \tilde{\bm{x}}_{it+k}'}$.  Similarly, \citet{GalvaoYoon24} show that the right-hand side converges in probability to one half of $\lim_{N,T \rightarrow \infty} (\bm{V}_{NT} - \bm{V}_{NT}^0)$.  As a result, the left-hand side of~\eqref{eq:lhat_choose} converges in probability to one half of the $\mathcal{V}$ of Lemma~\ref{lem:l_exists}.  Display~\eqref{eq:lhat_choose} implicitly defines an estimating function, and Lemma~\ref{lem:l_exists} implies that it takes a finite number of values over $\{1, 2, \ldots, L\}$, further implying that this equation converges uniformly to a population counterpart that is defined implicitly in Lemma~\ref{lem:l_exists}.  Therefore the $\hat{l}$ of the sample equation that sets it closest to zero must converge in probability to one such $l^{\text{o}}$ of the population equation.
\end{proof}

\begin{proof}[Proof of Theorem~\ref{thm:boot}]
$\bm{\theta}^\ast$ is the minimizer of 
  \begin{equation*}
    \MM_{NT}^\ast(\bm{\theta}) = \frac{1}{N} \sum_{i=1}^N \MM_{i}^\ast(\bm{\theta}_i) = \frac{1}{N} \sum_{i=1}^N \left[ \frac{1}{b l} \sum_{j=1}^b \sum_{s=1}^l \rho_{\tau} \left( y_{ijs}^\ast - \bm{x}_{ijs}^\prime \bm{\beta} - \alpha_i \right) \right],
  \end{equation*}
	where $y_{ijs}^\ast = \bm{x}_{ijs}' \hat{\bm{\beta}} + \hat{\alpha}_{i} + w_{ij} | \hat{u}_{ijs} |$. Define $\Delta_{i}^*(\bm{\theta}_i) = \mathbb{M}_{i}^*(\bm{\theta}_i) - \mathbb{M}_{i}^*(\bm{\theta}_{i0})$.  
 
    Let $\bm{\theta}^\circ$ be an infeasible estimator, defined as the minimizer of 
  \begin{equation*}
    \MM_{NT}^\circ(\bm{\theta}) = \frac{1}{N} \sum_{i=1}^N \MM_{i}^\circ(\bm{\theta}_i) = \frac{1}{N} \sum_{i=1}^N \left[ \frac{1}{b l} \sum_{j=1}^b \sum_{s=1}^l \rho_{\tau} \left( y_{ijs}^\circ - \bm{x}_{ijs}^\prime \bm{\beta} - \alpha_i \right) \right],
  \end{equation*}
	where $y_{ijs}^\circ = \bm{x}_{ijs}' \bm{\beta}_0 + \alpha_{i0} + w_{ij} | u_{ijs} |$.  Similarly define $\Delta_{i}^\circ(\bm{\theta}_i) = \MM_{i}^\circ(\bm{\theta}_i) - \MM_{i}^\circ(\bm{\theta}_{i0})$. This function is convex and is minimized at $\bm{\theta}_{i0}$.

Under Assumptions \ref{b:Gquan}-\ref{b:Gint}, for each $1 \leq i \leq N$,
  \begin{multline}
    \sup_{\bm{\theta}_i \in \RR^{p+1}} |\Delta_{i}^*(\bm{\theta}_i) - \Delta_{i}^\circ(\bm{\theta}_i)| = \Bigg| \frac{1}{bl} \sum_{j=1}^b \sum_{s=1}^l \Big( \rho_\tau \left( w_{ij} |\hat{u}_{ijs}| - \bm{x}_{ijs}'(\bm{\beta} - \hat{\bm{\beta}}) - (\alpha_i - \hat{\alpha}_i) \right) \\
    - \rho_\tau \left( w_{ij} |u_{ijs}| - \bm{x}_{ijs}'(\bm{\beta} - \bm{\beta}_0) - (\alpha_i - \alpha_{i0}) \right) \\
    - \left( \rho_\tau(w_{ij} |\hat{u}_{ijs}|) - \rho_\tau(w_{ij} |u_{ijs}|) \right) \Big) \Bigg| \\
    \leq M \left( 1 + \frac{2}{b} \sum_{j=1}^b |w_{ij}| \right) \left\| \hat{\bm{\beta}} - \bm{\beta}_0 \right\| + \left( 1 + \frac{2}{b} \sum_{j=1}^b |w_{ij}| \right) \left| \hat{\alpha}_i - \alpha_{i0} \right|.
  \end{multline}

  Using the consistency of $\hat{\bm{\theta}}_i$ given by Lemma \ref{lem:feqr}, the average of these differences over $1 \leq i \leq N$ is $\ops{1}$ as $N,T \rightarrow \infty$ and also
  \begin{equation} \label{deltas_close}
    \sup_{\bm{\theta}_i \in \RR^{p+1}} \left| \left[ \Delta_{i}^*(\bm{\theta}_i) - \exs{\Delta_{i}^*(\bm{\theta}_i) } \right] - \left[ \Delta_{i}^\circ(\bm{\theta}_i) - \exs{ \Delta_{i}^\circ(\bm{\theta}_i) } \right] \right| = \ops{1}.
  \end{equation}
  Therefore if we can show consistency for the minimizer $\bm{\theta}^\circ$, it will imply that of $\bm{\theta}^\ast$.

  Define $\mathcal{B}_i(\phi) := \{ \bm{\theta}_i : \| \bm{\theta}_i - \bm{\theta}_{i0} \|_1 \leq \phi \}$, a $\phi$-ball around $\bm{\theta}_{i0}$ for some $\phi > 0$.  We next verify that $\Delta_{i}^\circ(\bar{\bm{\theta}}_i) - \exs{\Delta_{i}^\circ(\bar{\bm{\theta}}_i)}$ is small in this neighborhood. By Knight's identity, $\Delta_{i}^\circ(\bm{\theta}_i) = \VV_{i}^{(1)}(\bm{\theta}_i) + \VV_{i}^{(2)}(\bm{\theta}_i)$, where
  \begin{align}
    \VV_{i}^{(1)}(\bm{\theta}_i) &= -\frac{1}{bl} \sum_{j=1}^b \sum_{s=1}^l  \left\{ \bm{z}_{ijs}' ( \bm{\theta} - \bm{\theta}_{0} ) \right\} \psi_{\tau}( w_{ij} | u_{ijs} | ), \notag \\
    \VV_{i}^{(2)}(\bm{\theta}_i) &= \frac{1}{bl} \sum_{j=1}^b \sum_{s=1}^l \int_{0}^{\bm{z}_{ijs}' (\bm{\theta} - \bm{\theta}_{0})} \left( I( w_{ij} | u_{ijs} | \leq v ) - I( w_{ij} | u_{ijs} | \leq 0 ) \right) \ud v. \label{eq:v2_def}
  \end{align}
Under Assumption \ref{b:Gquan}, $\exs{\VV_{i}^{(1)}(\bm{\theta}_i)} = 0$, and Lemma~\ref{lem:boot_cons1} implies that $\exs{ \VV_{i}^{(2)}(\bm{\theta}_i) } \geq 0$ for each neighborhood $\mathcal{B}_i(\phi)$ with probability increasing to $1$ as $T \rightarrow \infty$.

The bootstrap weights make the bootstrap data $\beta$-mixing, but the mixing coefficients do not satisfy a geometric bound for all lags because the same bootstrap weight is applied to a cell of $l$ residuals at a time.  The events $A = \{y_{is}^* \geq x_{is}'\beta + \alpha_i\} = \{w_{is} > 0\}$ and $B = \{y_{it}^* \geq x_{it}'\beta + \alpha_i\} = \{w_{it} > 0\}$ satisfy $\probs{A, B} = 1$ when $s$ and $t$ are in the same cell but $\probs{A, B} = \probs{A} \probs{B}$ when $s$ and $t$ are members of different cells.  Therefore the mixing coefficients are $\beta(k) = 1$ for $k = 1, \ldots, l-1$ and $\beta(k) = 0$ for $k \geq l$.  That implies that the argument used in the proof of KGMR Theorem 5.1 applies here, but with the bound (assuming that lag length $l$ is small relative to $T$)
	\begin{equation} \label{delcirc_bound}
  \max_{1 \leq i \leq N} \mathrm{P}^* \bigg\{ \sup_{\bm{\theta}_i \in \mathcal{B}_i(\phi)}  \Big| \Delta_{i}^\circ(\bm{\theta}_i) - \exs{\Delta_{i}^\circ(\bm{\theta}_i)} \Big| > \epsilon_\phi \bigg\} \leq C N^{-2}.
\end{equation}
Then, as in the proofs of KGMR Theorems 5.1 and 3.1, $\bm{\theta}^\circ \stackrel{p^*}{\longrightarrow} \bm{\theta}_0$ and also $\bm{\theta}^* \stackrel{p^*}{\longrightarrow} \bm{\theta}_0$.

  We now consider the weak convergence of the estimator. Define the $i$-th contribution to the scores for $\MM_{NT}^*$ with respect to $\bm{\beta}$ and $\alpha_i$ by
  \begin{equation}
    \mathbb{H}_{i}^{\beta\ast}(\bm{\theta}_i) = \frac{1}{b l } \sum_{j=1}^b \sum_{s=1}^l \bm{x}_{ijs} \psi_{\tau}(u^\ast_{ijs} - \bm{x}_{ijs}' ( \bm{\beta} - \hat{\bm{\beta}} ) - ( \alpha_i - \hat{\alpha}_i ))
  \end{equation}
  and
  \begin{equation} \label{score_alpha}
    \mathbb{H}_{i}^{\alpha\ast}(\bm{\theta}_i) = \frac{1}{b l } \sum_{j=1}^b \sum_{s=1}^l  \psi_{\tau}(u^\ast_{ijs} - \bm{x}_{ijs}' ( \bm{\beta} - \hat{\bm{\beta}} ) - ( \alpha_i - \hat{\alpha}_i ) ),
  \end{equation}
  where $u_{ijs}^\ast = w_{ij} | \hat{u}_{ijs} |$. We write
  \begin{multline} \label{alstar_exp}
    \mathbb{H}_{i}^{\alpha\ast}(\bm{\theta}^*_i) = \mathbb{H}_{i}^{\alpha\ast}(\hat{\bm{\theta}}_i) + \left( \mathbb{H}_{i}^{\alpha\ast}(\bm{\theta}^*_i) - \mathbb{H}_{i}^{\alpha\ast}(\hat{\bm{\theta}}_i) - \exs{\mathbb{H}_{i}^{\alpha\ast}(\bm{\theta}^*_i) - \mathbb{H}_{i}^{\alpha\ast}(\hat{\bm{\theta}}_i)} \right) \\
    + \exs{\mathbb{H}_{i}^{\alpha\ast}(\bm{\theta}^*_i) - \mathbb{H}_{i}^{\alpha\ast}(\hat{\bm{\theta}}_i)} 
  \end{multline}

  Recall that $T = b l$, and letting $f_{ijs} := f_i(0 | \bm{x}_{ijs})$, we rewrite $\bar{\varphi}_i = (b l)^{-1} \sum_{j} \sum_{s} f_{ijs}$, $\bar{\bm{g}}_i = (b l)^{-1} \sum_{j} \sum_{s} f_{ijs} \bm{x}_{ijs}$, $\bar{\bm{J}}_i = (b l)^{-1} \sum_{j} \sum_{s} f_{ijs} \bm{x}_{ijs} \bm{x}_{ijs}'$ and $\bar{\bm{D}}_N = N^{-1} \sum_i (\bar{\bm{J}}_i - \bar{\varphi}_i^{-1} \bar{\bm{g}}_i \bar{\bm{g}}_i')$.

  Lemma~\ref{lem:boot_cons2}, the consistency of $\hat{\bm{\theta}}$ and $\bm{\theta}^*_i \stackrel{p^*}{\longrightarrow} \hat{\bm{\theta}}_i$ imply that for all $i$,
  \begin{multline} \label{Hstar_alpha_exp}
    \exs{\mathbb{H}_{i}^{\alpha\ast}(\bm{\theta}^*_i) - \mathbb{H}_{i}^{\alpha\ast}(\hat{\bm{\theta}}_i)} = \bar{\bm{g}}_i'(\bm{\beta}^\ast - \hat{\bm{\beta}}) + \bar{\varphi}_i (\alpha_i^\ast - \hat{\alpha}_i) \\
    + \Ops{ \|\bm{\theta}^*_i - \hat{\bm{\theta}}_i\|^2 } + \Op{ \|\hat{\bm{\theta}}_i - \bm{\theta}_{i0}\|^2 }.
  \end{multline}
  Rewrite~\eqref{alstar_exp} using the definitions and equation~\eqref{Hstar_alpha_exp} as
  \begin{multline} \label{bahadur_alstar}
    \alpha_i^* - \hat{\alpha}_i = -\bar{\varphi}_i^{-1} \bar{\bm{g}}_i' (\bm{\beta}^* - \hat{\bm{\beta}}) - \bar{\varphi}_i^{-1}  \mathbb{H}_{i}^{\alpha\ast}(\hat{\bm{\theta}}_i) - \bar{\varphi}_i^{-1} \left( \mathbb{H}_{i}^{\alpha\ast}(\bm{\theta}^*_i) - \mathbb{H}_{i}^{\alpha\ast}(\hat{\bm{\theta}}_i) - \exs{\mathbb{H}_{i}^{\alpha\ast}(\bm{\theta}^*_i) - \mathbb{H}_{i}^{\alpha\ast}(\hat{\bm{\theta}}_i)} \right) \\
    - \bar{\varphi}_i^{-1} \mathbb{H}_{i}^{\alpha\ast}(\bm{\theta}^*_i) + \Ops{ \| \bm{\theta}^*_i - \hat{\bm{\theta}}_i \|^2 } + \Op{ \| \hat{\bm{\theta}}_i - \bm{\theta}_{i0} \|^2 }.
  \end{multline}

  Similarly,
  \begin{multline} \label{bestar_exp}
    \mathbb{H}_{i}^{\beta\ast}(\bm{\theta}^*_i) = \mathbb{H}_{i}^{\beta\ast}(\hat{\bm{\theta}}_i) + \left( \mathbb{H}_{i}^{\beta\ast}(\bm{\theta}^*_i) - \mathbb{H}_{i}^{\beta\ast}(\hat{\bm{\theta}}_i) - \exs{\mathbb{H}_{i}^{\beta\ast}(\bm{\theta}^*_i) - \mathbb{H}_{i}^{\beta\ast}(\hat{\bm{\theta}}_i)} \right) \\
    + \exs{\mathbb{H}_{i}^{\beta\ast}(\bm{\theta}^*_i) - \mathbb{H}_{i}^{\beta\ast}(\hat{\bm{\theta}}_i)}
  \end{multline}
  and Lemma~\ref{lem:boot_cons2} implies
  \begin{multline} \label{Hstar_beta_exp}
    \exs{\mathbb{H}_{i}^{\beta\ast}(\bm{\theta}^*_i) - \mathbb{H}_{i}^{\beta\ast}(\hat{\bm{\theta}}_i)} = \bar{\bm{J}}_i (\bm{\beta}^* - \hat{\bm{\beta}}) + \bar{\bm{g}}_i (\alpha_i^* - \hat{\alpha}_i) \\
    + \Ops{ \| \bm{\theta}^* - \hat{\bm{\theta}} \|^2 } + \Op{ \| \hat{\bm{\theta}} - \bm{\theta}_0 \|^2 }.
  \end{multline}

  Defining $\mathbb{K}_{i}^{\theta*}(\bm{\theta}_i) = \mathbb{H}_{i}^{\beta\ast}(\bm{\theta}_i) - \bar{\varphi}_i^{-1} \bar{\bm{g}}_i \mathbb{H}_{i}^{\alpha\ast}(\bm{\theta}_i)$ equation~\eqref{bestar_exp} can be combined with equations~\eqref{bahadur_alstar} and~\eqref{Hstar_beta_exp} to write
  \begin{multline}
    \left( \bar{\bm{J}}_i - \bar{\varphi}_i^{-1} \bar{\bm{g}}_i \bar{\bm{g}}_i' \right) ( \bm{\beta}^* - \hat{\bm{\beta}} ) = \mathbb{K}_{i}^{\theta*}(\hat{\bm{\theta}}_i) + \left( \mathbb{K}_{i}^{\theta*}(\bm{\theta}^*_i) - \mathbb{K}_{i}^{\theta*}(\hat{\bm{\theta}}_i) - \exs{\mathbb{K}_{i}^{\theta*}(\bm{\theta}^*_i) - \mathbb{K}_{i}^{\theta*}(\hat{\bm{\theta}}_i)} \right) \\
    + \mathbb{K}_{i}^{\theta*}(\bm{\theta}^*_i) + \Ops{ \| \bm{\theta}^* - \hat{\bm{\theta}} \|^2 } + \Op{ \| \hat{\bm{\theta}} - \bm{\theta}_0 \|^2 }.
  \end{multline}

  Average over $i$ to find
  \begin{multline} \label{betastar_expansion}
    \bar{\bm{D}}_N \left( \bm{\beta}^* - \hat{\bm{\beta}} \right) + \ops{\|\bm{\beta}^\ast - \hat{\bm{\beta}}\|} = \frac{1}{N} \sum_{i=1}^N \mathbb{K}_{i}^{\theta*}(\hat{\bm{\theta}}_i) \\
    + \frac{1}{N} \sum_{i=1}^N \left( \mathbb{K}_{i}^{\theta*}(\bm{\theta}_{i}^*) - \mathbb{K}_{i}^{\theta*}(\hat{\bm{\theta}}_i) - \exs{\mathbb{K}_{i}^{\theta*}(\bm{\theta}^*_i) + \mathbb{K}_{i}^{\theta*}(\hat{\bm{\theta}}_i)} \right) + \frac{1}{N} \sum_{i=1}^N \mathbb{K}_{i}^{\theta*}(\bm{\theta}_i^\ast) \\
    + \Ops{ \sup_i (\alpha_i^* - \hat{\alpha}_i)^2 } + \Op{ \sup_i \|\hat{\bm{\theta}}_i - \bm{\theta}_{i0}\|^2 }.
  \end{multline}

  Next we determine the stochastic order of the terms in~\eqref{betastar_expansion}.  First, $\bar{\bm{D}}_N$ is the average of bounded terms under Assumptions~\ref{a:minf} and~\ref{a:boundedX}, so given Assumptions~\ref{assume:stationary} and \ref{assume:Avar_dependent}, it converges in probability to $\bm{D}$ and is nonsingular with probability increasing to one.  Second, because $\bm{\theta}^\ast$ is the minimizer of the quantile regression objective function, each $\mathbb{K}_i^{\theta\ast}(\bm{\theta}_i^\ast) = \Ops{T^{-1}}$ and therefore their average is as well. 
  Third, Step 3 in the proof of KGMR Theorems 3.2 and 5.1 implies that $\sup_i \|\hat{\bm{\theta}}_i - \bm{\theta}_{i0} \| = \Op{T^{-1} \log T}$.  Lemma~\ref{lem:alphasmall} shows that $\Ops{ \sup_i (\alpha_i^* - \hat{\alpha}_i)^2 } = \Ops{T^{-1}\log T}$.  Lemma~\ref{lem:4K} shows that
  \begin{multline}
    \frac{1}{N} \sum_{i=1}^N \left( \mathbb{K}_{i}^{\theta*}(\bm{\theta}_{i}^*) - \mathbb{K}_{i}^{\theta*}(\hat{\bm{\theta}}_i) - \exs{\mathbb{K}_{i}^{\theta*}(\bm{\theta}^*_i) + \mathbb{K}_{i}^{\theta*}(\hat{\bm{\theta}}_i)} \right) \\
    {} = \Ops{ T^{-1/2} \log T \|\bm{\beta}^\ast - \hat{\bm{\beta}}\|^{1/2} + T^{-1} (\log T)^2 + N^{-1/2} T^{-2/3}}. \label{4Kstatement}
  \end{multline}
  The first term on the right-hand side of~\eqref{betastar_expansion} is $\Ops{(NT)^{-1/2}}$.  Putting these facts together we find
  \begin{multline*}
    \left\| \bm{\beta}^* - \hat{\bm{\beta}} \right\| = \Ops{(NT)^{-1/2}} \\
    + \Ops{ T^{-1/2} \log T \|\bm{\beta}^\ast - \hat{\bm{\beta}}\|^{1/2} + T^{-1} (\log T)^2 + N^{-1/2} T^{-2/3}}.
  \end{multline*}
  Using the fact that $\delta \leq a + b\delta^{1/2} \Rightarrow \delta \leq 4\max\{a, b^2\}$ and the rate restriction given in the statement of the theorem, 
  \begin{equation*}
    \left\| \bm{\beta}^* - \hat{\bm{\beta}} \right\| = \Ops{ T^{-1} (\log T)^2 + (NT)^{-1/2} } = \Ops{ (NT)^{-1/2} }.
  \end{equation*}

Then under Assumption~\ref{assump:A6} and assuming that $l$ is chosen to satisfy~\eqref{lstar_choose}, the finite third moments (implied by Assumption~\ref{a:boundedX}, Lemma~\ref{lem:lstar_choose} below and arguing as in the proof of KGMR Theorem 5.1),
\begin{equation*}
    \sqrt{NT} \frac{1}{N} \sum_{i=1}^N \mathbb{K}_i^{\theta\ast}(\hat{\bm{\theta}}_i) \dconvs \calN\left( \zero_p, 
\bm{V} \right),
\end{equation*}
where $\bm{V}$ is defined in Assumption~\ref{assume:Avar_dependent} \citep[Theorem 5.20]{White01}.
\end{proof}

\begin{lemma} \label{lem:lstar_choose}
Suppose that Assumptions \ref{assume:stationary}--\ref{assump:A6}, \ref{b:wiid}--\ref{b:Gint} are satisfied and assume that $N (\log T)^4 / T = o(1)$.  By choosing $\hat{l}$ as described in~\eqref{eq:lhat_choose}, the bootstrap variance estimator is a consistent estimate of $\bm{V}$ defined in Assumption~\ref{assume:Avar_dependent}.
\end{lemma}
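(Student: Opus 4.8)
The plan is to compute the conditional variance of the leading score term in the bootstrap Bahadur representation and match it, term by term, with $\bm{V} = \bm{V}^0 + \lim_{N,T}(\bm{V}_{NT} - \bm{V}_{NT}^0)$, where I abbreviate $\bm{V}^0 := \lim_{N,T} \bm{V}_{NT}^0$. The starting point is that the bootstrap variance estimator is, up to the $\ops{1}$ remainders already controlled in the proof of Theorem~\ref{thm:boot}, the conditional covariance of $\sqrt{NT}\,\frac1N\sum_{i=1}^N \mathbb{K}_{i}^{\theta*}(\hat{\bm{\theta}}_i)$. The crucial simplification is that $\psi_\tau(w_{ij}|\hat u_{ijs}|) = \tau - I(w_{ij} < 0)$ almost surely (since $|\hat u_{ijs}| > 0$), so the bootstrap score depends on the weight only through its sign and is constant across the $l$ observations of a cell. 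Writing $\eta_{ij} = \tau - I(w_{ij} < 0)$ and $\bar{\check{\bm{x}}}_{ij} = l^{-1}\sum_{s=1}^l \check{\bm{x}}_{ijs}$, one obtains $\mathbb{K}_{i}^{\theta*}(\hat{\bm{\theta}}_i) = b^{-1}\sum_{j=1}^b \eta_{ij}\,\bar{\check{\bm{x}}}_{ij}$. Assumption~\ref{b:Gquan} gives $\exs{\eta_{ij}} = 0$ and $\exs{\eta_{ij}^2} = \tau(1-\tau)$, and Assumption~\ref{b:wiid} makes the $\eta_{ij}$ independent across both $i$ and $j$.

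Next I would compute the conditional covariance in closed form. Using independence across $i$ and $j$ together with $T = bl$, a direct calculation gives
\begin{equation*}
  \covs{\sqrt{NT}\,\frac1N\sum_{i=1}^N \mathbb{K}_{i}^{\theta*}(\hat{\bm{\theta}}_i)} = \frac{\tau(1-\tau)}{NT}\sum_{i=1}^N\sum_{j=1}^b \Big(\sum_{s=1}^l \check{\bm{x}}_{ijs}\Big)\Big(\sum_{s=1}^l \check{\bm{x}}_{ijs}\Big)'.
\end{equation*}
Expanding the within-cell square and separating the diagonal ($s=s'$) from the off-diagonal ($s\neq s'$) terms splits this into $\hat{\bm{V}}^0 + \hat{\mathcal{V}}(l)$, where $\hat{\bm{V}}^0 = \tau(1-\tau)(NT)^{-1}\sum_i\sum_t \check{\bm{x}}_{it}\check{\bm{x}}_{it}'$ and $\hat{\mathcal{V}}(l)$ is the symmetrized sum of within-cell cross-lag products. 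I would then show $\hat{\bm{V}}^0 \pconv \bm{V}^0$ using $\check{\bm{x}}_{it}\pconv\tilde{\bm{x}}_{it}$, the boundedness in Assumption~\ref{a:boundedX}, and a law of large numbers for the $\beta$-mixing array of Assumption~\ref{assume:stationary}; the limit is $\tau(1-\tau) N^{-1}\sum_i \ex{\tilde{\bm{x}}_{it}\tilde{\bm{x}}_{it}'} = N^{-1}\sum_i\gamma_i(0) \to \bm{V}^0$. For $\hat{\mathcal{V}}(l)$, re-indexing the off-diagonal terms by within-cell lag shows that its upper-triangular half is exactly the left-hand side of the selection equation~\eqref{eq:lhat_choose}, and that for each fixed $l$ it converges in probability to $\tfrac12\mathcal{V}_N(l)$ as $b\to\infty$: stationarity makes the cell averages converge to $\gamma_i^X(k) = \ex{\tilde{\bm{x}}_{it}\tilde{\bm{x}}_{it+k}'}$, reproducing the weights $(l-k)/l$ of Lemma~\ref{lem:l_exists}.

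To pass from fixed $l$ to the data-driven $\hat l$, I would invoke Lemma~\ref{lem:lhat}, which gives $\hat l \pconv l^{\text{o}}$ with $l^{\text{o}}$ ranging over the finite set $\{1,\ldots,L\}$ identified in Lemma~\ref{lem:l_exists}. Because there are finitely many candidate lengths, the pointwise convergence $\hat{\mathcal{V}}(l)\pconv\lim_N\mathcal{V}_N(l)$ is automatically uniform over that set, so $\hat{\mathcal{V}}(\hat l)\pconv\lim_N\mathcal{V}_N(l^{\text{o}})$ and the bootstrap variance converges to $\bm{V}^0 + \lim_N\mathcal{V}_N(l^{\text{o}})$. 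Finally, the consistency of the kernel (HAC) estimator on the right-hand side of~\eqref{eq:lhat_choose} for $\tfrac12\lim_{N,T}(\bm{V}_{NT}-\bm{V}_{NT}^0)$ \citep{GalvaoYoon24}, together with the bracketing~\eqref{lstar_choose} that defines $l^{\text{o}}$, identifies $\bm{V}^0 + \lim_N\mathcal{V}_N(l^{\text{o}})$ with $\bm{V}$.

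The main obstacle is that $\mathcal{V}_N(l)$ is a step function of the integer $l$, so the bracketing in~\eqref{lstar_choose} pins $\lim_{N,T}(\bm{V}_{NT}-\bm{V}_{NT}^0)$ only between $\mathcal{V}_N(l^{\text{o}})$ and $\mathcal{V}_N(l^{\text{o}}+1)$; exact equality $\lim_N\mathcal{V}_N(l^{\text{o}}) = \lim_{N,T}(\bm{V}_{NT}-\bm{V}_{NT}^0)$ need not hold, and since $l^{\text{o}}$ stays bounded the increment $\mathcal{V}_N(l^{\text{o}}+1)-\mathcal{V}_N(l^{\text{o}})$ does not vanish. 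The delicate part of the argument is therefore controlling this discretization gap: using Assumption~\ref{assump:A6} to sign the omitted cross-lag terms, so that the approximation errs on the conservative side (consistent with the Remark following Theorem~\ref{thm:boot}), and arguing that the rounding convention in the construction of $\hat l$ selects the tightest available integer approximation to the target. A secondary technical point is verifying that replacing the exact conditional variance of $\sqrt{NT}\,N^{-1}\sum_i\mathbb{K}_{i}^{\theta*}(\hat{\bm{\theta}}_i)$ by the quadratic form above, and neglecting the $\ops{1}$ Bahadur remainder from the proof of Theorem~\ref{thm:boot}, is legitimate under the rate restriction $N(\log T)^4/T = o(1)$, which keeps the contributions of the estimated nuisance terms $\bar{\varphi}_i^{-1}\bar{\bm{g}}_i$ and $\hat{\bm{\theta}}_i - \bm{\theta}_{i0}$ asymptotically negligible.
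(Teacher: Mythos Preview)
Your proposal follows essentially the same route as the paper's own proof: compute the conditional covariance of the leading bootstrap score $\sqrt{NT}\,N^{-1}\sum_i \mathbb{K}_i^{\theta*}(\hat{\bm\theta}_i)$, exploit that $\psi_\tau(w_{ij}|\hat u_{ijs}|)$ depends only on the sign of $w_{ij}$ and is constant within a cell, arrive at the closed form $\tau(1-\tau)(NT)^{-1}\sum_i\sum_j(\sum_s\check{\bm x}_{ijs})(\sum_s\check{\bm x}_{ijs})'$, and let $b\to\infty$ with $\hat l\pconv l^{\text o}$ to obtain $\bm V^0+\lim_N\mathcal V_N(l^{\text o})$. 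The paper's argument is terser but identical in substance; your decomposition into $\hat{\bm V}^0+\hat{\mathcal V}(l)$ and the finite-candidate uniformity argument for passing from fixed $l$ to $\hat l$ are just more explicit renderings of what the paper does in one line.

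Where you go beyond the paper is in flagging the discretization gap: Lemma~\ref{lem:l_exists} only brackets $\lim_{N,T}(\bm V_{NT}-\bm V_{NT}^0)$ between $\mathcal V_N(l^{\text o})$ and $\mathcal V_N(l^{\text o}+1)$, and since $l^{\text o}$ is bounded the step size need not vanish. You are right that this is a genuine subtlety. The paper's proof simply asserts that the limit ``matches $\mathcal V(l^{\text o})$\ldots indicating that it converges\ldots to $\bm V$'' without closing this gap, so your concern is not a defect of your argument relative to the paper's---if anything you are being more careful. The resolution you sketch (use Assumption~\ref{assump:A6} to sign the error conservatively, in line with the Remark after Theorem~\ref{thm:boot}) is the natural one and is consistent with how the paper treats the issue elsewhere.
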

\begin{proof}[Proof of Lemma~\ref{lem:lstar_choose}]

The first term in \eqref{betastar_expansion} satisfies
  \begin{align*}
    \exs{\mathbb{K}_{i}^{\theta*}(\hat{\bm{\theta}}_i)} & = \exs{\frac{1}{bl} \sum_{j=1}^b \sum_{s=1}^{l}  \check{\bm{x}}_{ijs} \psi_{\tau} \left( u_{ijs}^\ast  \right) } \\
  & = \frac{1}{bl} \sum_{j=1}^b \sum_{s=1}^{l} \check{\bm{x}}_{ijs} \exs{\psi _{\tau} \left( w_{ij}  \right)} = \zero_p. 
   \end{align*}

Next we calculate the variance of this term under the bootstrap measure.  For each $i$, since this expectation is conditional on the data and given Assumptions~\ref{b:wiid} and~\ref{b:Gquan} on the weights,
\begin{align}
    bl \exs{\mathbb{K}_{i}^{\theta*}(\hat{\bm{\theta}}_i) \mathbb{K}_{i}^{\theta*}(\hat{\bm{\theta}}_i)'} & = \frac{1}{bl} \exs{ \sum_{j=1}^b \sum_{s=1}^{l}  \check{\bm{x}}_{ijs} \psi_{\tau} \left( u_{ijs}^\ast  \right) \sum_{j=1}^b \sum_{s=1}^{l}  \check{\bm{x}}_{ijs}' \psi_{\tau} \left( u_{ijs}^\ast  \right) } \notag \\
    {} &= \frac{1}{bl} \sum_{j=1}^b \left( \sum_{s=1}^{l} \check{\bm{x}}_{ijs} \right) \left( \sum_{s=1}^{l} \check{\bm{x}}_{ijs} \right)' \exs{ \psi _{\tau}^2 \left( w_{ij}  \right) } \notag \\
    {} &= \frac{\tau(1-\tau)}{bl} \sum_{j=1}^b \left( \sum_{s=1}^l \check{\bm{x}}_{ijs} \check{\bm{x}}_{ijs}' + 2 \sum_{k=1}^{l-1} \sum_{s=1}^{l-k}  \check{\bm{x}}_{ijs} \check{\bm{x}}_{ij(s+k)}' \right). \label{bvar}
\end{align}

If $\hat{l} \pconv l^{\text{o}} < \infty$ and $b \rightarrow \infty$ (recall we assume $T = bl$), under consistency of sample estimators $\bar{\varphi}_i \pconv \varphi$ and $\bar{\bm{g}}_i \pconv \bm{g}_i$ for all $i$, we have for each $i$
\begin{align}
    bl \exs{\mathbb{K}_{i}^{\theta*}(\hat{\bm{\theta}}_i) \mathbb{K}_{i}^{\theta*}(\hat{\bm{\theta}}_i)'} &\pconv \tau(1-\tau) \ex{\frac{1}{l} \sum_{t=1}^{l-1} \tilde{\bm{x}}_{it} \sum_{t=1}^{l-1} \tilde{\bm{x}}_{it}' } \notag \\
     {} &= \tau(1-\tau) \ex{\tilde{\bm{x}}_{it} \tilde{\bm{x}}_{it}'}  + 2\tau(1-\tau) \sum_{k=1}^{l^{\text{o}}-1} \frac{l^{\text{o}}-k}{l^{\text{o}}} \ex{\tilde{\bm{x}}_{it} \tilde{\bm{x}}_{it+k}'}. \label{eq:Avar_boot}
\end{align}
When averaging over $i$, we find that this matches the $\mathcal{V}(l^{\text{o}})$ described in Lemma~\ref{lem:l_exists}, indicating that it converges in probability (conditional on the data) to $\bm{V}$ as $N, T \rightarrow \infty$.
\end{proof}

\begin{lem}[Preliminary size estimates] \label{lem:alphasmall}
Under the assumptions in Theorem~\ref{thm:boot},
\begin{align*}
    \sup_i |\alpha_i^\ast - \hat{\alpha}_i| &= \Ops{T^{-1/2} (\log T)^{1/2}} \\
    \|\bm{\beta}^\ast - \hat{\bm{\beta}}\| &= \ops{T^{-1/2} (\log T)^{1/2}}.
\end{align*}
\end{lem}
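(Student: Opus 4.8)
The plan is to reproduce, under the bootstrap (conditional) measure $\probs{\cdot}$, the preliminary size arguments that KGMR use in the proofs of their Theorems 3.2 and 5.1. Consistency $\bm{\theta}^\ast \pconvs \bm{\theta}_0$ has already been established in the proof of Theorem~\ref{thm:boot}, so throughout I localize the analysis to a shrinking neighborhood of $\hat{\bm{\theta}}$ and exploit convexity of the bootstrap objective $\MM_{NT}^\ast$. The two rates are obtained in sequence: first a uniform-in-$i$ bound on the individual effects, and then, after profiling these out, a strictly faster bound on the slope that gains the cross-sectional factor $N^{-1/2}$.

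For the intercepts I would start from the approximate first-order condition $\mathbb{H}_i^{\alpha\ast}(\bm{\theta}_i^\ast) = \Ops{T^{-1}}$, which holds because $\bm{\theta}^\ast$ minimizes a quantile objective whose per-observation score is $\Op{T^{-1}}$. Combining this with the expansion from Lemma~\ref{lem:boot_cons2}, in which the coefficient $\bar{\varphi}_i$ multiplying $(\alpha_i^\ast - \hat{\alpha}_i)$ is bounded away from zero by Assumption~\ref{a:minf}, reduces the problem to controlling three quantities: the score at $\hat{\bm{\theta}}_i$, the centered empirical fluctuation of the score over the neighborhood, and the slope contribution $\bar{\bm{g}}_i'(\bm{\beta}^\ast - \hat{\bm{\beta}})$. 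Because the bootstrap residual $u_{ijs}^\ast = w_{ij} |\hat{u}_{ijs}|$ shares the sign of $w_{ij}$, the score evaluated at $\hat{\bm{\theta}}_i$ equals $b^{-1} \sum_{j=1}^b \psi_\tau(w_{ij})$, a mean-zero (by Assumption~\ref{b:Gquan}) average of $b$ bounded i.i.d. terms (Assumption~\ref{b:Gsupp}); since Lemma~\ref{lem:l_exists} makes $l^{\text{o}}$ uniformly bounded we have $b \asymp T$, so this average is $\Ops{T^{-1/2}}$ for each fixed $i$.

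I expect the upgrade of this per-$i$ bound to a supremum over $i$ to be the main obstacle, precisely because the bootstrap observations are perfectly dependent inside each cell and so cannot be treated as $T$ independent draws per unit. The argument must be reorganized as a sum over the $b$ independent cell-level contributions, using the bounded weights and bounded covariates (Assumptions~\ref{b:Gsupp} and~\ref{a:boundedX}) to guarantee that the cell increments are bounded. Since the weights are drawn independently across units, the per-unit cell-averaged scores are independent across $i$, so a Bernstein/Hoeffding tail bound at the cell level together with a union bound over the $N$ units gives $\sup_i |\alpha_i^\ast - \hat{\alpha}_i| = \Ops{\sqrt{\log N / b}}$. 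The rate restriction $N (\log T)^4 / T = o(1)$ forces $N \lesssim T$, hence $\log N \lesssim \log T$, and this becomes $\Ops{T^{-1/2} (\log T)^{1/2}}$. The uniform control of the centered empirical fluctuation over the neighborhoods $\mathcal{B}_i(\phi)$ is obtained exactly as in KGMR Theorem 5.1, now relying on the bound~\eqref{delcirc_bound}, which already accounts for the within-cell dependence of the bootstrap data (mixing coefficients equal to one for lags below $l$ and zero for lags at least $l$).

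Once the intercepts are pinned down at rate $T^{-1/2}(\log T)^{1/2}$, the slope bound follows by profiling out the $\alpha_i^\ast$ and averaging the $\bm{\beta}$-scores over $i$. The leading aggregated term $N^{-1} \sum_i \mathbb{K}_i^{\theta\ast}(\hat{\bm{\theta}}_i)$ is an average of terms that are mean-zero under $\probs{\cdot}$ and independent across units, hence of order $\Ops{(NT)^{-1/2}}$, while the intercept-induced remainder and the centered fluctuation are of smaller order at this preliminary stage (a crude bound suffices here; the sharp Bahadur-type expansion, and with it Lemma~\ref{lem:4K}, is deferred to the body of Theorem~\ref{thm:boot}, so no circularity arises). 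Since $\bar{\bm{D}}_N$ converges in probability to the nonsingular $\bm{D}$, inverting gives $\|\bm{\beta}^\ast - \hat{\bm{\beta}}\| = \Ops{(NT)^{-1/2}}$, and because $N^{-1/2} / (\log T)^{1/2} \to 0$ this is $\ops{T^{-1/2} (\log T)^{1/2}}$, the extra factor $N^{-1/2}$ supplying exactly the strict improvement from $\Ops{\cdot}$ to $\ops{\cdot}$.
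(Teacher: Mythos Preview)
Your approach is essentially the paper's: both follow Step~3 of KGMR Theorems~3.2 and~5.1 under the bootstrap measure, expanding the $\alpha$- and $\beta$-scores around $\hat{\bm{\theta}}_i$ and controlling (i) the score at $\hat{\bm{\theta}}_i$, (ii) the centered empirical fluctuation, and (iii) the quadratic remainders from Lemma~\ref{lem:boot_cons2}. Your observation that $\mathbb{H}_i^{\alpha\ast}(\hat{\bm{\theta}}_i) = b^{-1} \sum_{j=1}^b \psi_\tau(w_{ij})$ is a genuine simplification over the paper, which instead applies KGMR's Corollary~C.1 after computing the blocked variance $\sigma_q^2(\psi_\tau)$; your direct Hoeffding bound on $b$ i.i.d.\ bounded terms plus a union bound over $i$ is cleaner and delivers the same $T^{-1/2}(\log T)^{1/2}$ rate.

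There is, however, an overclaim in the last paragraph. You assert that a ``crude bound'' on the centered fluctuation and the intercept remainder leaves the leading term $N^{-1}\sum_i \mathbb{K}_i^{\theta\ast}(\hat{\bm{\theta}}_i) = \Ops{(NT)^{-1/2}}$ dominant, hence $\|\bm{\beta}^\ast - \hat{\bm{\beta}}\| = \Ops{(NT)^{-1/2}}$. Without Lemma~\ref{lem:4K} the crude empirical-process bound on the fluctuation (as in~\eqref{eq:4hs_est}) is of order $T^{-1/2}(\log T)\sup_i\|\bm{\theta}_i^\ast - \hat{\bm{\theta}}_i\|^{1/2} + T^{-1}(\log T)^2$, and after inserting the intercept rate this is only $\Ops{T^{-3/4}(\log T)^{5/4}}$, which need not be $\ops{(NT)^{-1/2}}$ under the assumed rate on $N$ versus $T$. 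The paper is more careful here: it records only $\|\bm{\beta}^\ast - \hat{\bm{\beta}}\| = \ops{T^{-1/2}} + \Ops{\sup_i|\alpha_i^\ast - \hat{\alpha}_i|^2} + \Op{T^{-1}\log T}$ at this stage, which after plugging in the intercept rate gives exactly $\ops{T^{-1/2}(\log T)^{1/2}}$ and no more. Since the lemma only claims this little-$o$ rate, your argument ultimately lands in the right place, but the intermediate $\Ops{(NT)^{-1/2}}$ step should be dropped. Relatedly, because the intercept expansion contains the slope term $\bar{\bm{g}}_i'(\bm{\beta}^\ast - \hat{\bm{\beta}})$, you cannot fully resolve the intercepts before touching the slope; the paper handles this by first bounding $\|\bm{\beta}^\ast - \hat{\bm{\beta}}\|$ in terms of $\sup_i|\alpha_i^\ast - \hat{\alpha}_i|^2$ and then closing the loop.
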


\begin{proof}[Proof of Lemma~\ref{lem:alphasmall}]
This proof follows Step 3 of the proof of KGMR Theorems 3.2 and 5.1, but it is tailored to this bootstrap distribution.  We will check the stochastic order of the terms on the right-hand side of~\eqref{bahadur_alstar}.

First we make a size estimate of $\|\bm{\beta}^\ast - \hat{\bm{\beta}}\|$.   The first two terms in~\eqref{betastar_expansion} are averages of $NT$ observations, and as such are $\ops{T^{-1/2}}$.  Recall that in Step 3 of the proof of KGMR Theorem 3.2 make stochastic order estimates of $\hat{\bm{\beta}}$ and $\hat{\alpha}_i$.  Then using the other facts listed in the paragraph below the display \eqref{betastar_expansion} can be estimated as
\begin{equation} \label{eq:betaest}
    \|\bm{\beta}^\ast - \hat{\bm{\beta}}\| = \ops{T^{-1/2}} + \Ops{\sup_i |\alpha_i^\ast - \hat{\alpha}_i|^2} + \Op{T^{-1} \log T}
\end{equation}
and we have
\begin{multline} \label{bigalpha}
    \sup_i |\alpha_i^\ast - \hat{\alpha}_i| + \ops{|\alpha_i^\ast - \hat{\alpha}_i|} = \Ops{\sup_i \left| \mathbb{H}_{i}^{\alpha\ast}(\hat{\bm{\theta}}_i) \right|} \\
    + \Ops{\sup_i \left| \mathbb{H}_{i}^{\alpha\ast}(\bm{\theta}^*_i) - \mathbb{H}_{i}^{\alpha\ast}(\hat{\bm{\theta}}_i) - \exs{\mathbb{H}_{i}^{\alpha\ast}(\bm{\theta}^*_i) - \mathbb{H}_{i}^{\alpha\ast}(\hat{\bm{\theta}}_i)} \right|} \\
    + \ops{T^{-1/2}} + \Op{T^{-1} \log T}.
\end{multline}

We will use KGMR Corollary C.1 with the first term of~\eqref{bigalpha}, which requires the computation of the variance $\sigma_q^2(\psi_\tau) = \sum_{t=1}^q \psi(u_t^\ast) / \sqrt{q}$ for $1 \leq q \leq T$.  This is
\begin{align*}
\sigma_q^2(\psi_\tau) &= \exs{\psi_\tau^2(u_t^\ast)} + 2 \sum_{k=1}^{q-1} (1 - j/q) \exs{\psi_\tau(u_t^\ast) \psi_\tau(u_{t+k}^\ast)} \\
    \sigma_q^2(\psi_\tau) &= \tau(1-\tau) + 2 \sum_{j=1}^{q \wedge l -1} (1 - j/q) \tau(1-\tau) \\
    {} &= \tau(1-\tau) \left[ 1 + (q \wedge l - 1) \left( 2 - (q \wedge l)/q \right) \right].
\end{align*}

However, as shown in the main text, $l^\text{o} = \Op{1}$, and we can consider $\sigma_q^2(\psi_\tau)$ bounded by a constant.  Then by KGMR Proposition C.1 and Corollary C.1 (choose $U = 1$ and note that because $\sigma_q^2(\psi_\tau) \leq \tau(1-\tau) q$, the assumptions of Proposition C.1 are satisfied),
\begin{equation*}
\probs{ T \sup_i \left| \mathbb{H}_{i}^{\alpha\ast}(\hat{\bm{\theta}}_i) \right| > C\left( \sqrt{(s \vee 1) T} + sq \right) } \leq 2e^{-s} + 2 \left\lfloor \frac{T}{2q} \right\rfloor \beta(q),
\end{equation*}
where $\beta(q) = 1 - \tau$ for $q = 1, \ldots, l-1$ and $0$ for $q \geq l$.  Taking $s = C_s \log T$ and $q = C_q \log T$ as in GGV Lemma 5, we can ensure that the right-hand side of the above expression is bounded by $T^{-\kappa}$, implying that
\begin{equation} \label{eq:hstar_est}
\sup_i \left| \mathbb{H}_{i}^{\alpha\ast}(\hat{\bm{\theta}}_i) \right| = \Ops{ T^{-1/2} (\log T)^{1/2} }.
\end{equation}

Similarly, we would like to use KGMR Proposition C.2 to show that the final estimate is of smaller order.  To this end we can define a set of functions $\mathcal{G}_\delta = \{g(\theta) = I(u_t < \alpha + \bm{x}\tr\bm{\beta}) - I(u_t < 0), |\alpha| \leq \delta, \|\bm{\beta}\| \leq \delta\}$, and using calculations as in the end of Lemma~\ref{lem:boot_cons2} and KGMR Lemma C.1, we find $\sigma_q^2(\mathcal{G}_\delta) = C \delta^{1/2}$.  Therefore their Proposition C.2 implies that for some constants (assuming $\log \delta \sim \log T$)
\begin{equation*}
\prob{ \left\| \sum_t g(\theta) \right\|_{\mathcal{G}_\delta} \geq C \left( T^{1/2} \delta^{1/2} (\log T)^{1/2} + \delta^{1/2} s^{1/2} T^{1/2} + sq \right) } \leq 2e^{-s} + 2T\beta(q).
\end{equation*}
Take  $s = C_s \log T$ and $q = C_q \log T$ as in GGV Lemma 5, so that the right-hand side is bounded by $T^{-\kappa}$.  Then this implies that
\begin{multline}
    \sup_i \left| \mathbb{H}_{i}^{\alpha\ast}(\bm{\theta}^*_i) - \mathbb{H}_{i}^{\alpha\ast}(\hat{\bm{\theta}}_i) - \exs{\mathbb{H}_{i}^{\alpha\ast}(\bm{\theta}^*_i) - \mathbb{H}_{i}^{\alpha\ast}(\hat{\bm{\theta}}_i)} \right| \\
    {} = \Ops{ T^{-1/2} \sup_i \|\bm{\theta}^\ast_i - \hat{\bm{\theta}}_i\|^{1/2} (\log T)^{1/2} + T^{-1} (\log T)^2} \\
    {} = \ops{T^{-1/2} (\log T)^{1/2}} \label{eq:4hs_est}
\end{multline}
by consistency of $\bm{\theta}^\ast$. Putting estimates~\eqref{eq:hstar_est} and~\eqref{eq:4hs_est} into~\eqref{bigalpha} and comparing the orders of all the terms implies the result for $\sup_i |\alpha_i^\ast - \hat{\alpha}_i|$.  Using this in~\eqref{eq:betaest} implies the result for $\|\bm{\beta}^\ast - \hat{\bm{\beta}}\|$.
\end{proof}

The next two Lemmas are similar to the proof of Theorem 1 in~\citet{FengHeHu11} or that of Lemma S.2 in~\citet{LamarcheParker23}, altered to apply to stationary data.  In the proofs of the main theorems above, individual observations are indexed by unit, cell and time period within cell $(i, j, s)$, which is equivalent to rewriting the index using only individual and time period, that is, $(i, t)$, as is done below (although for some weights indexed by $s \neq t$ we will have $w_{is} = w_{it}$).

\begin{lem} \label{lem:boot_cons1}
    Recall $\bm{z}_{it} = (\bm{x}_{it}', 1)'$ and let $\bm{\Delta} \in \RR^{p+1}$.   Then under Assumptions \ref{assume:stationary}-\ref{a:boundedX} and \ref{b:wiid}-\ref{b:Gint}, as $T \rightarrow \infty$ the following approximation holds for all $i$: as $T \rightarrow \infty$, uniformly for $\bm{\Delta}$ in an $\phi$-neighborhood of the origin,
    \begin{multline*}
        \exs{ \frac{1}{T} \sum_{t=1}^T \int_0^{\bm{z}_{it}'\bm{\Delta}} \left( I(w_{it} |u_{it}| \leq s) - I(w_{it} |u_{it}| \leq 0) \right) \ud s} \\
        = \frac{1}{T} \sum_{t=1}^T f_i(0 | \bm{x}_{it}) \bm{\Delta}' \bm{z}_{it} \bm{z}_{it}'\bm{\Delta} + O(\|\bm{\Delta}\|^2) + \Op{T^{-1/2} (\log T)^{1/2} \phi^{1/2} + T^{-1} (\log T)^2}.
        \end{multline*}
\end{lem}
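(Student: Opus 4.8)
The plan is to evaluate the inner bootstrap expectation over the weights first, and then split the resulting data-dependent average into a deterministic leading term plus two remainders: a localization bias and a stochastic fluctuation controlled by a mixing concentration inequality. First I would take the expectation over $w_{it}$ under the resampling measure. Since $|u_{it}| > 0$ almost surely, conditions \ref{b:wiid} and \ref{b:Gquan} give $\exs{I(w_{it}|u_{it}| \leq s)} = G_W(s/|u_{it}|)$ and $\exs{I(w_{it}|u_{it}| \leq 0)} = G_W(0) = \tau$. Writing $v_{it} = \bm{z}_{it}'\bm{\Delta}$ and applying Fubini followed by the substitution $s = r|u_{it}|$, the left-hand side becomes $\frac{1}{T}\sum_{t=1}^T |u_{it}|\, \Phi(v_{it}/|u_{it}|)$, where
\[
  \Phi(x) = \int_0^x (G_W(r) - \tau)\,\ud r.
\]
I would record the properties of $\Phi$ that drive the argument: by the support condition \ref{b:Gsupp}, $\Phi(x) = 0$ for $x$ in the gap $(-c_1, c_2)$, while $\Phi$ grows linearly once $|x|$ exceeds the support bounds. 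Consequently $|u_{it}|\Phi(v_{it}/|u_{it}|)$ vanishes whenever $|u_{it}|$ is large relative to $|v_{it}|$, so the whole expression is localized to observations with $|u_{it}| = O(\|\bm{\Delta}\|)$.

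Next I would extract the leading term by replacing each summand with its conditional mean over $u_{it}$ given $\bm{x}_{it}$ and Taylor-expanding the conditional density. The zeroth-order piece is $f_i(0 | \bm{x}_{it}) \int_{-\infty}^\infty |u|\Phi(v_{it}/|u|)\,\ud u$; the substitution $u = v_{it}/r$ turns this integral into a constant multiple of $v_{it}^2$, and integration by parts reduces the resulting $\int \Phi(r)/r^3\,\ud r$ to $\int r^{-1}\,\ud G_W(r)$, which is pinned down by condition \ref{b:Gint}. This is exactly how the moment condition on the weights manufactures the density factor, yielding the claimed leading term $f_i(0 | \bm{x}_{it}) \bm{\Delta}'\bm{z}_{it}\bm{z}_{it}'\bm{\Delta}$ (up to the universal constant fixed by \ref{b:Gint}). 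The first-order Taylor term is bounded using $\overline{f'}$ from Assumption~\ref{a:minf} and, after the same substitution, scales like $\|\bm{\Delta}\|^3$, which is absorbed into the stated $O(\|\bm{\Delta}\|^2)$ remainder; boundedness of $\bm{x}_{it}$ (Assumption~\ref{a:boundedX}) keeps $v_{it} = O(\|\bm{\Delta}\|)$ uniformly over the neighborhood.

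Finally I would control the fluctuation of $\frac{1}{T}\sum_t |u_{it}|\Phi(v_{it}/|u_{it}|)$ around its conditional mean, uniformly over $\bm{\Delta}$ in the $\phi$-ball. Treating this as an empirical process indexed by $\bm{\Delta}$, each summand is uniformly bounded and, by the localization noted above, has conditional variance of order $\phi$. Because the data are strictly stationary and geometrically $\beta$-mixing under Assumption~\ref{assume:stationary}, I would apply the mixing-adapted maximal and Bernstein-type inequalities of KGMR (Propositions~C.1--C.2) together with the chaining and block-length choice $q \sim \log T$, $s \sim \log T$ used in GGV Lemma~5. The variance-$\phi$ bound produces the $T^{-1/2}(\log T)^{1/2}\phi^{1/2}$ term and the block remainder produces the $T^{-1}(\log T)^2$ term, giving the stated $\Op{T^{-1/2}(\log T)^{1/2}\phi^{1/2} + T^{-1}(\log T)^2}$. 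Summing the three contributions yields the approximation.

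The hard part will be the interlocking of the four weight conditions in the leading-term computation: verifying that the change of variables $u = v_{it}/r$ and the integration by parts are legitimate and that all boundary terms vanish requires the gap of \ref{b:Gsupp} (to kill the contribution near $r = 0$) and the bounded support together with \ref{b:Gquan}, before \ref{b:Gint} can be invoked to collapse the integral to the density. A secondary difficulty, relative to the i.i.d. template of \citet{FengHeHu11}, is that the $\beta$-mixing coefficients do not decay for lags shorter than a cell, so the fluctuation bound must be obtained from the mixing inequalities with an explicit block length rather than from i.i.d. concentration, while still verifying that localization forces the variance proxy to shrink like $\phi$.
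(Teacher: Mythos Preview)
Your approach is correct and reaches the same conclusion, but the order of integration is reversed relative to the paper. You take $\exs{\cdot}$ over $w$ first, producing $|u_{it}|\,\Phi(v_{it}/|u_{it}|)$, and then integrate over $u_{it}$ given $\bm{x}_{it}$, which forces the change of variables $u=v/r$ and an integration by parts to extract $\int r^{-1}\,\ud G_W(r)$ from~\ref{b:Gint}. The paper instead swaps the order via Fubini: integrating over $u_{it}$ first gives $F_i(v/w\,|\,\bm{x}_{it}) - F_i(-v/w\,|\,\bm{x}_{it})$, whose Taylor expansion immediately yields $2f_i(0|\bm{x}_{it})\,v/w$, and the remaining bootstrap expectation is simply $\exs{w^{-1}I(w>0)}=\tfrac12$ by~\ref{b:Gint}. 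Both routes work; the paper's is shorter because it never needs to introduce $\Phi$ or integrate by parts, so the boundary-term verifications you flag as ``the hard part'' never arise.

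One small correction to your fluctuation discussion: once $\exs{\cdot}$ has been taken, the object $\tfrac1T\sum_t |u_{it}|\Phi(v_{it}/|u_{it}|)$ depends only on the original data $(u_{it},\bm{x}_{it})$, and its fluctuation is controlled by the geometric $\beta$-mixing of Assumption~\ref{assume:stationary}. The cell-induced mixing structure (coefficients equal to one for lags $<l$) is a feature of the \emph{bootstrap} data $y_{ijs}^\ast$ and is irrelevant here --- it enters elsewhere in the paper, in the consistency argument where the bootstrap process itself is handled. So the ``secondary difficulty'' you describe does not arise in this lemma; the KGMR Corollary C.1 bound with $\sigma_q^2 = C_\sigma\phi$ and $s,q\asymp\log T$ applies directly under~\ref{assume:stationary}, exactly as you and the paper both use it.
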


\begin{proof}[Proof of Lemma~\ref{lem:boot_cons1}]
    Abbreviate the average of the integrals on left-hand side of the above expression as $V_i^\ast(\bm{\Delta})$.  We calculate the first two moments of $\exs{V_i^\ast(\bm{\Delta})}$ conditional on $\bm{X}_i$. We use the identity $I(u \leq v) - I(u \leq 0) = I(0 < u \leq v) - I(v < u \leq 0)$.  Consider a single $(i, t)$ observation to compute the first moment.  Conditioning on $\bm{x}_{it}$,
    \begin{multline}
        \ex{\exs{V_i^\ast(\bm{\Delta})} \big| \bm{x}_{it}} = \ex{\exs{\int_{0}^{\bm{z}_{it}'\bm{\Delta}} I(0 < w_{i} | u_{it} | \leq v ) \ud v } I(\bm{z}_{it}'\bm{\Delta} > 0) \Big| \bm{x}_{it}} \\
        {} - \ex{\exs{\int_0^{\bm{z}_{it}'\bm{\Delta}} I(v < w_{i} | u_{it} | \leq 0 ) \ud v } I(\bm{z}_{it}'\bm{\Delta} \leq 0) \Big| \bm{x}_{it}}. \label{bigexp}
    \end{multline}
    Taking just the first half of the right-hand side, Fubini's theorem and Assumptions~\ref{a:minf}, \ref{b:Gsupp}, \ref{b:Gint} and \ref{a:boundedX} imply that
    \begin{align*}
    \text{E} \Bigg[ \text{E}^\ast \Bigg[ \int_{0}^{\bm{z}_{it}'\bm{\Delta}} I(0 &< w_{i} | u_{it} | \leq v ) \ud v \Bigg] I(\bm{z}_{it}'\bm{\Delta} > 0) \Big| \bm{x}_{it} \Bigg] \\
    {} &= \exs{\int_{0}^{\bm{z}_{it}'\bm{\Delta}} F_i(v / w_{i} | \bm{x}_{it}) ) - F_i(-v / w_{i} | \bm{x}_{it} ) \ud v I(w_{i} > 0) } \\
    {} &= \exs{\int_{0}^{\bm{z}_{it}'\bm{\Delta}} [2 f_i(0 | \bm{x}_{it}) (v / w_{i}) + O(v^2)] \ud v I(w_{i} > 0) } \\
    {} &= f_i(0 | \bm{x}_{it}) (\bm{z}_{it}'\bm{\Delta})^2 \exs{w_{i}^{-1} I(w_{i} > 0)} + O(\|\bm{\Delta}\|^2) \\
    {} &= \frac{1}{2} f_i(0 | \bm{x}_{it}) (\bm{z}_{it}'\bm{\Delta})^2 + O(\|\bm{\Delta}\|^2).
    \end{align*}
    Analogous calculations show that the integral on the event $\{\bm{z}_{it}'\bm{\Delta} \leq 0\}$ in~\eqref{bigexp} is the same, and adding them together and using this for all the observations in the sum results in
    \begin{equation*}
        \ex{\exs{V_i^\ast(\bm{\Delta})} \big| \bm{X}_i} = \frac{1}{T} \sum_{t=1}^T f_i(0 | \bm{x}_{it}) \bm{\Delta}'\bm{z}_{it} \bm{z}_{it}'\bm{\Delta} + O(\|\bm{\Delta}\|^2).
    \end{equation*}

    Next, bound the difference between $\exs{V_i^\ast(\bm{\Delta})}$ and its expected value (conditional on $\bm{X}_i$).  The absolute value of each integrand in $V_i^\ast$ is bounded by one, and under Assumptions~\ref{assume:stationary} and~\ref{a:boundedX} and using KGMR Lemma C.1 (use $\delta = 1$),
    \begin{equation*}
        \var{ \exs{V_i^\ast(\bm{\Delta})} \big| \bm{X}_i } \leq \var{ \frac{1}{T} \sum_{t=1}^T \bm{z}_{it}'\bm{\Delta}} = \Op{T^{-1} \|\bm{\Delta}\|}.
    \end{equation*}
    Then define $\sigma_q^2 = C_\sigma \phi$, and by choosing $s$ and $q$ as in GGV Lemma 5, KGMR Corollary C.1 implies that as $T \rightarrow \infty$,
    \begin{equation*}
        \exs{V_i^\ast(\bm{\Delta})} - \ex{ \exs{V_i^\ast(\bm{\Delta})} | \bm{X}_i} = \Op{T^{-1/2} (\log T)^{1/2} \phi^{1/2} + T^{-1} (\log T)^2}.
    \end{equation*}
\end{proof}

\begin{lem} \label{lem:boot_cons2}
    Recall $\bm{z}_{it} = (\bm{x}_{it}', 1)'$ and let $\bm{\delta}, \bm{\Delta} \in \RR^{p+1}$.  Under Assumptions \ref{assume:stationary}-\ref{a:boundedX} and \ref{b:wiid}-\ref{b:Gint}, for each $i$, as $T \rightarrow \infty$, and uniformly for $\bm{\delta}, \bm{\Delta}$ in a $\phi$-neighborhood of the origin,
    \begin{multline*}
        \exs{ \frac{1}{T} \sum_{t=1}^T \left( \psi_\tau(w_{it} |u_{it} - \bm{z}_{it}'\bm{\Delta}| - \bm{z}_{it}' \bm{\delta}) - \psi_\tau(w_{it} |u_{it} - \bm{z}_{it}'\bm{\Delta}|) \right) } \\
        = \frac{1}{T} \sum_{t=1}^T f_i(0 | \bm{x}_{it}) \bm{z}_{it}'\bm{\delta} + O \left( \|\bm{\Delta}\|^2 \right) + O \left( \|\bm{\delta}\|^2 \right) + \Op{T^{-1/2} (\log T)^{1/2} \phi^{1/2} + T^{-1} (\log T)^2}.
    \end{multline*}
\end{lem}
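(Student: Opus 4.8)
The plan is to follow the template of the proof of Lemma~\ref{lem:boot_cons1}, treating the left-hand side as a bootstrap expectation of a difference of score functions and splitting it into a bias part (its conditional mean given the data) and a stochastic-fluctuation part. The starting point is the elementary identity $\psi_\tau(a - b) - \psi_\tau(a) = I(a < 0) - I(a < b)$, which I would apply with $a = w_{it}|u_{it} - \bm{z}_{it}'\bm{\Delta}|$ and $b = \bm{z}_{it}'\bm{\delta}$. Writing $r_{it} = |u_{it} - \bm{z}_{it}'\bm{\Delta}| \geq 0$ and $d_{it} = \bm{z}_{it}'\bm{\delta}$, the summand becomes $I(w_{it} r_{it} < 0) - I(w_{it} r_{it} < d_{it})$; abbreviating the average of these by $V_i^\ast$ and using that the weights are independent of the data (Assumption~\ref{b:wiid}), I would decompose $\exs{V_i^\ast} = \ex{\exs{V_i^\ast} \mid \bm{X}_i} + \left( \exs{V_i^\ast} - \ex{\exs{V_i^\ast} \mid \bm{X}_i} \right)$ and treat the two pieces separately.

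For the first-moment piece I would condition on a single observation and integrate over $w_{it}$ by Fubini, splitting on the sign of $d_{it}$ and of $w_{it}$. Because $G_W$ places mass $\tau$ on $(-\infty,0)$ by Assumption~\ref{b:Gquan} and has support bounded away from the origin by Assumption~\ref{b:Gsupp}, the term $\exs{I(w_{it} r_{it} < 0)} = \tau$ cancels the leading part of $\exs{I(w_{it} r_{it} < d_{it})}$, leaving only the contribution of the weights whose sign matches that of $d_{it}$. On that event the inner expectation over $u_{it}$ is the symmetric increment $F_i(\bm{z}_{it}'\bm{\Delta} + d_{it}/w) - F_i(\bm{z}_{it}'\bm{\Delta} - d_{it}/w)$. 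A first-order Taylor expansion of $F_i$ about the origin, legitimate since $f_i$ has a bounded derivative (Assumption~\ref{a:minf}) and the covariates are bounded (Assumption~\ref{a:boundedX}), isolates the leading term $f_i(0 \mid \bm{x}_{it})\,\bm{z}_{it}'\bm{\delta}$ once the half-line integrals $\int_{0}^{\infty} w^{-1}\,\ud G_W = -\int_{-\infty}^{0} w^{-1}\,\ud G_W = \tfrac{1}{2}$ are evaluated using Assumptions~\ref{b:Gsupp}--\ref{b:Gint}. The shift $\bm{\Delta}$ enters only through the point at which $f_i$ is evaluated, so its linear contribution cancels by the symmetry of the increment and survives only in the cross term $O(\|\bm{\delta}\|\,\|\bm{\Delta}\|) \leq O(\|\bm{\Delta}\|^2) + O(\|\bm{\delta}\|^2)$; the genuine quadratic remainder in $\bm{\delta}$ is likewise $O(\|\bm{\delta}\|^2)$. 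Averaging over $t$ reproduces the stated bias term and the two $O(\cdot)$ remainders.

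For the fluctuation piece, with the weights integrated out $\exs{V_i^\ast}$ is an average over $t$ of uniformly bounded functions of the original data, which is strictly stationary and geometrically $\beta$-mixing for each $i$ by Assumption~\ref{assume:stationary}. Each summand is nonzero only on a data event of probability $O(\phi)$, so its second moment—and hence the relevant variance proxy—is $\sigma_q^2 = O(\phi)$, exactly as in the proof of Lemma~\ref{lem:boot_cons1}. I would then invoke the Bernstein-type maximal inequality for mixing sequences (KGMR Corollary C.1, with KGMR Lemma C.1 supplying the variance bound) with block and tail parameters $s, q \asymp \log T$ chosen as in GGV Lemma~5, giving $\exs{V_i^\ast} - \ex{\exs{V_i^\ast} \mid \bm{X}_i} = \Op{T^{-1/2}(\log T)^{1/2}\phi^{1/2} + T^{-1}(\log T)^2}$ uniformly over $\bm{\delta},\bm{\Delta}$ in the $\phi$-ball, the uniformity following from the $\phi$-scaling of $\sigma_q^2$ over the associated function class. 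Combining the two pieces yields the claim.

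The hard part will be the bias computation, specifically the bookkeeping that couples the two shifts $\bm{\delta}$ and $\bm{\Delta}$ through the absolute value $|u_{it} - \bm{z}_{it}'\bm{\Delta}|$: across all sign combinations of $w_{it}$ and $d_{it}$ one must check that the $\tau$-terms cancel exactly, that the $w^{-1}$ half-line integrals are finite by Assumption~\ref{b:Gsupp} and equal $\tfrac{1}{2}$ by Assumption~\ref{b:Gint}, and that the $\bm{\Delta}$-linear piece vanishes so that only the advertised quadratic remainders remain. A secondary technical point is applying the maximal inequality uniformly over the $\phi$-neighborhood rather than pointwise, which I would handle by controlling the relevant function class through its envelope and the variance proxy $\sigma_q^2 = O(\phi)$.
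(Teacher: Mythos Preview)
Your proposal is correct and follows essentially the same route as the paper: the same score identity $\psi_\tau(a-b)-\psi_\tau(a)=I(a<0)-I(a<b)$, the same case split on the signs of $\bm{z}_{it}'\bm{\delta}$ and $w_{it}$ leading to the symmetric increment $F_i(\bm{z}_{it}'\bm{\Delta}+d_{it}/w)-F_i(\bm{z}_{it}'\bm{\Delta}-d_{it}/w)$, Taylor expansion combined with the half-line integrals from Assumption~\ref{b:Gint}, and the same cross-term bound $O(\|\bm{\delta}\|\|\bm{\Delta}\|)\leq O(\|\bm{\Delta}\|^2)+O(\|\bm{\delta}\|^2)$. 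The only cosmetic difference is that for the fluctuation piece the paper invokes KGMR Proposition~C.2 directly on the function class $\mathcal{G}_\phi=\{(w,u,\bm{z})\mapsto I(w|u-\bm{z}'\bm{D}|<\bm{z}'\bm{d})-I(w|u-\bm{z}'\bm{D}|<0):\|\bm{d}\|,\|\bm{D}\|\leq\phi\}$ to obtain the uniform bound in one step, whereas you cite Corollary~C.1 and then separately upgrade to uniformity via the envelope and variance proxy; either path gives the stated $O_p$ rate.
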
     

\begin{proof}[Proof of Lemma~\ref{lem:boot_cons2}]
Start with the identity $\psi(a - b) - \psi(a) = -I(a < b) + I(a < 0) = I(b \leq a < 0) - I(0 \leq a < b)$.  This implies
\begin{multline*}
\exs{ \psi_\tau(w_{it} |u_{it} - \bm{z}_{it}'\bm{\Delta}| - \bm{z}_{it}' \bm{\delta}) - \psi_\tau(w_{it} |u_{it} - \bm{z}_{it}'\bm{\Delta}|) } \\
= \exs{ I(\bm{z}_{it}'\bm{\delta} \leq w_{it} |u_{it} - \bm{z}_{it}'\bm{\Delta}| < 0) - I(0 \leq w_{it} |u_{it} - \bm{z}_{it}'\bm{\Delta}| < \bm{z}_{it}'\bm{\delta}) }.
\end{multline*}
Consider the expectation of the first term on the right-hand side conditional on $\bm{x}_{it}$, which is nonzero when $\bm{z}_{it}'\bm{\delta} < 0$:
\begin{multline*}
\ex{ \exs{ I(\bm{z}_{it}'\bm{\delta} \leq w_{it} |u_{it} - \bm{z}_{it}'\bm{\Delta}| < 0) } | \bm{x}_{it}} \\
=  \ex{ \exs{ I(\bm{z}_{it}'\bm{\Delta} - \bm{z}_{it}'\bm{\delta} / w_{it} \leq u_{it} \leq \bm{z}_{it}'\bm{\Delta} + \bm{z}_{it}'\bm{\delta} / w_{it}) I(w_{it} < 0)} | \bm{x}_{it}} I(\bm{z}_{it}'\bm{\delta} < 0).
\end{multline*}
Fubini's Theorem, two Taylor expansions and Assumptions~\ref{a:minf}, \ref{a:boundedX}, \ref{b:Gsupp} and~\ref{b:Gint} imply that this can be rewritten
\begin{multline*}
\exs{ \left( F_i(\bm{z}_{it}'\bm{\Delta} + \bm{z}_{it}'\bm{\delta} / w_{it} | \bm{x}_{it}) - F_i(\bm{z}_{it}'\bm{\Delta} - \bm{z}_{it}'\bm{\delta} / w_{it} | \bm{x}_{it}) \right) I(w_{it} < 0) } I(\bm{z}_{it}'\bm{\delta} < 0) \\
= \left( f_i(0 | \bm{x}_{it}) \bm{z}_{it}'\bm{\delta} + O\left( (\|\bm{\Delta}\| + \|\bm{\delta}\|)^2 \right) \right) I(\bm{z}_{it}'\bm{\delta} < 0).
\end{multline*}
Analogous calculations imply an identical approximation on the event $\{\bm{z}_{it}'\bm{\delta} > 0\}$.  Combining these two expressions implies that
\begin{multline*}
\ex{ \exs{ \psi_\tau(w_{it} |u_{it} - \bm{z}_{it}'\bm{\Delta}| - \bm{z}_{it}' \bm{\delta}) - \psi_\tau(w_{it} |u_{it} - \bm{z}_{it}'\bm{\Delta}|) } | \bm{x}_{it}} = f_i(0 | \bm{x}_{it}) \bm{z}_{it}'\bm{\delta} \\
+ O\left( (\|\bm{\Delta}\| + \|\bm{\delta}\|)^2 \right).
\end{multline*}
Because $2ab \leq a^2 + b^2$ we can break the order of the approximation remainder into two pieces as in the statement of the lemma.

Noting that $|\psi_\tau(a - b) - \psi_\tau(a)| \leq I(|a| \leq |b|)$, recalling Assumption~\ref{b:Gsupp} and letting $c_{\text{min}} = c_1 \wedge c_2$, we have
\begin{multline*}
\var{\exs{\frac{1}{T} \sum_{t=1}^T \left( \psi_\tau(w_{it} |u_{it} - \bm{z}_{it}'\bm{\Delta}| - \bm{z}_{it}' \bm{\delta}) - \psi_\tau(w_{it} |u_{it} - \bm{z}_{it}'\bm{\Delta}|) \right)}} \\
\leq \var{\frac{1}{T} \sum_{t=1}^T I(|u_{it} - \bm{z}_{it}'\bm{\Delta}| \leq c_{\text{min}}^{-1} |\bm{z}_{it}'\bm{\delta}|)}.
\end{multline*}
Using KGMR Lemma C.1 (with $\delta = 1$) and Assumption~\ref{assume:stationary}, this latter variance is $O(\|\bm{\delta}\|)$.  Then KGMR Proposition C.2, applied to the class of functions $\mathcal{G}_\phi = \{(w, u, \bm{z}) \mapsto I(w|u - \bm{z}\tr \bm{D}| < \bm{z}\tr \bm{d}) - I(w|u - \bm{z}\tr \bm{D}| < 0), \|\bm{d}\|, \|\bm{D}\| \leq \phi\}$ (with $s = C_s \log T$ and $q = C_q \log T$ chosen appropriately), implies that
\begin{multline*}
    \sup_{g \in \mathcal{G}_\phi} \left\lvert \exs{\frac{1}{T} \sum_t g_t(\bm{d}, \bm{D})} - \ex{\exs{\frac{1}{T} \sum_t g_t(\bm{d}, \bm{D})}} \right\rvert \\ = \Op{T^{-1/2} (\log T)^{1/2} \phi^{1/2} + T^{-1} (\log T)^2}.
\end{multline*}
This implies the approximation holds uniformly in the neighborhood asymptotically.
\end{proof}

Lemma~\ref{lem:4K} shows~\eqref{4Kstatement}, while Lemmas~\ref{lem:alphas}-\ref{lem:4Kexp} below support Lemma~\ref{lem:4K}.  For reference define components of the sum in~\eqref{lem:4K} as
\begin{equation} \label{kcal}
    \mathcal{K}_i(\bm{\theta}_1, \bm{\theta}_2) = \mathbb{K}_{i}^{\theta*}(\bm{\theta}_1) - \mathbb{K}_{i}^{\theta*}(\bm{\theta}_2) - \exs{\mathbb{K}_{i}^{\theta*}(\bm{\theta}_1) + \mathbb{K}_{i}^{\theta*}(\bm{\theta}_2)}.
\end{equation}
To find the stochastic order of these terms it is convenient to define the class of functions
\begin{equation} \label{def:Gdelta}
  \mathcal{G}(\delta) = \{(w, u, \bm{z}) \mapsto (I(w|u| < \bm{z}\tr \bm{b}_1) - I(w|u| < \bm{z}\tr \bm{b}_2) ) I(\|\bm{z}\| \leq M), \|\bm{b}_1 - \bm{b}_2\| \leq \delta\}
\end{equation}
and define
\begin{equation} \label{def:empproc}
  \|\mathbb{P}_{iT}^\ast - \mathbb{P}_{iT}\|_{\mathcal{G}(\delta)} = \sup_{g \in \mathcal{G}(\delta)} \left| \frac{1}{T} \sum_{t=1}^T \left( g(w_{it}, u_{it}, \bm{z}_{it}) - \exs{g(w_{it}, u_{it}, \bm{z}_{it})} \right) \right|.
\end{equation}
Finally, let
\begin{equation} \label{def:chi}
  \chi_T(\delta) = T^{-1/2} \log T \delta^{1/2} + T^{-1} (\log T)^2.
\end{equation}
KGMR Proposition C.2 and the argument in GGV Lemma 5 can be applied to the class $\mathcal{G}(\delta)$ (it is their $\mathcal{G}_2(\delta)$) with the bootstrap weight distribution and conditional on the observations.  GGV Lemma 5 states that there is a constant $C$ such that for any $\kappa > 1$,
\begin{equation} \label{GGVL5}
  \probs{ \sup_{\delta \in (0, 1)} \frac{\|\mathbb{P}_{iT}^\ast - \mathbb{P}_{iT}\|_{\mathcal{G}(\delta)}}{\chi(\delta)} \geq C \kappa^2 } \leq T^{-\kappa}.
\end{equation}
These definitions will be used repeatedly in Lemmas~\ref{lem:4K} and~\ref{lem:alphas}.  It is also useful to define
\begin{equation} \label{def:alphas}
  \alpha_i^{\circ\ast} = \argmin_a \frac{1}{T} \sum_{t=1}^T \rho_\tau(y_{it}^\ast - \bm{x}_{it}' \hat{\bm{\beta}} - a),
\end{equation}
that is, the estimator of the $i$-th individual effect conditional on the data and $\hat{\bm{\beta}}$.  Then define $\bm{\theta}^{\circ\ast} = (\hat{\bm{\beta}}, \alpha_1^{\circ\ast}, \ldots, \alpha_N^{\circ\ast})$ and $\bm{\theta}_i^{\circ\ast} = (\hat{\bm{\beta}}, \alpha_i^{\circ\ast})$.  These are used as an intermediate step between $\alpha_i^\ast$ and $\hat{\alpha}_i$ or between $\bm{\theta}_i^\ast$ and $\hat{\bm{\theta}}_i$ (or $\bm{\theta}^\ast$ and $\hat{\bm{\theta}}$) in the lemmas that follow.

\begin{lem} \label{lem:4K}
Under the assumptions in Theorem~\ref{thm:boot},
\begin{multline*}
    \frac{1}{N} \sum_{i=1}^N \left( \mathbb{K}_{i}^{\theta*}(\bm{\theta}_{i}^*) - \mathbb{K}_{i}^{\theta*}(\hat{\bm{\theta}}_i) - \exs{\mathbb{K}_{i}^{\theta*}(\bm{\theta}^*_i) + \mathbb{K}_{i}^{\theta*}(\hat{\bm{\theta}}_i)} \right) \\
    = \Ops{ T^{-1/2} \log T \|\bm{\beta}^\ast - \hat{\bm{\beta}}\|^{1/2} + T^{-1} (\log T)^2 + N^{-1/2} T^{-2/3}}.
\end{multline*}
\end{lem}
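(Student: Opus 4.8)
The plan is to recast the centered increment $\mathcal{K}_i(\bm{\theta}_i^\ast, \hat{\bm{\theta}}_i)$ of~\eqref{kcal} as an increment of the bootstrap empirical process over the indicator class $\mathcal{G}(\delta)$ from~\eqref{def:Gdelta}, to which the maximal inequality~\eqref{GGVL5} applies, and then to exploit independence across units to average the individual contributions away. First I would record that, because $\mathbb{H}_i^{\beta\ast}$ carries the weight $\bm{x}_{ijs}$ while $\mathbb{H}_i^{\alpha\ast}$ carries the weight $1$, the combination collapses to $\mathbb{K}_i^{\theta\ast}(\bm{\theta}_i) = (bl)^{-1}\sum_{j,s} \check{\bm{x}}_{ijs}\, \psi_\tau\!\left(w_{ij}|\hat{u}_{ijs}| - \bm{z}_{ijs}'(\bm{\theta}_i - \hat{\bm{\theta}}_i)\right)$, exactly as in the opening display of the proof of Lemma~\ref{lem:lstar_choose}. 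Using $\psi_\tau(a-b)-\psi_\tau(a) = I(b \le a < 0) - I(0 \le a < b)$, the difference $\mathbb{K}_i^{\theta\ast}(\bm{\theta}_1) - \mathbb{K}_i^{\theta\ast}(\bm{\theta}_2)$ becomes a $\check{\bm{x}}$-weighted average of indicator differences of the form defining $\mathcal{G}(\delta)$ with $\bm{b}_k = \bm{\theta}_k - \hat{\bm{\theta}}_i$ and $\|\bm{b}_1 - \bm{b}_2\| = \|\bm{\theta}_1 - \bm{\theta}_2\| \le \delta$; since $\|\check{\bm{x}}_{ijs}\|$ is uniformly bounded under Assumptions~\ref{a:boundedX} and~\ref{assume:Avar_dependent} (through consistency of $\bar{\varphi}_i$ and $\bar{\bm{g}}_i$), the weights can be absorbed without changing the entropy of the class, so that $|\mathcal{K}_i(\bm{\theta}_1,\bm{\theta}_2)|$ is controlled by $\|\mathbb{P}_{iT}^\ast - \mathbb{P}_{iT}\|_{\mathcal{G}(\delta)}$ of~\eqref{def:empproc} with radius $\delta \asymp \|\bm{\theta}_1 - \bm{\theta}_2\|$, the residuals $\hat{u}_{ijs}$ being fixed given the sample.

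Next I would telescope through the intermediate estimator $\bm{\theta}_i^{\circ\ast} = (\hat{\bm{\beta}}, \alpha_i^{\circ\ast})$ of~\eqref{def:alphas}, writing $\mathcal{K}_i(\bm{\theta}_i^\ast, \hat{\bm{\theta}}_i) = \mathcal{K}_i(\bm{\theta}_i^\ast, \bm{\theta}_i^{\circ\ast}) + \mathcal{K}_i(\bm{\theta}_i^{\circ\ast}, \hat{\bm{\theta}}_i)$, which separates the increment common to all units (driven by $\bm{\beta}^\ast - \hat{\bm{\beta}}$) from the purely individual increment (driven by $\alpha_i^{\circ\ast} - \hat{\alpha}_i$, with $\bm{\beta}$ held at $\hat{\bm{\beta}}$). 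For the first piece the relevant radius is $\|\bm{\beta}^\ast - \hat{\bm{\beta}}\|$ up to the adjustment $\alpha_i^\ast - \alpha_i^{\circ\ast}$, which Lemma~\ref{lem:alphas} controls; applying~\eqref{GGVL5} uniformly over the class at $\delta \asymp \|\bm{\beta}^\ast - \hat{\bm{\beta}}\|$, simultaneously over $i$ via a union bound with $\kappa$ large (using $N = o(T)$, so $N T^{-\kappa}\to 0$), delivers a bound of order $\chi_T(\|\bm{\beta}^\ast - \hat{\bm{\beta}}\|) = T^{-1/2}\log T\,\|\bm{\beta}^\ast-\hat{\bm{\beta}}\|^{1/2} + T^{-1}(\log T)^2$. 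This is a uniform, worst-case fluctuation that does not benefit from averaging over units, so the $T^{-1}(\log T)^2$ term survives in $N^{-1}\sum_i$ without an $N^{-1/2}$ gain, matching the first two terms of~\eqref{4Kstatement}.

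The second piece produces the $N^{-1/2}T^{-2/3}$ term and is where I expect the main difficulty. Each summand $\mathcal{K}_i(\bm{\theta}_i^{\circ\ast}, \hat{\bm{\theta}}_i)$ is centered under $\exs{\cdot}$ by construction, and the summands are independent across $i$ because the weights $\{w_{ij}\}$ are drawn independently across units (Assumption~\ref{b:wiid}). Using the preliminary estimate $\sup_i |\alpha_i^{\circ\ast} - \hat{\alpha}_i| = \Ops{T^{-1/2}(\log T)^{1/2}}$ from Lemma~\ref{lem:alphas}, the indicator-difference structure makes each centered summand's conditional variance shrink with this radius, so a conditional Bernstein-type inequality across the $N$ independent units supplies the extra $N^{-1/2}$ factor; balancing the individual magnitude against the uniformization over the random argument $\bm{\theta}_i^{\circ\ast}$ yields $\Ops{N^{-1/2}T^{-2/3}}$, and this accounting is isolated in Lemma~\ref{lem:4Kexp}. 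Adding the two pieces gives~\eqref{4Kstatement}.

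The genuinely delicate points are twofold. First, the radii $\delta_i = \|\bm{\theta}_i^\ast - \hat{\bm{\theta}}_i\|$ are themselves random and data-dependent, so~\eqref{GGVL5} must be invoked on the event where they lie in a shrinking deterministic envelope (justified by Lemmas~\ref{lem:alphasmall} and~\ref{lem:alphas}), with the exceptional probabilities $T^{-\kappa}$ controlled uniformly over $i$. Second, the bootstrap data are $\beta$-mixing only with the degenerate coefficients $\beta(k)=1$ for $k < l$ and $\beta(k)=0$ for $k \ge l$, as noted in the proof of Theorem~\ref{thm:boot}; since $l^{\text{o}} = \Op{1}$ the blocking is harmless up to constants, but the finite cell length must be fed explicitly into the maximal inequality. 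Keeping these two effects consistent while trading the uniform $\chi_T$ fluctuation against the cross-unit $N^{-1/2}$ averaging is the crux of the argument.
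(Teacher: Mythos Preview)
Your telescoping through $\bm{\theta}_i^{\circ\ast}$ and the treatment of the first piece $\mathcal{K}_i(\bm{\theta}_i^\ast,\bm{\theta}_i^{\circ\ast})$ match the paper's argument exactly, including the reduction of the radius to $\|\bm{\beta}^\ast-\hat{\bm{\beta}}\|$ via Lemma~\ref{lem:alphas}. The second piece, however, contains a genuine gap.

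You assert that $\mathcal{K}_i(\bm{\theta}_i^{\circ\ast},\hat{\bm{\theta}}_i)$ is ``centered under $\exs{\cdot}$ by construction.'' This is false. The centering in the definition~\eqref{kcal} subtracts the map $\bm{\theta}\mapsto K_i^\theta(\bm{\theta}) := \exs{\mathbb{K}_i^{\theta\ast}(\bm{\theta})}$ evaluated at $\bm{\theta}_i^{\circ\ast}$, with $\bm{\theta}$ held fixed in the conditional expectation. But $\alpha_i^{\circ\ast}$ is itself a function of the bootstrap weights $\{w_{ij}\}_j$, so when you take $\exs{\cdot}$ of the whole expression, the randomness in $\alpha_i^{\circ\ast}$ correlates with the randomness in the summands of $\mathbb{K}_i^{\theta\ast}$ and the two expectations do not coincide: $\exs{\mathcal{K}_i(\bm{\theta}_i^{\circ\ast},\hat{\bm{\theta}}_i)} = \exs{\mathbb{K}_i^{\theta\ast}(\bm{\theta}_i^{\circ\ast}) - K_i^\theta(\bm{\theta}_i^{\circ\ast})}\neq 0$. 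Controlling this bias is precisely the content of Lemma~\ref{lem:4Kexp}, which shows it is $\Op{T^{-1}(\log T)^2}$ via a leave-one-cell-out decoupling: one replaces $\alpha_i^{\circ\ast}$ by $\alpha_i^{(-j)}$ (independent of the $j$th cell's weight) and bounds the discrepancy through Lemma~\ref{lem:LOO}. You have misread Lemma~\ref{lem:4Kexp} as producing the $N^{-1/2}T^{-2/3}$ term; it produces the bias, which is why $T^{-1}(\log T)^2$ appears without an $N^{-1/2}$ factor in~\eqref{4Kstatement}.

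The $N^{-1/2}T^{-2/3}$ term comes from a variance bound that is also not quite what you sketch. A direct ``variance shrinks linearly with the radius'' argument at $\delta\asymp T^{-1/2}(\log T)^{1/2}$ would give the wrong exponent. The paper instead truncates at level $T^{-2/3}$: it shows $\probs{\|\mathcal{K}_i(\bm{\theta}_i^{\circ\ast},\hat{\bm{\theta}}_i)\|>T^{-2/3}}=\Op{T^{-2}}$ using~\eqref{GGVL5} with $\chi_T(T^{-1/2}(\log T)^{1/2})=o(T^{-2/3})$, then combines this tail bound with the crude envelope $\|\mathcal{K}_i\|\le 4\check M$ to get $\mathrm{Var}^\ast=\Op{T^{-4/3}}$, and finally averages independent summands. (Also, the rate $\sup_i|\alpha_i^{\circ\ast}-\hat{\alpha}_i|=\Ops{T^{-1/2}(\log T)^{1/2}}$ is not in Lemma~\ref{lem:alphas}, which concerns $\alpha_i^\ast-\alpha_i^{\circ\ast}$; it is established separately by rerunning the argument of Lemma~\ref{lem:alphasmall} with $\bm{\beta}$ frozen at $\hat{\bm{\beta}}$.)
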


\begin{proof}[Proof of Lemma~\ref{lem:4K}]

and recall the definition of $\alpha_i^{\circ\ast}$ from Lemma~\ref{lem:alphas}.  Note that
  \begin{equation} \label{4Ksum}
    \frac{1}{N} \sum_{i=1}^N \mathcal{K}_i(\bm{\theta}_i^\ast, \hat{\bm{\theta}}_i) = \frac{1}{N} \sum_{i=1}^N \mathcal{K}_i(\bm{\theta}_i^\ast, \bm{\theta}_i^{\circ\ast}) + \frac{1}{N} \sum_{i=1}^N \mathcal{K}_i(\bm{\theta}_i^{\circ\ast}, \hat{\bm{\theta}}_i).
  \end{equation}

  Recalling definitions~\eqref{def:Gdelta} and~\eqref{def:empproc} and defining $\check{M} = M(1 + \overline{f} / \underline{f})$, which bounds $\check{\bm{x}}_{it}$, the first term on the right-hand side of~\eqref{4Ksum} can be bounded by noting that 
  \begin{align}
    \sup_i \left| \mathcal{K}_i(\bm{\theta}_i^\ast, \bm{\theta}_i^{\circ\ast}) \right| &\leq \sup_i 2\check{M} \|\mathbb{P}_{iT}^\ast - \mathbb{P}_{iT}\|_{\mathcal{G}(\|\bm{\beta}^\ast - \hat{\bm{\beta}}\| + |\alpha^\ast - \alpha_i^{\circ\ast}|)} \notag \\
    {} &= \Ops{ \chi_T(\|\bm{\beta}^\ast - \hat{\bm{\beta}}\| + |\alpha^\ast - \alpha_i^{\circ\ast}|) } \notag \\
    {} &= \Ops{ \left( \|\bm{\beta}^\ast - \hat{\bm{\beta}}\| + |\alpha^\ast - \alpha_i^{\circ\ast}| \right)^{1/2} T^{-1/2} \log T + T^{-1} (\log T)^2 } \notag \\
    \intertext{and Lemma~\ref{lem:alphas} implies}
    {} &= \Ops{ \|\bm{\beta}^\ast - \hat{\bm{\beta}}\|^{1/2} T^{-1/2} \log T + T^{-1} (\log T)^2 }. \label{4KorderA}
  \end{align}

  The second term on the right-hand side of~\eqref{4Ksum} is more involved.  Lemma~\ref{lem:4Kexp} shows that
  \begin{equation*}
    \sup_i \left\lVert \exs{ \mathcal{K}_i(\bm{\theta}_i^{\circ\ast}, \hat{\bm{\theta}}_i) } \right\rVert = \Op{T^{-1} (\log T)^2}.
  \end{equation*}
  Furthermore, we know that
  \begin{align*}
    \left\lVert \mathcal{K}_i(\bm{\theta}_i^{\circ\ast}, \hat{\bm{\theta}}_i) \right\rVert &\leq 2\check{M} \|\mathbb{P}_{iT}^\ast - \mathbb{P}_{iT}\|_{\mathcal{G}(|\alpha_i^{\circ\ast} - \hat{\alpha}_i|)}, \\
    \left\lVert \mathcal{K}_i(\bm{\theta}_i^{\circ\ast}, \hat{\bm{\theta}}_i) \right\rVert &\leq 4\check{M} \; a.s.,
  \end{align*}
  and these terms are independent over $i$ (recall that $\hat{\bm{\theta}}$ is fixed under the bootstrap measure).  An argument identical to that made for Lemma~\ref{lem:alphasmall} but conditional on the data and $\hat{\bm{\beta}}$ (instead of estimating $\bm{\beta}^\ast$ along with the $\alpha_i^\ast$ terms) implies that
  \begin{equation} \label{acirc_small}
    \sup_i |\alpha_i^{\circ\ast} - \hat{\alpha}_i| = \Ops{T^{-1/2} (\log T)^{1/2}}.
  \end{equation}
  Then
  \begin{align*}
    {} &\phantom{=} \probs{\sup_i \|\mathcal{K}_i(\bm{\theta}_i^{\circ\ast}, \hat{\bm{\theta}}_i)\| > T^{-2/3}} \\
    {} &\leq \probs{\sup_i |\alpha_i^{\circ\ast} - \hat{\alpha}_i| > C T^{-1/2} (\log T)^{1/2}} \\
    {} &\phantom{=} \qquad \qquad \qquad + \probs{2\check{M} \|\mathbb{P}_{iT}^\ast - \mathbb{P}_{iT}\|_{\mathcal{G}(cT^{-1/2}(\log T)^{1/2})} > T^{-2/3}} \\
    {} &= \Op{T^{-2}},
  \end{align*}
  where we use~\eqref{acirc_small} and $\chi_T(CT^{-1/2} (\log T)^{1/2}) = C^{1/2} T^{-3/4} (\log T)^{5/4} + T^{-1} (\log T)^2 = o(T^{-2/3})$.

  Then letting $\mathcal{K}_{ij}$ denote the $j$-th coordinate of $\mathcal{K}_i$,
  \begin{align*}
    {} &\phantom{=} \sup_{i,j} \text{Var}^\ast \left( \mathcal{K}_{ij}(\bm{\theta}_i^{\circ\ast}, \hat{\bm{\theta}}_i) \right) \\
    {} &\leq \sup_{i,j} \exs{\mathcal{K}_{ij}^2(\bm{\theta}_i^{\circ\ast}, \hat{\bm{\theta}}_i)} \\
    {} &= \sup_{i,j} \exs{ \mathcal{K}_{ij}^2(\bm{\theta}_i^{\circ\ast}, \hat{\bm{\theta}}_i) \left( I(|\mathcal{K}_{ij}(\bm{\theta}_i^{\circ\ast}, \hat{\bm{\theta}}_i)| > T^{-2/3}) + I(|\mathcal{K}_{ij}(\bm{\theta}_i^{\circ\ast}, \hat{\bm{\theta}}_i)| \leq T^{-2/3}) \right)} \\
    {} &\leq 16\check{M}^2 \sup_{i,j} \probs{ |\mathcal{K}_{ij}(\bm{\theta}_i^{\circ\ast}, \hat{\bm{\theta}}_i)| > T^{-2/3} } + T^{-4/3} \\
    {} &= \Op{T^{-4/3}}.
  \end{align*}

  Therefore
  \begin{align}
    \frac{1}{N} \sum_{i=1}^N \mathcal{K}_i(\bm{\theta}_i^{\circ\ast}, \hat{\bm{\theta}}_i) &= \Ops{ \sup_{i,j} \left| \exs{\mathcal{K}_{ij}(\bm{\theta}_i^{\circ\ast}, \hat{\bm{\theta}}_i)} + \sqrt{\text{Var}^\ast(\mathcal{K}_{ij}(\bm{\theta}_i^{\circ\ast}, \hat{\bm{\theta}}_i)) / N} \right| } \notag \\
    {} &= \Ops{T^{-1} (\log T)^2 + N^{-1/2} T^{-2/3}}. \label{4KorderB}
  \end{align}

  Then put~\eqref{4KorderA} and~\eqref{4KorderB} in~\eqref{4Ksum} to find the result.

\end{proof}

\begin{lem} \label{lem:alphas}
  Under the assumptions of Theorem~\ref{thm:boot}, the estimators $\alpha_i^{\circ\ast}$, $i = 1, \ldots, N$ defined in~\eqref{def:alphas} satisfy
  \begin{equation*}
      \sup_i |\alpha_i^\ast - \alpha_i^{\circ\ast}| = \Ops{\|\bm{\beta}^\ast - \hat{\bm{\beta}}\| + T^{-1} (\log T)^2}
  \end{equation*}
  as $N, T \rightarrow \infty$.
\end{lem}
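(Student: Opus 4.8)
The plan is to pin down both $\alpha_i^\ast$ and $\alpha_i^{\circ\ast}$ through the approximate first-order conditions of their scalar quantile-regression problems and then subtract. Since the bootstrap objective in~\eqref{eq:feqrboot} is additively separable across units in the $\alpha$-direction, each $\alpha_i^\ast$ minimizes the profiled criterion given $\bm{\beta}^\ast$, so that $\mathbb{H}_{i}^{\alpha\ast}(\bm{\theta}_i^\ast) = \Ops{T^{-1}}$; by definition~\eqref{def:alphas}, $\alpha_i^{\circ\ast}$ satisfies $\mathbb{H}_{i}^{\alpha\ast}(\bm{\theta}_i^{\circ\ast}) = \Ops{T^{-1}}$ with $\bm{\theta}_i^{\circ\ast} = (\hat{\bm{\beta}}, \alpha_i^{\circ\ast})$, i.e.\ with $\hat{\bm{\beta}}$ replacing $\bm{\beta}^\ast$. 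In both cases the $\Ops{T^{-1}}$ bound comes from the finitely many kinks of the quantile subgradient.

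First I would write each score relative to the common baseline $\mathbb{H}_{i}^{\alpha\ast}(\hat{\bm{\theta}}_i)$ as in~\eqref{alstar_exp}, splitting it into a bootstrap expectation and a centered part. Lemma~\ref{lem:boot_cons2} linearizes the expectations exactly as in~\eqref{Hstar_alpha_exp}, so that subtracting the two first-order conditions cancels the baseline $\mathbb{H}_{i}^{\alpha\ast}(\hat{\bm{\theta}}_i)$ and leaves
\begin{equation*}
  \bar{\varphi}_i \left( \alpha_i^\ast - \alpha_i^{\circ\ast} \right) = -\bar{\bm{g}}_i'(\bm{\beta}^\ast - \hat{\bm{\beta}}) - \mathcal{C}_i - \mathcal{R}_i + \Ops{T^{-1}},
\end{equation*}
where $\mathcal{C}_i$ is the difference of the two centered empirical-process terms and $\mathcal{R}_i$ collects the quadratic and $\chi_T$-type approximation remainders. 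Since $\bar{\bm{g}}_i$ is bounded and $\bar{\varphi}_i$ is bounded away from zero under Assumptions~\ref{a:minf} and~\ref{a:boundedX}, the leading term contributes the target order $\|\bm{\beta}^\ast - \hat{\bm{\beta}}\|$ and I may divide through by $\bar{\varphi}_i$.

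Next I would bound the two remaining pieces uniformly over $i$. The score difference inside $\mathcal{C}_i$ lies in the class $\mathcal{G}(\delta)$ of~\eqref{def:Gdelta} with radius $\delta = \|\bm{\beta}^\ast - \hat{\bm{\beta}}\| + |\alpha_i^\ast - \alpha_i^{\circ\ast}|$, so the maximal inequality~\eqref{GGVL5} combined with a union bound over the $N$ units --- legitimate because $N T^{-\kappa}\to 0$ for large $\kappa$ under $N(\log T)^4/T = o(1)$ --- gives $\sup_i(|\mathcal{C}_i| + |\mathcal{R}_i|) = \Ops{\chi_T(\|\bm{\beta}^\ast - \hat{\bm{\beta}}\| + \sup_i|\alpha_i^\ast - \alpha_i^{\circ\ast}|)} + \Ops{\|\bm{\beta}^\ast - \hat{\bm{\beta}}\| + T^{-1}(\log T)^2}$, where the quadratic pieces of $\mathcal{R}_i$, namely $\sup_i\|\hat{\bm{\theta}}_i - \bm{\theta}_{i0}\|^2$, $\sup_i|\alpha_i^\ast - \hat{\alpha}_i|^2$ and $\sup_i|\alpha_i^{\circ\ast} - \hat{\alpha}_i|^2$, are each $\Ops{T^{-1}\log T}$ by the KGMR size estimate, Lemma~\ref{lem:alphasmall} and~\eqref{acirc_small}. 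Writing $\delta_\infty = \sup_i|\alpha_i^\ast - \alpha_i^{\circ\ast}|$ and using $(x+y)^{1/2}\le x^{1/2}+y^{1/2}$ together with AM--GM to absorb the cross term, this reduces to a self-referential inequality $\delta_\infty \le a + b\,\delta_\infty^{1/2}$ with $a = \Ops{\|\bm{\beta}^\ast - \hat{\bm{\beta}}\| + T^{-1}(\log T)^2}$ and $b = \Ops{T^{-1/2}\log T}$. The elementary bound $\delta \le a + b\delta^{1/2}\Rightarrow \delta \le 4\max\{a, b^2\}$ used in the main proof, with $b^2 = \Ops{T^{-1}(\log T)^2}$, then yields the claim.

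I expect the main obstacle to be the uniform control of the centered difference $\mathcal{C}_i$: it must be evaluated at the data-dependent radius $\delta$ and then taken to the maximum over all $N$ units, which is exactly where~\eqref{GGVL5} and the rate restriction $N(\log T)^4/T = o(1)$ are needed. A secondary difficulty, as in Lemma~\ref{lem:alphasmall}, is that $|\alpha_i^\ast - \alpha_i^{\circ\ast}|$ enters both linearly (through $\bar{\varphi}_i$) and under a square root (through $\chi_T$), so the estimate cannot be read off directly and must be closed through the quadratic inequality.
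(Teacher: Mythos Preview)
Your argument is correct and leads to the same self-referential inequality $\delta_\infty \le a + b\,\delta_\infty^{1/2}$ that the paper also closes with the elementary bound $\delta \le 4\max\{a,b^2\}$. The difference is in how the two sides of that inequality are produced. The paper follows the CDF/quantile-inversion route of GGV Lemma~7: it works with $\hat{\mathbb{F}}_{iT}^\ast(y,\bm{\beta}) = T^{-1}\sum_t I(y_{it}^\ast - \bm{x}_{it}'\bm{\beta}\le y)$, uses that $\alpha_i^\ast$ and $\alpha_i^{\circ\ast}$ are sample $\tau$-quantiles so that $|\hat{\mathbb{F}}_{iT}^\ast(\alpha_i^\ast,\bm{\beta}^\ast)-\hat{\mathbb{F}}_{iT}^\ast(\alpha_i^{\circ\ast},\hat{\bm{\beta}})|\le 2/T$, and then peels off the $\bm{\beta}$-perturbation and the centered-process error to expose the expected CDF increment, which is linear in $|\alpha_i^\ast-\alpha_i^{\circ\ast}|$. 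You instead subtract the two score expansions~\eqref{alstar_exp}--\eqref{Hstar_alpha_exp} already built for the main theorem, so the baseline $\mathbb{H}_i^{\alpha\ast}(\hat{\bm{\theta}}_i)$ cancels and Lemma~\ref{lem:boot_cons2} supplies the linearization without any new computation. Your route is more economical here because every ingredient (the class $\mathcal{G}(\delta)$, the bound~\eqref{GGVL5}, the preliminary sizes from Lemma~\ref{lem:alphasmall} and~\eqref{acirc_small}) has already been established; the paper's route is closer to the original GGV template and makes the quantile-inversion structure explicit, which avoids carrying the quadratic remainders $\sup_i\|\bm{\theta}_i^\ast-\hat{\bm{\theta}}_i\|^2$ that you must separately control. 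Either way the result follows.
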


\begin{proof}[Proof of Lemma~\ref{lem:alphas}]
This follows the argument for GGV Lemma 7 with alterations made for the bootstrap distribution.  For this proof, define the functions
\begin{equation*}
    \hat{\mathbb{F}}_{iT}^\ast(y, \bm{\beta}) = \frac{1}{T} \sum_{t=1}^T I(y_{it}^\ast - \bm{x}_{it}\pr \bm{\beta} \leq y), \qquad \mathbb{F}_{iT}^\ast(y, \bm{\beta}) = \exs{\hat{\mathbb{F}}_{iT}^\ast(y, \bm{\beta})},
\end{equation*}
  and recall $\chi_T(\delta)$ defined in~\eqref{def:chi}.

  Under Assumptions~\ref{b:Gsupp}, \ref{a:minf} and \ref{a:boundedX}, for $\bm{\beta}$ close to $\hat{\bm{\beta}}$, using the function class $\mathcal{G}(\delta)$ defined in~\eqref{def:Gdelta} and result~\eqref{GGVL5},
\begin{align*}
  {} &\phantom{=} \sup_i \left\lvert \hat{\mathbb{F}}_{iT}^\ast(y, \bm{\beta}) - \hat{\mathbb{F}}_{iT}^\ast(y, \hat{\bm{\beta}}) \right\rvert \\
  {} &= \sup_i \ex{\exs{\left\lvert \hat{\mathbb{F}}_{iT}^\ast(y, \bm{\beta}) - \hat{\mathbb{F}}_{iT}^\ast(y, \hat{\bm{\beta}}) \right\rvert } \rvert \bm{X}_i } + \Op{\chi_T(\|\bm{\beta} - \hat{\bm{\beta}}\|)} \\
    {} &= \sup_i \exs{\ex{\left\lvert \hat{\mathbb{F}}_{iT}^\ast(y, \bm{\beta}) - \hat{\mathbb{F}}_{iT}^\ast(y, \hat{\bm{\beta}}) \right\rvert } \rvert \bm{X}_i} + \Op{ \chi_T(\|\bm{\beta} - \hat{\bm{\beta}}\|) } \\
    {} &= \exs{ F_i((y + \bm{x}_{it}\pr (\bm{\beta} - \hat{\bm{\beta}}) - \hat{\alpha}_i) / w_{it}) - F_i((y - \hat{\alpha}_i) / w_{it}) } + \Op{ \chi_T(\|\bm{\beta} - \hat{\bm{\beta}}\|) } \\
    {} &= \exs{ f_i((y - \hat{\alpha}_i) / w_{it}) \bm{x}_{it}\pr (\bm{\beta} - \hat{\bm{\beta}}) / w_{it} } + \Op{\|\bm{\beta} - \hat{\bm{\beta}}\|^2} + \Op{ \chi_T(\|\bm{\beta} - \hat{\bm{\beta}}\|) } \\
    {} &\lesssim \frac{\overline{f} M}{c_1 \wedge c_2} \|\bm{\beta} - \hat{\bm{\beta}}\| + \chi_T(\|\bm{\beta} - \hat{\bm{\beta}}\|).
\end{align*}

Similarly, for $y, y\pr$ we know that
\begin{equation*}
\sup_i \left\lvert \hat{\mathbb{F}}_{iT}^\ast(y, \hat{\bm{\beta}}) - \hat{\mathbb{F}}_{iT}^\ast(y\pr, \hat{\bm{\beta}}) - \ex{\exs{\hat{\mathbb{F}}_{iT}^\ast(y, \hat{\bm{\beta}}) - \hat{\mathbb{F}}_{iT}^\ast(y\pr, \hat{\bm{\beta}})}} \right\rvert \lesssim \chi_T(|y - y\pr|).
\end{equation*}
Because the estimates $\alpha_i^\ast$ and $\alpha_i^{\circ\ast}$ are empirical quantiles from the within-$i$ residuals $y_{it}^\ast - \bm{x}_{it}\tr \bm{\beta}$ (for different $\bm{\beta}$) we may write
\begin{align*}
    \frac{2}{T} &\geq \left\lvert \hat{\mathbb{F}}_{iT}^\ast(\alpha_i^\ast, \bm{\beta}^\ast) - \hat{\mathbb{F}}_{iT}^\ast(\alpha_i^{\circ\ast}, \hat{\bm{\beta}}) \right\rvert \\
    {} &\gtrsim \left\lvert \hat{\mathbb{F}}_{iT}^\ast(\alpha_i^\ast, \hat{\bm{\beta}}) - \hat{\mathbb{F}}_{iT}^\ast(\alpha_i^{\circ\ast}, \hat{\bm{\beta}}) \right\rvert - \chi_T(\|\bm{\beta}^\ast - \hat{\bm{\beta}}\|) - \frac{\overline{f} M}{c_1 \wedge c_2} \|\bm{\beta}^\ast - \hat{\bm{\beta}}\| \\
    {} &\gtrsim \left\lvert \ex{\exs{\hat{\mathbb{F}}_{iT}^\ast(\alpha_i^\ast, \hat{\bm{\beta}}) - \hat{\mathbb{F}}_{iT}^\ast(\alpha_i^{\circ\ast}, \hat{\bm{\beta}}) }} \right\rvert - \chi_T(\|\bm{\beta}^\ast - \hat{\bm{\beta}}\|) - \frac{\overline{f} M}{c_1 \wedge c_2} \|\bm{\beta}^\ast - \hat{\bm{\beta}}\| - \chi_T(|\alpha_i^\ast - \alpha_i^{\circ\ast}|).
\end{align*}
The last steps are the same as in GGV Lemma 7 and lead to the result.
\end{proof}

In the proof of Lemma~\ref{lem:4Kexp} and the supporting Lemma~\ref{lem:LOO}, we use leave-one-cell quantities.  We will use
\begin{equation} \label{al_LOO}
  \alpha_i^{(-j)} = \argmin_a \sum_{j' \neq j} \sum_{s=1}^l \rho_\tau(y_{ij's}^\ast - \bm{x}_{ij's}'\hat{\bm{\beta}} - a),
\end{equation}
and let $\bm{\theta}_i^{(-j)} = (\hat{\bm{\beta}}', \alpha_i^{(-j)})'$.

\begin{lem}\label{lem:4Kexp}
  Under the assumptions in Theorem~\ref{thm:boot},
  \begin{equation*}
    \sup_i \left\lVert \exs{ \mathcal{K}_i(\bm{\theta}_i^{\circ\ast}, \hat{\bm{\theta}}_i) } \right\rVert = \Op{T^{-1} (\log T)^2}.
  \end{equation*}
  where $\mathcal{K}_i$ is defined in~\eqref{kcal}.
\end{lem}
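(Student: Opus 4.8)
The plan is to reduce $\exs{\mathcal{K}_i(\bm{\theta}_i^{\circ\ast},\hat{\bm{\theta}}_i)}$ to a quantity governed by the leave-one-cell gap $\alpha_i^{\circ\ast}-\alpha_i^{(-j)}$ and then to invoke Lemma~\ref{lem:LOO}. First I would note that $\bm{\theta}_i^{\circ\ast}=(\hat{\bm{\beta}},\alpha_i^{\circ\ast})$ and $\hat{\bm{\theta}}_i=(\hat{\bm{\beta}},\hat{\alpha}_i)$ share the same slope, so that $\mathbb{K}_i^{\theta*}$ evaluated at either argument collapses to $\frac{1}{bl}\sum_{j,s}\check{\bm{x}}_{ijs}\,\eta_{ijs}(a)$, where $\eta_{ijs}(a):=\psi_{\tau}(w_{ij}|\hat{u}_{ijs}|-(a-\hat{\alpha}_i))$ and $a\in\{\alpha_i^{\circ\ast},\hat{\alpha}_i\}$. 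Writing $\bar{\eta}_{ijs}(a):=\exs{\eta_{ijs}(a)}$ for the frozen-argument bootstrap mean and recalling from the proof of Lemma~\ref{lem:lstar_choose} that $\exs{\mathbb{K}_i^{\theta*}(\hat{\bm{\theta}}_i)}=\zero$, the definition~\eqref{kcal} reduces to
\begin{equation*}
\exs{\mathcal{K}_i(\bm{\theta}_i^{\circ\ast},\hat{\bm{\theta}}_i)}=\frac{1}{bl}\sum_{j=1}^b\sum_{s=1}^l\check{\bm{x}}_{ijs}\,\exs{\eta_{ijs}(\alpha_i^{\circ\ast})-\bar{\eta}_{ijs}(\alpha_i^{\circ\ast})}.
\end{equation*}
Each summand is a covariance-type term that would vanish if $\alpha_i^{\circ\ast}$ were independent of the cell-$j$ weight $w_{ij}$; the whole difficulty is that $\alpha_i^{\circ\ast}$, defined in~\eqref{def:alphas}, is a function of every weight, $w_{ij}$ included.

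The key device is the leave-one-cell estimator $\alpha_i^{(-j)}$ of~\eqref{al_LOO}, which omits cell $j$ and is therefore independent of $w_{ij}$. Because $\eta_{ijs}$ depends on the bootstrap draw only through $w_{ij}$, conditioning on the remaining weights gives $\exs{\eta_{ijs}(\alpha_i^{(-j)})-\bar{\eta}_{ijs}(\alpha_i^{(-j)})}=0$. Subtracting this zero from each summand leaves
\begin{equation*}
\exs{\eta_{ijs}(\alpha_i^{\circ\ast})-\bar{\eta}_{ijs}(\alpha_i^{\circ\ast})}=\exs{\bigl[\eta_{ijs}(\alpha_i^{\circ\ast})-\eta_{ijs}(\alpha_i^{(-j)})\bigr]-\bigl[\bar{\eta}_{ijs}(\alpha_i^{\circ\ast})-\bar{\eta}_{ijs}(\alpha_i^{(-j)})\bigr]},
\end{equation*}
so the entire expression is controlled by the small gap $\alpha_i^{\circ\ast}-\alpha_i^{(-j)}$.

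Next I would bound the averaged remainder. The step part $\eta_{ijs}(\alpha_i^{\circ\ast})-\eta_{ijs}(\alpha_i^{(-j)})$ equals the indicator that $w_{ij}|\hat{u}_{ijs}|$ lies between the two thresholds $\alpha_i^{\circ\ast}-\hat{\alpha}_i$ and $\alpha_i^{(-j)}-\hat{\alpha}_i$; individual such indicators need not be small because the bootstrap law of $w_{ij}|\hat{u}_{ijs}|$ is discrete, but the fraction of the $T=bl$ observations that straddle an interval of width $|\alpha_i^{\circ\ast}-\alpha_i^{(-j)}|$ is, by the density bound $\overline{f}$ of Assumption~\ref{a:minf}, of order the gap; the smooth part $\bar{\eta}_{ijs}$ behaves in the same way once averaged. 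Using $\|\check{\bm{x}}_{ijs}\|\leq\check{M}$ and averaging over observations therefore yields $\|\exs{\mathcal{K}_i(\bm{\theta}_i^{\circ\ast},\hat{\bm{\theta}}_i)}\|\lesssim\exs{\max_j|\alpha_i^{\circ\ast}-\alpha_i^{(-j)}|}$, and Lemma~\ref{lem:LOO}, which controls $\sup_{i,j}|\alpha_i^{\circ\ast}-\alpha_i^{(-j)}|$ uniformly in $i$ at rate $T^{-1}(\log T)^2$ (the tail being negligible by the bounded support in Assumption~\ref{b:Gsupp}), delivers the stated bound after taking $\sup_i$.

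The main obstacle is exactly the coupling between the estimated intercept $\alpha_i^{\circ\ast}$ and the cell weights entering the score: the leave-one-cell construction is what decouples them, and the delicate point is bounding the averaged indicator difference, where the discreteness of the bootstrap weights prevents a term-by-term Lipschitz argument, so that the smoothness must be passed through the density of the data together with the leave-one-cell gap, all uniformly over $i$ --- which is the source of the $(\log T)^2$ factor.
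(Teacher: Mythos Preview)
Your approach is essentially the paper's: reduce to the leave-one-cell gap via the decoupling identity $\exs{\eta_{ijs}(\alpha_i^{(-j)})-\bar{\eta}_{ijs}(\alpha_i^{(-j)})}=0$, then invoke Lemma~\ref{lem:LOO}. The one step you compress is the claim that the averaged indicator difference is ``of order the gap'': because $\alpha_i^{\circ\ast}-\alpha_i^{(-j)}$ is itself random under the bootstrap measure, the paper inserts a deterministic truncation level $\delta_T=CT^{-1}(\log T)^2$, bounds the contribution on $\{\sup_j|\alpha_i^{\circ\ast}-\alpha_i^{(-j)}|\le\delta_T\}$ by a density-plus-empirical-process argument (yielding $\Ops{\delta_T+\chi_T(\delta_T)}+\Op{\|\hat{\bm{\theta}}_i-\bm{\theta}_{i0}\|^2}$), and controls the complementary event by the tail bound of Lemma~\ref{lem:LOO}; your sketch alludes to this but does not separate the two pieces.
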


\begin{proof}[Proof of Lemma~\ref{lem:4Kexp}]
  This proof follows the proof of equation~(29) in GGV Lemma 6. In this proof we need to give a name to the following expected value function.  For a fixed $\bm{\theta}_i \in \R^{p+1}$, let
  \begin{equation*}
    K_i^{\theta}(\bm{\theta}_i) = \exs{\mathbb{K}_i^{\theta\ast}(\bm{\theta}_i)} = \frac{1}{T} \sum_{t=1}^T \check{\bm{x}}_{it} \left( \tau - \probs{ w_{it} |\hat{u}_{it}| \leq \bm{z}_{it}'(\bm{\theta} - \hat{\bm{\theta}}_i) } \right).
  \end{equation*}
  The reason is that in the expressions below, some expected values are computed and then evaluated at a parameter estimate, and some expected values (which may be different) involve the evaluation of the function and the estimate simultaneously under the bootstrap measure.

  For any $i$ and any $j = 1, \ldots, b$, recalling definition~\eqref{kcal}, write
\begin{equation*}
  \mathcal{K}_i(\bm{\theta}_i^{\circ\ast}, \hat{\bm{\theta}}_i) = \mathcal{K}_i(\bm{\theta}_i^{(-j)}, \hat{\bm{\theta}}_i) - \mathcal{K}_i(\bm{\theta}_i^{(-j)}, \bm{\theta}_i^{\circ\ast}).
\end{equation*}

  Neither $\hat{\bm{\theta}}_i$ nor $\bm{\theta}_i^{(-j)}$ depend on any cell-$j$ weights in the first term of the right-hand side and therefore the first term does not vary under the boostrap measure and has mean zero for all $j$, implying
  \begin{equation*}
    \sup_{i,j} \left\| \exs{\mathcal{K}_i(\bm{\theta}_i^{(-j)}, \hat{\bm{\theta}}_i)} \right\| = 0.
  \end{equation*}
  Therefore we may write, for any $j$,
  \begin{align}
    \exs{ \mathcal{K}_i(\bm{\theta}_i^{\circ\ast}, \hat{\bm{\theta}}_i) } &= -\exs{ \mathcal{K}_i(\bm{\theta}_i^{(-j)}, \bm{\theta}_i^{\circ\ast}) } \notag \\
    {} &= -\exs{ \mathbb{K}_i^{\theta\ast}(\bm{\theta}_i^{(-j)}) - \mathbb{K}_i^{\theta\ast}(\bm{\theta}_i^{\circ\ast})} + \exs{ K_i^{\theta}(\bm{\theta}_i^{(-j)}) - K_i^{\theta}(\bm{\theta}_i^{\circ\ast}) }. \label{E4K_LOO}
\end{align}
  Next we bound two pieces on the right-hand side of~\eqref{E4K_LOO}.

  First, for a sequence $\delta_T \searrow 0$, recall $\check{\bm{x}}_{it} = \bm{x}_{it} - \bar{\varphi}_i^{-1} \bar{\bm{g}}_i$ and let $\check{M} = M(1 + \overline{f} / \underline{f})$ (noting that this bounds $\|\check{\bm{x}}_{it}\|$) and write
  \begin{multline} \label{partway}
    \left\lVert \exs{ \mathbb{K}_i^{\theta\ast}(\bm{\theta}_i^{(-j)}) - \mathbb{K}_i^{\theta\ast}(\bm{\theta}_i^{\circ\ast}) } \right\rVert \\
    {} \leq \check{M} \left| \exs{ \frac{1}{T} \sum_{j=1}^b \sum_{s=1}^l \left( I(y_{ijs}^\ast \leq \bm{z}_{ijs}' \bm{\theta}_i^{(-j)}) - I(y_{ijs}^\ast \leq \bm{z}_{ijs}'\bm{\theta}_i^{\circ\ast}) \right) } \right| \\
    {} \leq \check{M} \frac{1}{T} \sum_{j=1}^b \sum_{s=1}^l \text{E}^\ast \Bigg[ I\left( |y_{ijs}^\ast - \bm{z}_{ijs}' \bm{\theta}_i^{(-j)}| \leq (M+1) \delta_T \right) \\
    {} \times I\left( \sup_{j} |\bm{z}_{ijs}' (\bm{\theta}_i^{(-j)} - \bm{\theta}_i^{\circ\ast})| \leq (M+1) \delta_T \right) \Bigg] \\
    {} + \check{M} \probs{ \sup_{j} |\bm{z}_{ijs}' (\bm{\theta}_i^{(-j)} - \bm{\theta}_i^{\circ\ast})| > (M+1) \delta_T }.
  \end{multline}

  Tedious calculations that are similar to the steps in Lemma~\ref{lem:boot_cons2} lead to
  \begin{multline}
    \frac{1}{T} \sum_{j=1}^b \sum_{s=1}^l \text{E}^\ast \Bigg[ I\left( |y_{ijs}^\ast - \bm{z}_{ijs}' \bm{\theta}_i^{(-j)}| \leq (M+1) \delta_T \right) \times I\left( \sup_{j} |\bm{z}_{ijs}' (\bm{\theta}_i^{(-j)} - \bm{\theta}_i^{\circ\ast})| \leq (M+1) \delta_T \right) \Bigg] \\
    = 2(M+1) \delta_T \frac{1}{T} \sum_{t=1}^T f_i(0 | \bm{x}_{it}) + \Op{ \sup_i \|\hat{\bm{\theta}}_i - \bm{\theta}_{i0}\|^2 } \\
    + \Ops{ \delta_T^2 } + \Ops{\chi_T(\delta_T)} \\
    = 2(M+1) \delta_T \frac{1}{T} \sum_{t=1}^T f_i(0 | \bm{x}_{it}) + \Op{ \|\hat{\bm{\theta}}_i - \bm{\theta}_{i0}\|^2 } + \Ops{\chi_T(\delta_T)} \label{hard_expansion}
  \end{multline}
  Now put~\eqref{hard_expansion} in~\eqref{partway} to write
  \begin{multline}
    \left\lVert \exs{ \mathbb{K}_i^{\theta\ast}(\bm{\theta}_i^{(-j)}) - \mathbb{K}_i^{\theta\ast}(\bm{\theta}_i^{\circ\ast}) } \right\rVert \\
    = \Ops{\delta_T + \chi_T(\delta_T)} + \Op{ \| \hat{\bm{\theta}}_i - \bm{\theta}_{i0} \|^2 } \\
    + \check{M} \probs{ \sup_j |\bm{z}_{ijs}' (\bm{\theta}_i^{(-j)} - \bm{\theta}_i^{\circ\ast})| > (M+1) \delta_T }. \label{lem_pt1}
  \end{multline}

  On the other hand, for any $\bm{\theta}_1$ and $\bm{\theta}_2$ in a $\delta_T$-neighborhood of $\hat{\bm{\theta}}_i$ still more tedious calculations similar to Lemma~\ref{lem:boot_cons2} result in
\begin{align*}
  \left\lVert \exs{K_i^{\theta}(\bm{\theta}_1) - K_i^{\theta}(\bm{\theta}_2)} \right\rVert &= \left\lVert \check{\bm{x}}_{it} \exs{ \psi_\tau(y_{it}^\ast - \bm{z}_{it}\tr \bm{\theta}_1) - \psi_\tau(y_{it}^\ast - \bm{z}_{it}\tr \bm{\theta}_2) } \right\rVert \\
  {} &\leq \check{M} \overline{f} |\bm{z}_{it}\tr (\bm{\theta}_1 - \bm{\theta}_2)| \\
  {} &\phantom{=} \qquad \qquad + \Ops{\|\bm{\theta}_1 - \bm{\theta}_2\|} + \Op{ \|\hat{\bm{\theta}}_i - \bm{\theta}_{i0}\|^2 } \\
  {} &\leq \check{M} \overline{f} (M+1) \|\bm{\theta}_1 - \bm{\theta}_2\| \\
  {} &\phantom{=} \qquad \qquad + \Ops{\|\bm{\theta}_1 - \bm{\theta}_2\|} + \Op{ \|\hat{\bm{\theta}}_i - \bm{\theta}_{i0}\|^2 }.
\end{align*}
  Also we know that $\|\mathbb{K}_i^{\theta\ast}(\bm{\theta}_1) - \mathbb{K}_i^{\theta\ast}(\bm{\theta}_2)\| \leq 2\check{M}$.  This implies
\begin{multline}
  \left\lVert \exs{ K_i^{\theta}(\bm{\theta}_i^{(-j)}) - K_i^{\theta}(\bm{\theta}_i^{\circ\ast}) } \right\rVert \\
  {} \leq \check{M}\overline{f}(M+1) \delta_T + 2\check{M} \probs{\sup_j \| \bm{\theta}_i^{(-j)} - \bm{\theta}_i^{\circ\ast} \| > \delta_T }. \label{lem_pt2}
\end{multline}

Putting bounds~\eqref{lem_pt1} and~\eqref{lem_pt2} into~\eqref{E4K_LOO} we find that with high probability,
\begin{multline}
    \left\| \exs{ \mathbb{K}_i^{\theta\ast}(\bm{\theta}_i^{(-j)}) - \mathbb{K}_i^{\theta\ast}(\bm{\theta}_i^{\circ\ast})} + \exs{ K_i^{\theta}(\bm{\theta}_i^{(-j)}) - K_i^{\theta}(\bm{\theta}_i^{\circ\ast}) } \right\| \\
  = \Ops{ \delta_T + \chi_T(\delta_T) } + \Op{ \|\hat{\bm{\theta}}_i - \bm{\theta}_{i0}\|^2 } + \Op{ \probs{ \sup_j \| \bm{\theta}_i^{(-j)} - \bm{\theta}_i^{\circ\ast} \| > \delta_T } } .
\end{multline}
Finally, choose $\delta_T = CT^{-1} (\log T)^2$.  GGV Theorems 1 and 3 show that $\|\hat{\bm{\theta}}_i - \bm{\theta}_{i0}\|^2 = \Op{ (NT)^{-1/2} + T^{-1} (\log T)^2} = \Op{T^{-1}(\log T)^2}$ under the restriction on $N$ and $T$ assumed at the end of the statement of Lemma~\ref{lem:feqr}.  Then with the chosen $\delta_T$, Lemma~\ref{lem:LOO} implies the last term is $\Op{NT^{-\kappa}} = \op{NT^{-1}(\log T)^2}$ by choosing $\kappa$ large enough and for large $T$ under the restriction on $N$ and $T$ assumed at the end of Lemma~\ref{lem:feqr}.
\end{proof}

\begin{lem} \label{lem:LOO}
  Under the assumptions of Theorem~\ref{thm:boot}, for constants $\kappa$ and $C$, when $T$ is large,
\end{lem}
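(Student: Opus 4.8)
The statement to be proved bounds the leave-one-cell-out instability of the individual-effect estimator, namely that $\probs{\sup_{i,j} \|\bm{\theta}_i^{(-j)} - \bm{\theta}_i^{\circ\ast}\| > C T^{-1} (\log T)^2} \leq N T^{-\kappa}$ for suitable constants $C, \kappa$. The plan is to exploit that $\bm{\theta}_i^{(-j)}$ and $\bm{\theta}_i^{\circ\ast}$ carry the same slope $\hat{\bm{\beta}}$, so that $\|\bm{\theta}_i^{(-j)} - \bm{\theta}_i^{\circ\ast}\| = |\alpha_i^{(-j)} - \alpha_i^{\circ\ast}|$, where both are empirical $\tau$-quantiles of the bootstrap residuals $y_{it}^\ast - \bm{x}_{it}'\hat{\bm{\beta}}$: the quantity $\alpha_i^{\circ\ast}$ over all $T$ observations and $\alpha_i^{(-j)}$ over the $T-l$ observations outside cell $j$. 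Since a cell contains only $l = l^{\textnormal{o}} = \Op{1}$ observations, deleting it moves the empirical distribution by $O(l/T) = O(T^{-1})$, and the problem reduces to converting this $O(T^{-1})$ shift of the empirical CDF into an $O(T^{-1}(\log T)^2)$ shift of its $\tau$-quantile.

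First I would record the characterizing inequalities. With $\hat{\mathbb{F}}_{iT}^\ast$ and $\mathbb{F}_{iT}^\ast$ as in the proof of Lemma~\ref{lem:alphas}, the empirical-quantile property gives that $\hat{\mathbb{F}}_{iT}^\ast(\alpha_i^{\circ\ast}, \hat{\bm{\beta}})$ differs from $\tau$ by at most $2T^{-1}$, while the leave-one-cell-out CDF (normalized by $T-l$) differs from $\tau$ by at most $2(T-l)^{-1}$ at $\alpha_i^{(-j)}$. Because the full and leave-one-cell-out empirical CDFs differ only through the $l$ indicator terms of cell $j$ and the change of normalization from $T$ to $T-l$, they are uniformly within $Cl/T$ of each other almost surely under the bootstrap, yielding $\hat{\mathbb{F}}_{iT}^\ast(\alpha_i^{(-j)}, \hat{\bm{\beta}}) - \hat{\mathbb{F}}_{iT}^\ast(\alpha_i^{\circ\ast}, \hat{\bm{\beta}}) = \Ops{T^{-1}}$.

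Next I would pass from this CDF gap to a bound on $\delta := |\alpha_i^{(-j)} - \alpha_i^{\circ\ast}|$. Decompose the empirical CDF difference into a population part and a fluctuation: using the class $\mathcal{G}(\delta)$ of~\eqref{def:Gdelta} and the maximal inequality~\eqref{GGVL5}, the fluctuation is $\Ops{\chi_T(\delta)}$, while the population part $|\mathbb{F}_{iT}^\ast(\alpha_i^{(-j)}, \hat{\bm{\beta}}) - \mathbb{F}_{iT}^\ast(\alpha_i^{\circ\ast}, \hat{\bm{\beta}})|$ is bounded below by $c\,\delta$ for a constant $c>0$ by monotonicity and a mean-value argument. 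This last lower bound is the technical heart: it follows by relating $\mathbb{F}_{iT}^\ast$ to its data-averaged counterpart, whose derivative equals the average over $i$-specific error densities $f_i$ integrated against the weight law and is bounded away from zero near zero because the weight support avoids the origin (Assumption~\ref{b:Gsupp}) and $f_i$ is bounded below on $\mathcal{U}$ (Assumption~\ref{a:minf}). Combining the three estimates gives the self-bounding inequality $c\,\delta \leq \Ops{T^{-1}} + C\chi_T(\delta) = \Ops{T^{-1}} + C(T^{-1/2}\log T\,\delta^{1/2} + T^{-1}(\log T)^2)$; applying $\delta \leq a + b\delta^{1/2} \Rightarrow \delta \leq 4\max\{a, b^2\}$ (as in the main proof) with $a = \Ops{T^{-1}(\log T)^2}$ and $b^2 = \Ops{T^{-1}(\log T)^2}$ yields $\delta = \Ops{T^{-1}(\log T)^2}$. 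The polynomial tail $T^{-\kappa}$ is inherited directly from~\eqref{GGVL5}, and a union bound over the $Nb \leq NT$ pairs $(i,j)$, each inflated only by a polynomial factor absorbed into $\kappa$, produces the $NT^{-\kappa}$ bound.

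The main obstacle is the nonsmoothness of the quantile objective: because the empirical quantile is a step function of the data, the deterministic $\Ops{T^{-1}}$ perturbation of the empirical CDF does not translate mechanically into an $\Ops{T^{-1}}$ quantile perturbation and must be routed through a CDF whose inverse is Lipschitz near the quantile. Establishing this local invertibility --- the lower bound $c\,\delta$ above --- for the reweighted bootstrap residuals, uniformly over $i$ and over every realization of $(\hat{\bm{\beta}}, \hat{\alpha}_i)$ in a high-probability neighborhood of the truth, is the delicate step, and it is precisely where Assumptions~\ref{a:minf} and~\ref{b:Gsupp} are used in tandem. Once it is in place, the remaining estimates mirror the leave-one-out argument of GGV Lemma 6, modified only to delete a cell of $l = \Op{1}$ observations in place of a single observation.
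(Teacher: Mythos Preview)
Your proposal is correct and arrives at the same self-bounding inequality $\delta \lesssim T^{-1}(\log T)^2 + T^{-1/2}(\log T)\,\delta^{1/2}$ that the paper obtains, but the route is genuinely different. You exploit that $\bm{\theta}_i^{(-j)}$ and $\bm{\theta}_i^{\circ\ast}$ share the slope $\hat{\bm{\beta}}$ and treat both as empirical $\tau$-quantiles of (nearly) the same bootstrap residuals, so the problem becomes a pure quantile-stability question: a $O(l/T)$ perturbation of the empirical CDF is converted into a quantile perturbation via local invertibility of the population CDF plus the maximal inequality~\eqref{GGVL5}. The paper instead runs a full Bahadur-type expansion of $\bm{\theta}_i^{(-j)} - \hat{\bm{\theta}}_i$ into four remainder terms $r_{ik}^{(-j)}$, matches each against its full-sample analogue $r_{ik}$, and bounds the four differences separately (using Hoeffding for $r_1$, computational gradient conditions for $r_2$, the class $\mathcal{G}(\delta)$ for $r_3$, and a Taylor expansion plus the Chao--Volgushev--Cheng localization for $r_4$). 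Your argument is shorter and closer in spirit to the paper's own Lemma~\ref{lem:alphas} (and to GGV Lemma~7), while the paper's score-based route is structurally uniform with the surrounding Lemmas~\ref{lem:4K}--\ref{lem:4Kexp}, which already live in the $\mathbb{K}_i^{\theta\ast}$ world.

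One point deserves care in your write-up: the lower bound $|\mathbb{F}_{iT}^\ast(y_1,\hat{\bm{\beta}}) - \mathbb{F}_{iT}^\ast(y_2,\hat{\bm{\beta}})| \geq c\,|y_1-y_2|$ does not hold literally when $G_W$ is discrete (as in the two-point weight used here), because $\mathbb{F}_{iT}^\ast$ is then a step function of $y$. You correctly flag that one must pass to the data-averaged counterpart $\ex{\exs{\hat{\mathbb{F}}_{iT}^\ast}}$, which is smooth with density bounded below by Assumptions~\ref{a:minf} and~\ref{b:Gsupp}; but this introduces a second fluctuation (over the data, not the weights) that must also be controlled at rate $\chi_T(\delta)$, and the final $\text{P}^\ast$ bound then holds only on a high-probability data event. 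This is exactly how the paper's proof of Lemma~\ref{lem:alphas} proceeds, so the machinery is already in place, but make the two-layer fluctuation explicit when you write it out.
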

\begin{equation*}
  \probs{ \sup_{i, j} \| \bm{\theta}_i^{(-j)} - \bm{\theta}_i^{\circ\ast} \| \geq C \kappa^4 T^{-1} (\log T)^2 } \leq 2 N T^{-\kappa}.
\end{equation*}

\begin{proof}[Proof of Lemma~\ref{lem:LOO}]
  Using $\alpha_i^{(-j)}$ as defined in~\eqref{al_LOO} and the corresponding $\bm{\theta}_i^{(-j)}$, define leave-one-cell-out score functions (recall cells are of size $l$)
  \begin{equation*}
    \mathbb{K}_i^{(-j)}(\bm{\theta}) = \frac{1}{T-l} \sum_{j' \neq j} \sum_{s=1}^l \check{\bm{x}}_{ij's} \psi_\tau(y_{ij's}^\ast - \bm{z}_{ij's}\tr \bm{\theta})
  \end{equation*}
  and
  \begin{equation*}
    \mathcal{K}_i^{(-j)}(\bm{\theta}_1, \bm{\theta}_2) = \mathbb{K}_{i}^{(-j)}(\bm{\theta}_1) - \mathbb{K}_{i}^{(-j)}(\bm{\theta}_2) - \exs{\mathbb{K}_i^{(-j)}(\bm{\theta}_1)} + \exs{\mathbb{K}_i^{(-j)}(\bm{\theta}_2)}.
  \end{equation*}

  Use these to derive expansions for each $j = 1, \ldots, b$:
  \begin{multline}
    \bm{\theta}_i^{(-j)} - \hat{\bm{\theta}}_i = -\bar{\bm{J}}_i^{-1} \mathbb{K}_{i}^{(-j)}(\hat{\bm{\theta}}_i) + \bar{\bm{J}}_i^{-1} \mathbb{K}_{i}^{(-j)}(\bm{\theta}_i^{(-j)}) - \bar{\bm{J}}_i^{-1} \mathcal{K}_{i}^{(-j)}(\bm{\theta}_i^{(-j)}, \hat{\bm{\theta}}_i) \\
    - \bar{\bm{J}}_i^{-1} \left( \exs{\mathbb{K}_i^{(-j)}(\bm{\theta}_i^{(-j)})} - \exs{\mathbb{K}_i^{(-j)}(\hat{\bm{\theta}}_i)} - \bar{\bm{J}}_i \left( \bm{\theta}_i^{(-j)} - \hat{\bm{\theta}}_i \right) \right).
  \end{multline}

  The right-hand side is a sum of four terms: call each one $r_{ik}^{(-j)}$ for $k = 1, 2, 3, 4$.  There are analogous whole-sample terms (i.e., that do not leave cell $j$ out) that are composed of $\mathbb{K}_i^{\theta\ast}$ functions.  Denote these as $r_{ik}$.  Then the maximal difference between any one leave-one-cell out estimator and $\bm{\theta}_i^{\circ\ast}$ is
  \begin{align*}
    \sup_{i\,j} \| \bm{\theta}_i^{(-j)} - \bm{\theta}_i^{\circ\ast} \| &= \sup_{i,j} \| (\bm{\theta}_i^{(-j)} - \hat{\bm{\theta}}_i) - (\bm{\theta}_i^{\circ\ast} - \hat{\bm{\theta}}_i) \| \\
    {} &\leq \sup_{i,j} \lambda_{\text{min}}(\bar{\bm{J}}_i) \sum_{k=1}^4 \left\| r_{ik}^{(-j)} - r_{ik} \right\|.
  \end{align*}
    $\lambda_{\text{min}}(\bar{\bm{J}}_i)$ is bounded by Assumption~\ref{assume:Avar_dependent} and we ignore it below.  Next, we deal with the terms in the sum in order.

  Since the summands of $r_{i1}^{(-j)}$ and $r_{i1}$ are averages of terms that are bounded and have mean zero under the bootstrap measure (recall there are $l$ terms in $b$ cells and $bl = T$), for each $j$ we have
  \begin{equation*}
    \left\| r_{i1}^{(-j)} - r_{i1} \right\| = \left( \frac{1}{T-l} - \frac{1}{T} \right) \sum_{j' \neq j} \sum_{s=1}^l \check{\bm{x}}_{ij's} \psi_\tau(y_{ij's} - \bm{z}_{ij's}\tr \hat{\bm{\theta}}) + \frac{1}{T-l} \sum_{s=1}^l \check{\bm{x}}_{ijs} \psi_\tau(y_{ijs} - \bm{z}_{ijs}\tr \hat{\bm{\theta}}).
  \end{equation*}
  However, using Hoeffding's inequality we can see that the first part satisfies
  \begin{align*}
    &\phantom{=} \probs{ \sup_j \frac{l}{T} \frac{1}{T-l} \sum_{j' \neq j} \sum_{s=1}^l \check{\bm{x}}_{ij's} \psi_\tau(y_{ij's} - \bm{z}_{ij's}\tr \hat{\bm{\theta}}) > C (T-l)^{-3/2} (\log (T-l))^{1/2} } \\
    {} &\leq \sum_{j=1}^b \probs{ \frac{l}{T} \frac{1}{T-l} \sum_{j' \neq j} \sum_{s=1}^l \check{\bm{x}}_{ij's} \psi_\tau(y_{ij's} - \bm{z}_{ij's}\tr \hat{\bm{\theta}}) > C (T-l)^{-3/2} (\log (T-l))^{1/2} } \\
    {} &\leq 2 b (T-l)^{-C^2/2\check{M}^2 l^2} \sim \frac{2}{l} T^{1 - C^2/2\check{M}^2l^2}.
  \end{align*}
  That is,
  \begin{equation} \label{diff_r1}
    \sup_j \left\| r_{i1}^{(-j)} - r_{i1} \right\| = \Ops{T^{-3/2} (\log T)^{1/2}} + \Ops{T^{-1}} = \Ops{T^{-1}}.
  \end{equation}
  The second terms are small because they both optimize the quantile regression objective function, that is,
  \begin{equation} \label{diff_r2}
    \sup_{i,j} \left\| r_{i2}^{(-j)} - r_{i2} \right\| = \sup_{i,j} \left\| \mathbb{K}_{i}^{(-j)}(\bm{\theta}_i^{(-j)}) - \mathbb{K}_{i}^{\theta\ast}(\bm{\theta}_i^{\circ\ast}) \right\| \leq C \left( \frac{1}{T-l} + \frac{1}{T} \right) = O(T^{-1}) \; a.s.
  \end{equation}
  To bound the third difference, consider the event
  \begin{equation*}
    \mathcal{A}_i(\kappa) = \left\{ \sup_{\delta \in [0, 1]} \sup_{\|\bm{\theta}_1 - \bm{\theta}_2\| \leq \delta} \sup_j \frac{\| \mathcal{K}_{i}^{(-j)}(\bm{\theta}_1, \bm{\theta}_2) \|}{\chi_T(\delta)} \leq 2 C \kappa^2 \right\}.
  \end{equation*}
  Because only one cell of $l$ terms is missing from $\mathcal{K}_i^{(-j)}$,
  \begin{equation*}
    \left\| \sup_{\|\bm{\theta}_1 - \bm{\theta}_2\| \leq \delta} \sup_j \| \mathcal{K}_{i}^{(-j)}(\bm{\theta}_1, \bm{\theta}_2) \| - \sup_{\|\bm{\theta}_1 - \bm{\theta}_2\| \leq \delta} \| \mathcal{K}_{i}(\bm{\theta}_1, \bm{\theta}_2) \| \right\| = O(T^{-1}) \; a.s.,
  \end{equation*}
  and~\eqref{GGVL5} implies that
  \begin{equation*}
    \inf_i \probs{ \mathcal{A}_i(\kappa) } \geq \inf_i \probs{ \sup_{\delta \in [0, 1]} \sup_{\|\bm{\theta}_1 - \bm{\theta}_2\| \leq \delta} \frac{\| \mathcal{K}_i(\bm{\theta}_1, \bm{\theta}_2) \|}{\chi_T(\delta)} \leq 2 C \kappa^2 } \geq 1 - T^{-\kappa},
  \end{equation*}
  so that assuming that $\sup_i \|\bm{\theta}_i^{(-j)} - \bm{\theta}_i^{\circ\ast}\| = \ops{1}$ (which is verified below),
  \begin{equation} \label{diff_r3}
    \sup_{i,j} \left\| r_{i3}^{(-j)} - r_{i3} \right\| = \Ops{ \chi_T( \sup_{i,j} \|\bm{\theta}_i^{(-j)} - \bm{\theta}_i^{\circ\ast}\|) } + O(T^{-1}).
  \end{equation}
  Finally, to bound the fourth, calculations similar to the calculations in Lemma~\ref{lem:boot_cons2} imply that
  \begin{align*}
    \exs{\mathbb{K}_i^{(-j)}(\bm{\theta}_1)} - \exs{\mathbb{K}_i^{(-j)}(\bm{\theta}_2)} &= \frac{1}{T-l} \sum_{j' \neq j} \sum_{s=1}^l \check{x}_{ij's} \exs{ I(y_{ij's}^\ast \leq z_{ij's}'\bm{\theta}_2) - I(y_{ij's}^\ast \leq z_{ij's}'\bm{\theta}_1) } \\
    {} &= \frac{1}{T-l} \sum_{j',s} f(0 | \bm{z}_{ij's}) \check{\bm{x}}_{ij's} \check{\bm{x}}_{ij's}' (\bm{\theta}_1 - \bm{\theta}_2) \\
    {} &\phantom{=} \qquad \qquad + \Ops{\| \bm{\theta}_1 - \bm{\theta}_2 \|^2} + \Op{ \sup_i \| \hat{\bm{\theta}}_i - \bm{\theta}_{i0} \|^2 }.
  \end{align*}
  As $T \rightarrow \infty$, the term premultiplying $\bm{\theta}_1 - \bm{\theta}_2$ converges in probability conditionally to $\bar{\bm{J}}_i$.  Then assuming that $\|\bm{\theta}_1 - \bm{\theta}_2\| \leq \delta$ for $\delta$ small enough,
  \begin{multline} \label{diff_taylor}
    \sup_i \left\| \exs{\mathbb{K}_i^{(-j)}(\bm{\theta}_1)} - \exs{\mathbb{K}_i^{(-j)}(\bm{\theta}_2)} - \bar{\bm{J}}_i \left( \bm{\theta}_1 - \bm{\theta}_2 \right) \right\| \\
    = \Ops{\delta^2} + \Op{ \sup_i \| \hat{\bm{\theta}}_i - \bm{\theta}_{i0} \|^2 } + \Op{T^{-1}},
  \end{multline}
  where the last term accounts for the difference between the first matrix and $\bar{\bm{J}}_i$, which differ only on $l$ terms out of $T$.  We have an analogous expression using $\mathbb{K}_i^{\circ\ast}$ functions instead of $\mathbb{K}_i^{(-j)}$ functions.
  Lemma S.1.2 of \citet{ChaoVolgushevCheng17} implies that because $\bm{\theta}_i^{(-j)}$ is the minimizer of a quantile regression objective, the size of its difference from $\hat{\bm{\theta}}_i$ is related to the size of the score function: for each $i,j$,
  \begin{equation*}
    \left\{ \|\mathbb{K}_i^{(-j)}(\hat{\bm{\theta}}_i)\| \leq C_1 \right\} \subseteq \left\{ \| \bm{\theta}_i^{(-j)} - \hat{\bm{\theta}}_i \| \leq C_2 \|\mathbb{K}_i^{(-j)}(\hat{\bm{\theta}}_i)\| \right\}.
  \end{equation*}
  Also note that~\eqref{diff_r1} implies that the scores evaluated at $\hat{\bm{\theta}}_i$ are within $\Ops{T^{-1}}$ of each other.  Defining the events
  \begin{equation*}
    \mathcal{B}_i(\kappa) = \left\{ \left\| \mathbb{K}_i^{\theta\ast}(\hat{\bm{\theta}}_i) \right\| \leq 2 C \kappa^{1/2} T^{-1/2} (\log T)^{1/2} \right\},
  \end{equation*}
  GGV Lemma 5 implies that $\sup_i \prob{\mathcal{B}_i(\kappa)} \geq 1 - T^{-\kappa}$ and it implies analogous events for $\|\mathbb{K}_i^{(-j)}(\hat{\bm{\theta}}_i)\|$.  Therefore on $\mathcal{B}_i(\kappa)$, the expansion~\eqref{diff_taylor} implies
  \begin{equation} \label{diff_r4}
    \sup_{i,j} \left\| r_{i4}^{(-j)} - r_{i4} \right\| \leq C\kappa T^{-1} \log T.
  \end{equation}
  This used the fact that $\|\hat{\bm{\theta}}_i - \bm{\theta}_{i0}\| = \Op{(NT)^{-1/2} + T{^-1}(\log T)^2}$, as shown in the proof of GGV Theorems 1 and 3.

  Next, add~\eqref{diff_r1}, \eqref{diff_r2}, \eqref{diff_r3} and~\eqref{diff_r4} to find that on $\mathcal{A}_i(\kappa) \cap \mathcal{B}_i(\kappa)$, that is, with probability at least $1 - T^{-\kappa}$, letting $\delta_{iT} = \sup_j \| \bm{\theta}_i^{(-j)} - \bm{\theta}_i^{\circ\ast} \|$,
  \begin{align*}
    \delta_{iT} &\leq C \kappa^2 \left( T^{-1} + T^{-1/2} \log T \delta_T^{1/2} + T^{-1} (\log T)^2 \right) \\
    \intertext{and using the implication $\delta \leq a + b\delta^{1/2} \Rightarrow \delta \leq 4\max\{a, b^2\}$,}
    \delta_{iT} &\leq 4C \kappa^4 T^{-1} (\log T)^2.
  \end{align*}
  Then the union bound implies the result.
\end{proof}

\section{Other bootstrap methods for dependent data}\label{sec:ETBB}

This section briefly describes the other bootstrap approaches for inference for conditional quantiles with dependent data that were used in the simulations. The moving block bootstrap and extended tapered block bootstrap methods, labeled MBB and ETBB, were proposed by \citet{Fitzenberger98} and \citet{Shao10ss} for time-series data.  They were later generalized to a smoothed and extended version in \cite{GregoryLahiriNordman18}.  We applied them unit-wise to conduct inference for panel data.  The last method was  proposed by \cite{GalvaoParkerXiao24}, and it is denoted by WEB.  

\vspace{2mm}

\noindent \textbf{MBB and ETBB}: Let $I_{1}^\ast, I_{2}^\ast, \hdots, I_{b}^\ast$ denote a uniform random sample from $\{1, 2, \hdots, T - l + 1\}$. The variables $\{ I_{j}^\ast \}_{j=1}^b$ are starting points of the blocks.  Both of these bootstrap methods use the same basic weighting scheme:
\begin{equation} \label{eq:etbb}
    \tilde{\bm{\theta}}  = \left( \tilde{\bm{\beta}}', \tilde{\bm{\alpha}}' \right) = \argmin_{ \bm{\theta} \in \bm{\Theta} } \sum_{i=1}^N \sum_{t=1}^T  \pi_{t}^\ast \rho_\tau(y_{it} - \bm{x}_{it}' \bm{\beta} - \alpha_i), 
\end{equation}
where, given block length $l$, the weight $\pi_t^\ast$ given to observation $(i,t)$ depends on a taper function $\omega_l: \{1, 2, \ldots\} \to \RR$. This function is defined by $\omega_l(s) = w((s - 1/2)) / l)$, where $w: \RR \to [0,1]$ has support $[0,1]$, is symmetric about $1/2$, and is non-decreasing in the interval $[0, 1/2]$.  The weights $\pi_t^\ast$ are
\begin{equation} \label{eq:w:etbb}
    \pi_{t}^\ast = \frac{1}{b \| \omega_l \|_1} \sum_{j=1}^b \sum_{s=1}^l \omega_l(s) I(I_j^\ast = t - s + 1)
\end{equation}
where $\| \omega_l \|_1 = \sum_{k=1}^l | \omega_l(s) |$. The block bootstraps differ only in their choice of the function $w$ used to construct the weights.  The extended tapered block bootstrap (ETBB) estimator for panel data used in simulations is obtained by using the triangular taper function $w(t) = 2t I(t \in [0, 1/2]) + 2(1 - t) I(t \in [1/2, 1])$. The moving block bootstrap (MBB) estimator uses the rectangular function $w(t) = I(t \in [0,1])$.

\vspace{2mm}

\noindent \textbf{WEB}: The bootstrap estimator proposed in \cite{GalvaoParkerXiao24} can be obtained by using weights $\pi_i^\ast$ instead of $\pi_t^\ast$ in~\eqref{eq:etbb}, where $\pi_i^\ast$ are independent and identically distributed (i.i.d.) non-negative random weights with mean and variance both equal to one. In the simulations, the weights are drawn independently from a standard exponential distribution. 

\newpage

\bibliographystyle{econometrica}
\bibliography{panel}

\end{document}